\date{}
\newcommand\coolover[2]{\mathrlap{\smash{\overbrace{\phantom{%
    \begin{matrix} #2 \end{matrix}}}^{\mbox{$#1$}}}}#2} 
\newcommand\coolunder[2]{\mathrlap{\smash{\underbrace{\phantom{%
    \begin{matrix} #2 \end{matrix}}}_{\mbox{$#1$}}}}#2}
\newtheorem{theorem}{Theorem}
\newtheorem{definition}{Definition}
\newtheorem{proposition}{Proposition}
\newtheorem{lemma}{Lemma}
\newtheorem{remark}{Remark}
\newtheorem{example}{Example}
\title{Deterministic and Probabilistic Conditions for Finite Completability of Low-rank Multi-View Data}
\author{Morteza Ashraphijuo, Xiaodong Wang and Vaneet Aggarwal\thanks{\noindent 
Morteza Ashraphijuo and Xiaodong Wang are with the Department of Electrical Engineering, Columbia University, NY, email: \{ashraphijuo,wangx\}@ee.columbia.edu. Vaneet Aggarwal is with the School of Industrial Engineering, Purdue University, West Lafayette, IN, email: vaneet@purdue.edu. }}
\begin{document}
\maketitle

\begin{abstract}

We consider the multi-view data completion problem, i.e., to complete a matrix $\mathbf{U}=[\mathbf{U}_1|\mathbf{U}_2]$ where the ranks of $\mathbf{U},\mathbf{U}_1$, and $\mathbf{U}_2$ are given. In particular, we investigate the fundamental conditions on the sampling pattern, i.e., locations of the sampled entries for finite completability of such a multi-view data given the corresponding rank constraints. In contrast with the existing analysis on Grassmannian manifold for a single-view matrix, i.e., conventional matrix completion, we propose a geometric analysis on the manifold structure for multi-view data to incorporate more than one rank constraint. We provide a deterministic necessary and sufficient condition on the sampling pattern for finite completability. We also give a probabilistic condition in terms of the number of samples per column that guarantees finite completability with high probability. Finally, using the developed tools, we derive the deterministic and probabilistic guarantees for unique completability.
\begin{IEEEkeywords}
Multi-view learning, low-rank matrix completion, finite completion, unique completion, algebraic geometry.
\end{IEEEkeywords}
\end{abstract}


\newpage
\section{Introduction}

High-dimensional data analysis has received significant recent attention due to the ubiquitous big data, including images and videos, product ranking datasets, gene expression database, etc. Many real-world high-dimensional datasets exhibit low-rank structures, i.e., the data can be represented in a much lower dimensional form \cite{donoho-high,donoho-high2}. Efficiently exploiting such low-rank structure for analyzing large high-dimensional datasets is one of the most active research area in machine learning and data mining. In this paper, we consider the multi-view low-rank data completion problem, where the ranks of the first and second views, as well as the rank of whole data consisting of both views together, are given.

The single-view learning problem has plenty of applications in various areas including signal processing \cite{Image}, network coding \cite{network}, etc.  The multi-view learning problem also finds applications in signal processing \cite{subrahmanya2010sparse}, multi-label image classification \cite{liu2015low,di2012view,zhang20n}, image retrieval \cite{tao2006asymmetric}, image synthesis \cite{shon2005learning,yu201}, data classification \cite{emailk}, multi-lingual text categorization \cite{wq}, etc.

Given a sampled matrix and a rank constraint, any matrix that agrees with the sampled entries and rank constraint is called a completion. A sampled matrix with a rank constraint is finitely completable if and only if there exist only finitely many completions of it. Most literature on matrix completion focus on developing optimization methods to obtain a completion. For single-view learning, methods including alternating minimization \cite{jain2013low,hardt2014understanding}, convex relaxation of rank \cite{candes,candes2,ashraphijuoc,cai}, etc., have been proposed. One generalization of the matrix completion problem is tensor completion where the number of orders can be more than two and alternating minimization methods \cite{liulow2,wang2016tensor} and other optimization-based methods \cite{liu2016tensor,7347424}, etc., have been proposed. Moreover, for multi-view learning, many optimization-based algorithms  have been proposed recently \cite{xu7,qxw,badch,bickeq,chaud1,oqt,qhk,dhillo,farq,ku0,mgx}.

The optimization-based matrix completion algorithms typically require incoherent conditions, which constrains the values of the entries (sampled and non-sampled) to obtain a completion with high probability. Moreover, the fundamental completability conditions that are independent of the specific completion algorithms have also been investigated. Specifically, deterministic conditions on the locations of the sampled entries (sampling pattern) are obtained through algebraic geometry analyses on Grassmannian manifold that lead to finite/unique solutions to the matrix completion problem \cite{charact,kiraly2}. In particular, in \cite{charact} a deterministic sampling pattern is proposed that is necessary and sufficient for finite completability of the sampled matrix of the given rank. Such an algorithm-independent condition can lead to a much lower sampling rate than the one that is required by the optimization-based completion algorithms. In \cite{ashraphijuo4}, \cite{ashraphijuo} and
\cite{ashraphijuo3}, we proposed a geometric analysis on canonical polyadic (CP), Tucker and tensor-train (TT) manifolds for low-rank tensor completion problem and provided the necessary and sufficient conditions on sampling pattern for finite completability of tensor given its CP, Tucker and TT ranks, respectively. However, the analysis on Grassmannian manifold in \cite{charact} is not capable of incorporating more than one rank constraint, and the analysis on Tucker manifold in \cite{ashraphijuo} is not capable of incorporating rank constraints for different views. In this paper, we investigate the finite completability problem for multi-view data by proposing an analysis on the manifold structure for such data.

Consider a sampled data matrix $\mathbf{U}$ that is partitioned as $\mathbf{U}=[\mathbf{U}_1|\mathbf{U}_2]$, where $\mathbf{U}_1$ and $\mathbf{U}_2$ are the first and second views of $\mathbf{U}$. The multi-view matrix completion problem is to complete $\mathbf{U}$ given the ranks of $\mathbf{U},\mathbf{U}_1$, and $\mathbf{U}_2$. Let $\mathbf{\Omega}$ be the sampling pattern matrix of $\mathbf{U}$, where $\mathbf{\Omega}(x,y)=1$ if $\mathbf{U}(x,y)$ is sampled and  $\mathbf{\Omega}(x,y)=0$ otherwise. This paper is mainly concerned with the following  three problems.


\begin{itemize}
	\item {\bf Problem (i)}: Characterize the necessary and sufficient conditions on $\mathbf{\Omega}$, under which there exist only finite completions of $\mathbf{U}$ that satisfy all three rank constraints. 

	\item {\bf Problem (ii)}: Characterize sufficient conditions on $\mathbf{\Omega}$, under which there exists only one completion of $\mathbf{U}$ that satisfy all three rank constraints.

	
	\item {\bf Problem (iii)}: Give lower bounds on the number of sampled entries per column such that the proposed conditions on $\mathbf{\Omega}$ for finite/unique completability are satisfied with high probability.
	
\end{itemize}	
	
 This work is inspired by \cite{charact}, where the analysis on Grassmannian manifold is proposed to solve similar problems for a single-view matrix. Specifically, in \cite{charact}  a novel approach is proposed to consider the rank factorization of a matrix and to treat each observed entry as a polynomial in terms of the entries of the components of the rank factorization. Then, under the genericity assumption, the algebraic independence among the mentioned polynomials is studied. In this paper, we consider the multi-view matrix and follow the general approach that is similar to that in \cite{charact} to treat the above problems. However, since the manifold structure for the multi-view data is fundamentally different from the Grassmannian manifold, we need to develop almost every step anew.	
	
	
	
The remainder of this paper is organized as follows. In Section \ref{backg}, the notations and problem statement are provided. In Section \ref{fincomdetsec}, we characterize the deterministic necessary and sufficient condition on the sampling pattern for finite completability. In Section \ref{probfinsec}, a probabilistic guarantee for finite completability is proposed where the condition is in terms of the number of samples per column -- in contrast with the geometric structure given in Section \ref{fincomdetsec}. In Section \ref{uniqdetprosec}, deterministic and probabilistic guarantees for unique completability are provided. Numerical results are provided in Section \ref{simusecn} to compare the number of samples per column for finite and unique completions based on our proposed analysis versus the existing method.  Finally, Section \ref{concsec} concludes the paper. 

\section{Problem Statement and A Motivating Example}\label{backg}




Let $\mathbf{U}$ be the sampled matrix to be completed. Denote $\mathbf{\Omega}$ as the sampling pattern matrix that is of the same size as $\mathbf{U}$ and $\mathbf{\Omega}(x_1,x_2)=1$ if $\mathbf{U}(x_1,x_2)$ is observed and  $\mathbf{\Omega}(x_1,x_2)=0$ otherwise. For each subset of columns $\mathbf{U}^{\prime} $ of $\mathbf{U}$, define $N_{\mathbf{\Omega}}({\mathbf{U}^{\prime}})$ as the number of observed entries in $\mathbf{U}^{\prime}$ according to the sampling pattern $\mathbf{\Omega}$. For any real number $x$, define $x^+=\max\{0,x\}$. Also, $\mathbf{I}_n$ denotes an $n \times n$ identity matrix and $\mathbf{0}_{n \times m}$ denotes an $n \times m$ all-zero matrix.


The sampled matrix $\mathbf{U} \in \mathbb{R}^{n \times (m_1+m_2)}$ is sampled. Denote a partition of $\mathbf{U}$ as $\mathbf{U}= [\mathbf{U}_1|\mathbf{U}_2]$ where $\mathbf{U}_1 \in \mathbb{R}^{n \times m_1}$ and $\mathbf{U}_2 \in \mathbb{R}^{n \times m_2}$ represent the first and second views of data, respectively. Given the rank constraints $\text{rank}(\mathbf{U}_1)=r_1$, $\text{rank}(\mathbf{U}_2)=r_2$ and $\text{rank}(\mathbf{U})=r$, we are interested in characterizing the conditions on the sampling pattern matrix $\mathbf{\Omega}$ under which there are infinite, finite, or unique completions for the sampled matrix $\mathbf{U}$. 

In \cite{charact} a necessary and sufficient condition on the sampling pattern is given for the finite completability of a matrix $\mathbf{U}$ given $\text{rank}(\mathbf{U})=r$, based on an algebraic geometry analysis on the Grassmannian manifold. However, such analysis cannot be used to treat the above multi-view problem since it is not capable of incorporating the three rank constraints simultaneously. In particular, if we obtain the conditions in \cite{charact} corresponding to  $\mathbf{U}$, $\mathbf{U}_1$ and $\mathbf{U}_2$ respectively and then take the intersections of them, it will result in a {\em sufficient} condition (not necessary) on the sampling pattern matrix $\mathbf{\Omega}$ under which there are finite number of completions of $\mathbf{U}$.

Next, we provide an example such that: (i) given $r_1$, $\mathbf{U}_1$ is infinitely completable; (ii) given $r_2$, $\mathbf{U}_2$ is infinitely completable;{\footnote{(i) and (ii) together result that given $r_1$ and $r_2$, $\mathbf{U}_1$, $\mathbf{U}_2$ and $\mathbf{U}$ are infinitely completable.}} (iii) given $r$, $\mathbf{U}$ is infinitely completable; and (iv) given $r_1$, $r_2$ and $r$, $\mathbf{U}$ is finitely completable. In other words, if $\mathcal{S}_1$ denotes the set of completions of $\mathbf{U}$ given $\text{rank}(\mathbf{U}_1)=r_1$, $\mathcal{S}_2$ denotes the set of completions of $\mathbf{U}$ given $\text{rank}(\mathbf{U}_2)=r_2$ and $\mathcal{S}$ denotes the set of completions of $\mathbf{U}$ given $\text{rank}(\mathbf{U})=r$, then in the following example $|\mathcal{S}_1|=\infty$, $|\mathcal{S}_2|=\infty$, $|\mathcal{S}|=\infty$ and $|\mathcal{S}_1 \cap \mathcal{S}_2 \cap \mathcal{S}| <\infty$.


Consider a matrix $\mathbf{U} \in \mathbb{R}^{4 \times 4}$, where $\mathbf{U}=[\mathbf{U}_1|\mathbf{U}_2]$, $\mathbf{U}_1 \in \mathbb{R}^{4 \times 2}$ (the first two columns) and $\mathbf{U}_2 \in \mathbb{R}^{4 \times 2}$ (the last two columns). Assume that $r_1 = 1$, $r_2=2$ and $r=2$. Moreover, suppose that the sampled entries of $\mathbf{U}$ are shown below. 
\[ 
\vphantom{
    \begin{matrix}{|c|c|c|c|}
    \overbrace{XYZ}^{\mbox{$R$}}\\ \\ \\ \\ \\ \\ 
    \underbrace{pqr}_{\mbox{$S$}}
    \end{matrix}}%
\begin{bmatrix}
\coolover{\mathbf{U}_1}{\times & \times} & \coolover{\mathbf{U}_2}{\times & \times}\\ 
\times &  {\color{Goldenrod}-} & \times & \times \\ 
\times & {\color{Goldenrod}-} & \times &  {\color{Goldenrod}-}  \\ 
\coolunder{\mathbf{U}}{{\color{Goldenrod}-} & {\color{Goldenrod}-} & \times & \times}  
\end{bmatrix}
\]
\vspace{-15mm}


We have the following observations about the number of completions of each matrix.

\begin{itemize}
\item {\it Given $r_1=1$, $\mathbf{U}_1$ is infinitely completable}: For any value of the $(4,1)$-th entry of $\mathbf{U}_1$, there exists exactly one completion of $\mathbf{U}_1$. Hence, there exist infinitely completions of $\mathbf{U}_1$.
\item {\it Given $r_2=2$, $\mathbf{U}_2$ is infinitely completable}: Observe that each value of the $(3,2)$-th entry of $\mathbf{U}_2$, corresponds to one completion of $\mathbf{U}_2$. As a result, there are infinitely many completions of $\mathbf{U}_2$.
\item {\it Given $r=2$, $\mathbf{U}$ is infinitely completable}: Note that for any value of the $(2,2)$-th entry of $\mathbf{U}$, there exists at least one completion of $\mathbf{U}$ (as the second column of $\mathbf{U}$ is a linear combination of two vectors and only one entry of this column is known), and therefore $\mathbf{U}$ is infinitely completable.
\item {\it For almost every matrix $\mathbf{U}$, given $r_1=1$, $r_2=2$ and $r=2$,  $\mathbf{U}$ is finitely completable}: We prove this statement in Appendix \ref{app1} by applying Theorem \ref{finitecomplthm} which takes advantage of an geometric analysis on the manifold structure for multi-view data (which is not Grassmannian manifold) to incorporate all three rank constraints simultaneously.
\end{itemize} 

\section{Finite Completability}\label{fincomdetsec}

In Section \ref{geosec}, we define an equivalence relation among all bases of the sampled matrix $\mathbf{U}$, where a basis is a set of $r$ vectors ($r=\text{rank}(\mathbf{U})$) that spans the column space of  $\mathbf{U}$. This equivalence relation leads to the manifold structure for multi-view data to incorporate all three rank constraints. We introduce a set of polynomials according to the sampled entries to analyze finite completability through analyzing algebraic independence of the defined polynomials.

In Section \ref{algebsec}, we analyze the required maximum number of algebraically independent polynomials that is necessary and sufficient for finite completability. Then, a relationship between the maximum number of algebraically independent polynomials (among the defined polynomials) and the sampling pattern (locations of the sampled entries) is characterized. This general idea is similar to the one in \cite{charact}. However, as the structure of the polynomials and the corresponding manifold are different from those of single-view matrix, we develop all original results for the multi-view framework. For example, the equivalence class and canonical structure that are developed in this paper and the structure of the polynomials require us to develop lemmas and theorems that cannot be simply obtained by generalizing the single-view results. Consequently, we obtain the necessary and sufficient condition on sampling pattern for finite completability. 

\subsection{Geometry of the Basis}\label{geosec}

Let define $r_1^{\prime} = r - r_2$, $r_2^{\prime} = r - r_1$ and $r^{\prime} = r - r_1^{\prime} - r_2^{\prime} = r_1 + r_2 -r $. Observe that $r_1 \leq r$, $r_2 \leq r$ and $r \leq r_1 + r_2$. Suppose that the basis $\mathbf{V} \in \mathbb{R}^{n \times r}$ is such that its first $r_1$ columns constitute a basis for the first view $\mathbf{U}_1$, its last $r_2$ columns constitute a basis for the second view $\mathbf{U}_2$, and all $r$ columns of $\mathbf{V}$ constitute a basis for $\mathbf{U}=[\mathbf{U}_1|\mathbf{U}_2]$, as shown in Figure \ref{fig0}. Assume that $\mathbf{V}= [\mathbf{V}_1|\mathbf{V}_2|\mathbf{V}_3]$, where $\mathbf{V}_1 \in \mathbb{R}^{n \times r_1^{\prime}}$, $\mathbf{V}_2 \in \mathbb{R}^{n \times r^{\prime}}$ and $\mathbf{V}_3 \in \mathbb{R}^{n \times r_2^{\prime}}$. Then, $[\mathbf{V}_1|\mathbf{V}_2]$ is a basis for $\mathbf{U}_1$ and $[\mathbf{V}_2|\mathbf{V}_3]$ is a basis for $\mathbf{U}_2$. As a result, there exist matrices $\mathbf{T}_1 \in \mathbb{R}^{r_1 \times m_1}$ and $\mathbf{T}_2 \in \mathbb{R}^{r_2 \times m_2}$ such that
\begin{subequations} \label{decom}
	\begin{eqnarray}
	\mathbf{U}_1 = [\mathbf{V}_1|\mathbf{V}_2] \cdot \mathbf{T}_1 , \label{decom1}\\
	\mathbf{U}_2 = [\mathbf{V}_2|\mathbf{V}_3] \cdot \mathbf{T}_2 . \label{decom2}
	\end{eqnarray}
\end{subequations}

\begin{figure}[h]
	\centering
		{\includegraphics[width=4cm]{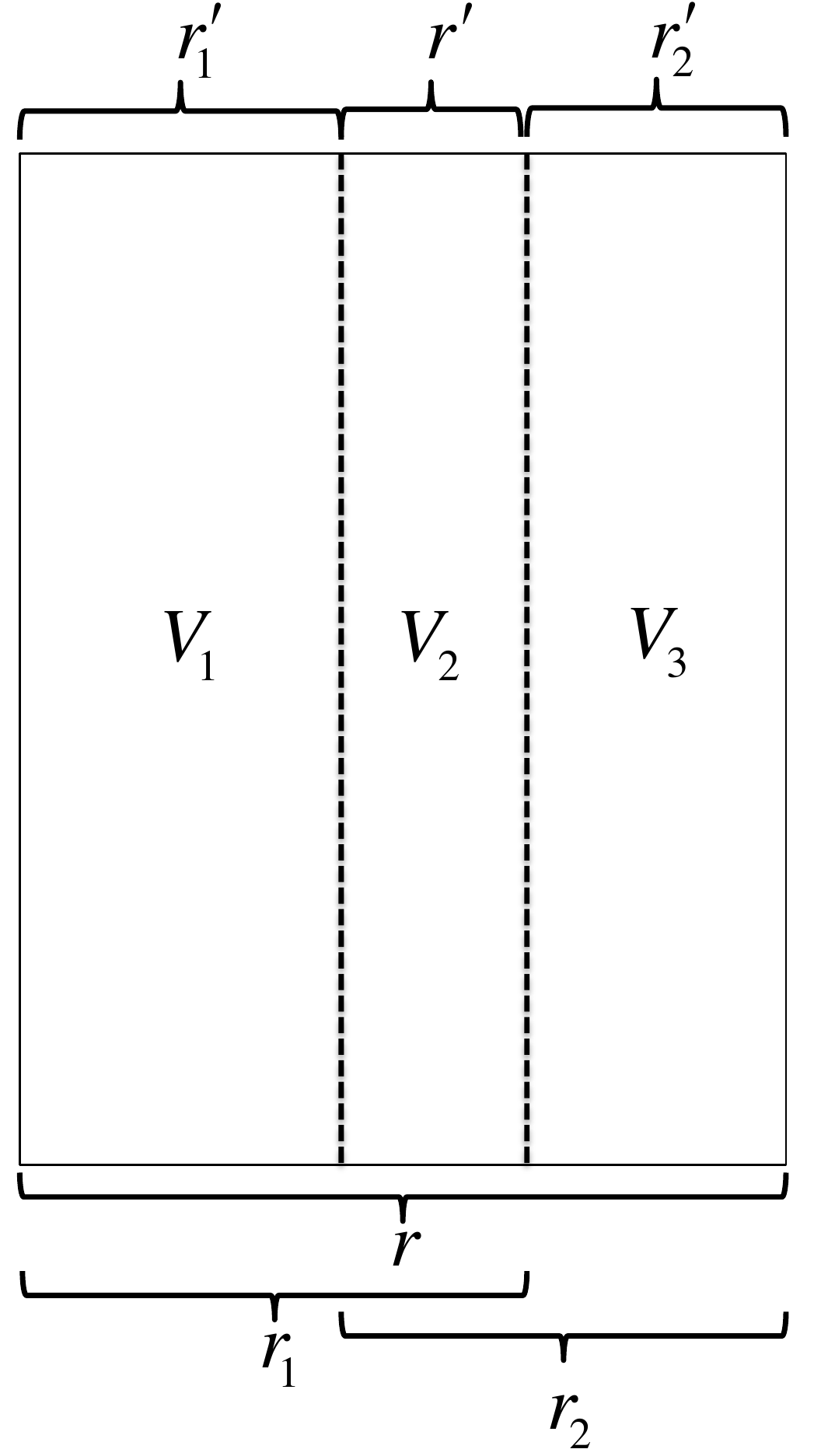}}
	\caption{ \footnotesize A basis $\mathbf{V}$ for the sampled matrix $\mathbf{U}$.}
	\label{fig0}\vspace{-4mm}
\end{figure}

For any $i_1 \in  \{1,\dots,m_1\}$ and $i_2 \in  \{1,\dots,m_2\}$, \eqref{decom} can be written as 
\begin{subequations} \label{decomcol}
	\begin{eqnarray}
	& \mathbf{U}_1\left(:,i_1\right) = [\mathbf{V}_1|\mathbf{V}_2] \cdot \mathbf{T}_1\left(:,i_1\right) , \label{decomcol1}\\
	& \mathbf{U}_2\left(:,i_2\right) = [\mathbf{V}_2|\mathbf{V}_3] \cdot \mathbf{T}_2\left(:,i_2\right) . \label{decomcol2}
	\end{eqnarray}
\end{subequations}



Let $\mathcal{M}(r,r_1,r_2,\mathbb{R}^n)$ denote the manifold structure of subspaces $\mathbf{V}$ described above for the multi-view matrix and define $\mathbb{P}_{\mathcal{M}}$ as the uniform measure on this manifold.  Moreover, define $\mathbb{P}_{{L}}$ as the Lebesgue measure on $\mathbb{R}^{r \times (m_1+m_2)}$. We assume that $\mathbf{U}$ is chosen generically from $\mathcal{M}(r,r_1,r_2,\mathbb{R}^n)$, or in other words, the entries of $\mathbf{U}$ are drawn independently with respect to Lebesgue measure on $\mathcal{M}(r,r_1,r_2,\mathbb{R}^n)$. Hence, any statement that holds for $\mathbf{U}$, it also holds for almost every (with probability one) data of the same size and rank with respect to the product measure $\mathbb{P}_{\mathcal{M}} \times \mathbb{P}_L$. Note that according to Proposition \ref{propotwo}, each multi-view data $\mathbf{U}$ can be uniquely represented in terms of a subspace $\mathbf{V} \in \mathcal{M}(r,r_1,r_2,\mathbb{R}^n)$.

In the following, we list some useful facts that are instrumental to the subsequent analysis.
 
\begin{itemize}
\item {\bf Fact $1$}: Observe that each observed entry in $\mathbf{U}_1$ results in a scalar equation from \eqref{decom1} or \eqref{decomcol1} that involves only all $r_1$ entries of the corresponding row of $[\mathbf{V}_1|\mathbf{V}_2]$ and all $r_1$ entries of the corresponding column of $\mathbf{T}_1$ in \eqref{decom1}. Similarly, each observed entry in $\mathbf{U}_2$ results in a scalar equation from \eqref{decom2} or \eqref{decomcol2} that involves only all $r_2$ entries of the corresponding row of $[\mathbf{V}_2|\mathbf{V}_3]$ and all $r_2$ entries of the corresponding column of $\mathbf{T}_2$ in \eqref{decom2}. Treating the entries of $\mathbf{V}$, $\mathbf{T}_1$ and $\mathbf{T}_2$ as variables (right-hand sides of \eqref{decom1} and \eqref{decom2}), each observed entry results in a polynomial in terms of these variables.

\item {\bf Fact $2$}: For any observed entry $\mathbf{U}_i(x_1,x_2)$, $x_1$ and $x_2$ specify the row index of $\mathbf{V}$ and the column index of $\mathbf{T}_i$, respectively, that are involved in the corresponding polynomial, $i=1,2$.



\item {\bf Fact $3$}:  It can be concluded from Bernstein's theorem \cite{Bernstein} that in a system of $n$ polynomials in $n$ variables with each consisting of a given set of monomials such that the coefficients are chosen with respect to the Lebesgue measure on $\mathcal{M}(r,r_1,r_2,\mathbb{R}^n)$, the $n$ polynomials are algebraically independent with probability one with respect to the product measure $\mathbb{P}_{\mathcal{M}} \times \mathbb{P}_L$, and therefore there exist only finitely many solutions (all given probabilities in this paper are with respect to this product measure). However, in the structure of the polynomials in our model, the set of involved monomials  are different for different set of polynomials, and therefore to ensure algebraically independency we need to have for any selected subset of the original $n$ polynomials, the number of involved variables should be more than the number of selected polynomials. Moreover, in a system of $n$ polynomials in $n-1$ variables (or less), the $n$ polynomials are algebraically dependent with probability one. Also, given that a system of $n$ polynomials in $n-1$ variables (or less) has one solution, it can be concluded that it has a unique solution with probability one. Similarly, in our model, this property should hold for any subset of the polynomials.
\end{itemize}

Given all observed entries $\{\mathbf{U}(x_1,x_2): \mathbf{\Omega}(x_1,x_2) = 1 \}$, we are interested in finding the number of possible solutions in terms of entries of $(\mathbf{V},\mathbf{T}_1,\mathbf{T}_2)$ (infinite, finite or unique) via investigating the algebraic independence among the polynomials. Throughout this paper, we make the following assumption.

{\bf Assumption $1$}: {\it Any column of $\mathbf{U}_1$ includes at least $r_1$ observed entries and any column of $\mathbf{U}_2$ includes at least $r_2$ observed entries.}

Observe that Assumption $1$ leads to a total of at least $m_1r_1+m_2r_2$ sampled entries of $\mathbf{U}$.

\begin{lemma}\label{coefmat}
Given a basis $\mathbf{V}=[\mathbf{V}_1|\mathbf{V}_2|\mathbf{V}_3]$ in \eqref{decom}, if Assumption $1$ holds, then there exists a unique solution $(\mathbf{T}_1,\mathbf{T}_2)$, with probability one. Moreover, if Assumption $1$ does not hold, then there are infinite number of solutions $(\mathbf{T}_1,\mathbf{T}_2)$, with probability one.
\end{lemma}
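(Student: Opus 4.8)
The plan is to use the fact that, with the basis $\mathbf{V}$ held fixed, \eqref{decomcol1} and \eqref{decomcol2} are \emph{linear} in the entries of $\mathbf{T}_1$ and $\mathbf{T}_2$ and, moreover, decouple across the columns of each view. By Facts $1$ and $2$, every observed entry $\mathbf{U}_1(x_1,i_1)$ contributes exactly the scalar equation
\[
[\mathbf{V}_1|\mathbf{V}_2](x_1,:)\cdot\mathbf{T}_1(:,i_1)=\mathbf{U}_1(x_1,i_1),
\]
which involves only the $i_1$-th column of $\mathbf{T}_1$, and likewise each observed $\mathbf{U}_2(x_2,i_2)$ constrains only the $i_2$-th column of $\mathbf{T}_2$. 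Hence, for each $i_1\in\{1,\dots,m_1\}$, the observed entries in column $i_1$ of $\mathbf{U}_1$ form a linear system $\mathbf{A}_{i_1}\mathbf{T}_1(:,i_1)=\mathbf{b}_{i_1}$, where $\mathbf{A}_{i_1}$ is the submatrix of $[\mathbf{V}_1|\mathbf{V}_2]$ consisting of the rows indexed by the observed entries (so it has $N_{i_1}$ rows and $r_1$ columns) and $\mathbf{b}_{i_1}$ gathers the corresponding observed values; an analogous system with matrix $\mathbf{A}_{i_2}$ (a submatrix of $[\mathbf{V}_2|\mathbf{V}_3]$ with $r_2$ columns) governs each column of $\mathbf{T}_2$. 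Since $\mathbf{V}$ satisfies \eqref{decom} for the true $\mathbf{U}$, every one of these systems is consistent, so the lemma reduces to counting solutions of consistent linear systems.

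For the first statement, assume Assumption $1$ holds. Then $N_{i_1}\ge r_1$ and $N_{i_2}\ge r_2$ for all columns, so $\mathbf{A}_{i_1}$ contains an $r_1\times r_1$ submatrix of $[\mathbf{V}_1|\mathbf{V}_2]$ and $\mathbf{A}_{i_2}$ contains an $r_2\times r_2$ submatrix of $[\mathbf{V}_2|\mathbf{V}_3]$. I would next show that, with probability one, every $r_1\times r_1$ submatrix of $[\mathbf{V}_1|\mathbf{V}_2]$ and every $r_2\times r_2$ submatrix of $[\mathbf{V}_2|\mathbf{V}_3]$ is nonsingular: $[\mathbf{V}_1|\mathbf{V}_2]$ is a basis for the column space of $\mathbf{U}_1$, which (by the genericity of $\mathbf{U}$, equivalently of $\mathbf{V}$ via Proposition \ref{propotwo}) is a generic $r_1$-dimensional subspace of $\mathbb{R}^n$, so the determinant of any fixed choice of $r_1$ rows is a polynomial in the free parameters of the manifold that is not identically zero and therefore vanishes only on a set of measure zero; a finite union over the possible row choices, together with the same reasoning for $[\mathbf{V}_2|\mathbf{V}_3]$, gives the claim. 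Consequently $\mathbf{A}_{i_1}$ has full column rank $r_1$ and $\mathbf{A}_{i_2}$ has full column rank $r_2$, so each of the consistent systems has a unique solution and $(\mathbf{T}_1,\mathbf{T}_2)$ is unique with probability one.

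For the second statement, assume Assumption $1$ fails, say column $i_1$ of $\mathbf{U}_1$ has $N_{i_1}<r_1$ observed entries (a deficient column of $\mathbf{U}_2$ is handled identically). Then $\mathbf{A}_{i_1}$ has at most $N_{i_1}<r_1$ rows, hence $\text{rank}(\mathbf{A}_{i_1})\le N_{i_1}<r_1$ and its null space has dimension at least $r_1-N_{i_1}\ge 1$; since the system is consistent, its solution set is an affine subspace of positive dimension, so there are infinitely many admissible $\mathbf{T}_1(:,i_1)$ and hence infinitely many $(\mathbf{T}_1,\mathbf{T}_2)$, with probability one.

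The one step that needs care is the genericity claim in the second paragraph: one must verify that the canonical parametrization of $\mathcal{M}(r,r_1,r_2,\mathbb{R}^n)$ from Proposition \ref{propotwo} — including the $r^{\prime}$-dimensional subspace shared by the two views and any normalization built into $\mathbf{V}$ — does not structurally force some $r_1\times r_1$ (or $r_2\times r_2$) submatrix of $[\mathbf{V}_1|\mathbf{V}_2]$ (or $[\mathbf{V}_2|\mathbf{V}_3]$) to be singular, i.e., that the corresponding determinants are genuinely nonzero functions on the manifold. I expect this to be the main obstacle; everything else is elementary linear algebra.
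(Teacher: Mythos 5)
Your proposal is correct and follows essentially the same route as the paper, which argues tersely that each column of $\mathbf{T}_1$ (resp.\ $\mathbf{T}_2$) is determined by generically chosen degree-$1$ polynomials in $r_1$ (resp.\ $r_2$) variables, so that at least $r_1$ (resp.\ $r_2$) of them yield a unique solution and fewer yield infinitely many; you have simply made the underlying linear algebra explicit. The genericity concern you flag at the end is resolved exactly as in the paper's proof of Proposition \ref{propoone}: since $\mathbf{U}_1=[\mathbf{V}_1|\mathbf{V}_2]\mathbf{T}_1$ and every choice of $r_1$ rows of the generic $\mathbf{U}_1$ is full rank with probability one, every $r_1\times r_1$ row-submatrix of $[\mathbf{V}_1|\mathbf{V}_2]$ must be nonsingular regardless of the canonical normalization (and likewise for $[\mathbf{V}_2|\mathbf{V}_3]$).
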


\begin{proof}
Note that only observed entries in the $i$-th column of $\mathbf{U}_1$ result in degree-$1$ polynomials in terms of the entries of $\mathbf{T}_1(:, i)  \in \mathbb{R}^{r_1 \times 1}$. As a result, exactly $r_1$ generically chosen degree-$1$ polynomials in terms of $r_1$ variables are needed to ensure there is a unique solution $\mathbf{T}_1$ in \eqref{decom}, with probability one. Moreover, having less than $r_1$ polynomials in terms of $r_1$ variables results in infinite solutions of $\mathbf{T}_1$ in \eqref{decom}, with probability one. The same arguments apply to $\mathbf{T}_2$.
\end{proof}

\begin{definition}
 Observe that given $\mathbf{V}$, each observed entry of $\mathbf{U}_1$ and $\mathbf{U}_2$ results in a degree-$1$ polynomial whose involved variables are the entries of the corresponding column of $\mathbf{T}_1$ and $\mathbf{T}_2$, respectively. We choose $r_1$ and $r_2$ observed entries of each column of $\mathbf{U}_1$ and $\mathbf{U}_2$, respectively, to obtain $(\mathbf{T}_1,\mathbf{T}_2)$. Let $\mathcal{P}(\mathbf{\Omega})$ denote all polynomials in terms of the entries of $\mathbf{V}$ obtained through the observed entries excluding the $m_1r_1+m_2r_2$ polynomials that were used to obtain $(\mathbf{T}_1,\mathbf{T}_2)$. Note that since $(\mathbf{T}_1,\mathbf{T}_2)$ is already solved in terms of $\mathbf{V}$, each polynomial in $\mathcal{P}(\mathbf{\Omega})$ is in terms of elements of $\mathbf{V}$. 
\end{definition}

Consider two bases $\mathbf{V}$ and $\mathbf{V}^{\prime}$ for the matrix $\mathbf{U}$ with the structure in \eqref{decom}. We say that $\mathbf{V}$ and $\mathbf{V}^{\prime}$ span the same space if and only if: (i) the spans of the first $r_1$ columns of $\mathbf{V}$ and $\mathbf{V}^{\prime}$ are the same, (ii) the spans of the last $r_2$ columns of $\mathbf{V}$ and $\mathbf{V}^{\prime}$ are the same, (iii) the spans of all columns of $\mathbf{V}$ and $\mathbf{V}^{\prime}$ are the same.

Therefore, $\mathbf{V}$ and $\mathbf{V}^{\prime}$ span the same space if and only if: (i) each column of $\mathbf{V}_1$ is a linear combination of the columns of $[\mathbf{V}_1^{\prime}|\mathbf{V}_2^{\prime}]$, (ii) each column of $\mathbf{V}_2$ is a linear combination of the columns of $\mathbf{V}_2^{\prime}$, and (iii) each column of $\mathbf{V}_3$ is a linear combination of the columns of $[\mathbf{V}_2^{\prime}|\mathbf{V}_3^{\prime}]$. The following equivalence class partitions all  possible bases such that any two bases in a class span the same space, i.e., the above-mentioned properties (i), (ii) and (iii) hold.

\begin{definition}\label{jmfnk}
Define an equivalence class for all bases $\mathbf{V} \in \mathbb{R}^{n \times r }$ of the sampled matrix $\mathbf{U}$ such that two bases $\mathbf{V}$ and $\mathbf{V}^{\prime}$ belong to the same class if there exist full rank matrices $\mathbf{A}_1 \in \mathbb{R}^{r_1 \times r_1^{\prime}}$, $\mathbf{A}_2 \in \mathbb{R}^{r^{\prime} \times r^{\prime}}$ and $\mathbf{A}_3 \in \mathbb{R}^{r_2 \times r_2^{\prime}}$ such that
\begin{subequations} \label{bas}
	\begin{eqnarray}
	\mathbf{V}_1 &=& [\mathbf{V}_1^{\prime}|\mathbf{V}_2^{\prime}]  \cdot \mathbf{A}_1, \label{bas1}\\
	\mathbf{V}_2 &=& \mathbf{V}_2^{\prime} \cdot \mathbf{A}_2, \label{bas2}\\
	\mathbf{V}_3 &=& [\mathbf{V}_2^{\prime}|\mathbf{V}_3^{\prime}] \cdot \mathbf{A}_3, \label{bas3}
	\end{eqnarray}
\end{subequations}
where $\mathbf{V}=[\mathbf{V}_1|\mathbf{V}_2|\mathbf{V}_3]$, $\mathbf{V}^{\prime}=[\mathbf{V}_1^{\prime}|\mathbf{V}_2^{\prime}|\mathbf{V}_3^{\prime}]$, $\mathbf{V}_1,\mathbf{V}_1^{\prime}  \in \mathbb{R}^{n \times r_1^{\prime}}$, $\mathbf{V}_2,\mathbf{V}_2^{\prime}  \in \mathbb{R}^{n \times r^{\prime}}$ and $\mathbf{V}_3,\mathbf{V}_3^{\prime}  \in \mathbb{R}^{n \times r_2^{\prime}}$.
\end{definition}


 Note that \eqref{bas} leads to the fact that the dimension of the space of all bases $\mathbf{V}$ in one particular class, i.e., the degree of freedom for the bases in one particular class, is equal to $nr - r_1  r_1^{\prime} -  r^{\prime}  r^{\prime} - r_2  r_2^{\prime} = nr -r^2 -r_1^2 -r_2^2 + r(r_1 +r_2) $. 

\begin{definition}\label{classp1}
{\bf (Canonical basis)} As shown in Figure \ref{fign}, denote 
\begin{subequations}\label{canonical}
\begin{eqnarray}
\mathbf{B}_1 &=& \mathbf{V}\left(1:r_1^{\prime},1:r_1^{\prime} \right) \in \mathbb{R}^{r_1^{\prime} \times r_1^{\prime}}, \label{canonical1} \\
\mathbf{B}_2 &=& \mathbf{V}\left(1:r_2^{\prime},1+r_1:r_2^{\prime}+r_1 \right) \in \mathbb{R}^{r_2^{\prime} \times r_2^{\prime}}, \label{canonical2} \\
\mathbf{B}_3 &=& \mathbf{V}\left(1+\max(r_1^{\prime},r_2^{\prime}):r^{\prime}+\max(r_1^{\prime},r_2^{\prime}),1+r_1^{\prime}:r^{\prime}+r_1^{\prime} \right) \in \mathbb{R}^{r^{\prime} \times r^{\prime}}, \label{canonical3} \\
\mathbf{B}_4 &=& \mathbf{V}\left(1+\max(r_1^{\prime},r_2^{\prime}):r^{\prime}+\max(r_1^{\prime},r_2^{\prime}),1:r_1^{\prime} \right) \in \mathbb{R}^{r^{\prime} \times r_1^{\prime}}, \label{canonical4} \\
\mathbf{B}_5 &=& \mathbf{V}\left(1+\max(r_1^{\prime},r_2^{\prime}):r^{\prime}+\max(r_1^{\prime},r_2^{\prime}), 1+r_1:r_2^{\prime}+r_1 \right) \in \mathbb{R}^{r^{\prime} \times r_2^{\prime}}. \label{canonical5}
\end{eqnarray}
\end{subequations}
Then, we call $\mathbf{V}$ a canonical basis if $\mathbf{B}_1 = \mathbf{I}_{r_1^{\prime}}$, $\mathbf{B}_2 = \mathbf{I}_{r_2^{\prime}}$, $\mathbf{B}_3 = \mathbf{I}_{r^{\prime}}$, $\mathbf{B}_4 = \mathbf{0}_{r^{\prime} \times r_1^{\prime}}$ and $\mathbf{B}_5 = \mathbf{0}_{r^{\prime} \times r_2^{\prime}}$.
\end{definition}

\begin{figure}[h]
	\centering
		{\includegraphics[width=6.5cm]{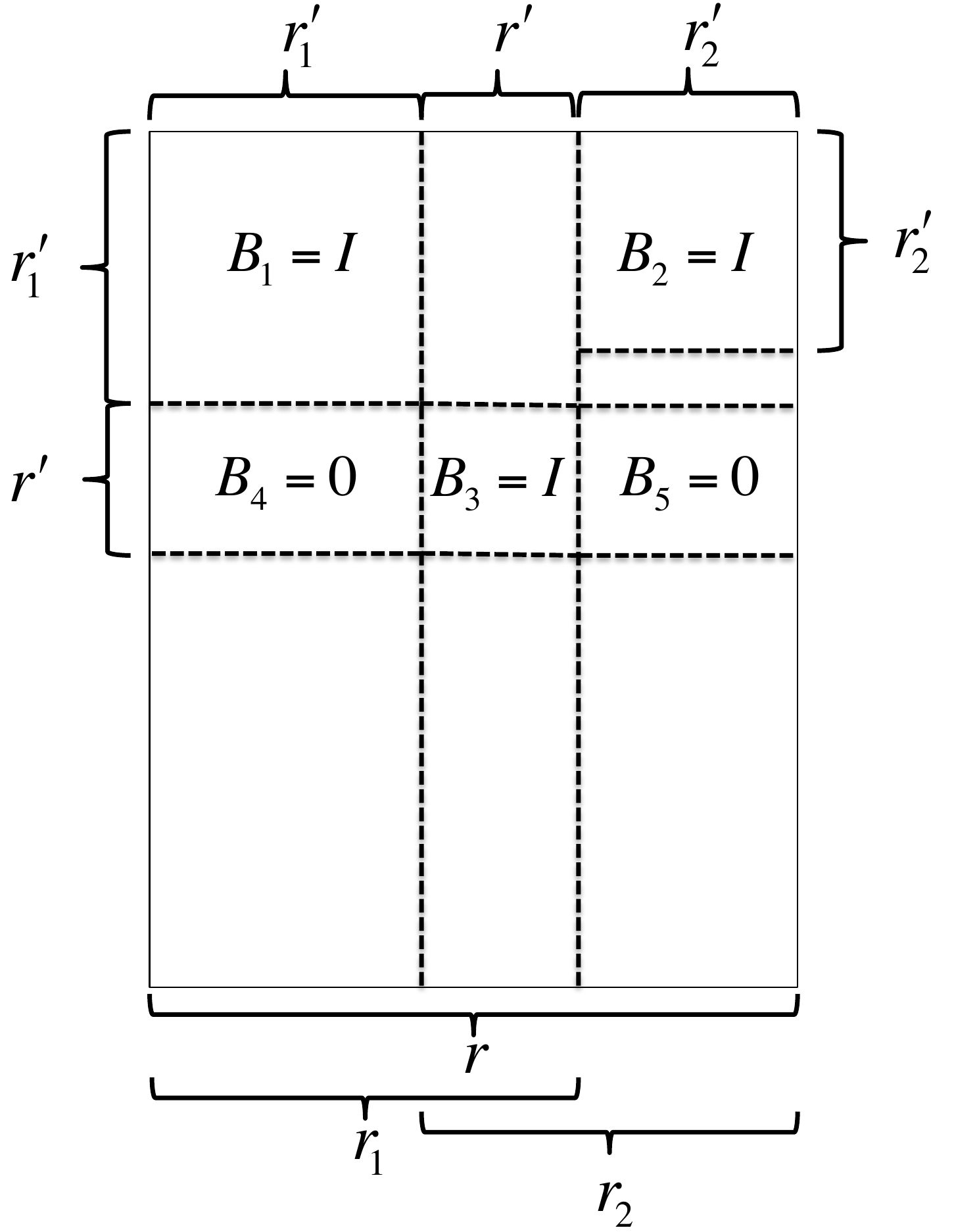}}
	\caption{ \footnotesize A canonical basis.}
	\label{fign}\vspace{-4mm}
\end{figure}

\begin{example}
Consider an example in which $\mathbf{U}=[\mathbf{U}_1|\mathbf{U}_2] \in \mathbb{R}^{4 \times 7}$ and $\mathbf{U}_1 \in \mathbb{R}^{4 \times 3}$ is the first view and $\mathbf{U}_2 \in \mathbb{R}^{4 \times 4}$ is the second view. Assume that $r_1 = 2$, $r_2=3$ and $r=4$. Then, the corresponding canonical basis is as follows.

\newcommand*{\temp}{\multicolumn{1}{c|}{$0$}}
\newcommand*{\tempp}{\multicolumn{1}{c|}{$1$}}
\newcommand*{\tempq}{\multicolumn{1}{|c}{$1$}}
\newcommand*{\temps}{\multicolumn{1}{|c}{$0$}}
\newcommand*{\tempr}{\multicolumn{1}{|c|}{$1$}}
\newcommand*{\tempt}{\multicolumn{1}{|c|}{$0$}}
$$
\mathbf{V} =\left[
\begin{array}{ cccc } \cline{1-1} \cline{3-4}
\tempr & {\color{Goldenrod}-} & \tempq & \temp   \\ \cline{1-1}
{\color{Goldenrod}-} & {\color{Goldenrod}-} & \temps & \tempp  \\ \cline{1-4}
\tempt & $1$ & \temps & \temp  \\ \cline{1-4}
{\color{Goldenrod}-} & {\color{Goldenrod}-} & {\color{Goldenrod}-} &  {\color{Goldenrod}-}
\end{array}\right]
$$

Observe that $r^2+r_1^2+r_2^2-r(r_1+r_2)=9$ of the entries are known.
\end{example}

 The following proposition shows the uniqueness of the canonical basis for the single-view matrix.

\begin{proposition}\label{propoone}
Assume that $\mathbf{X} \in \mathbb{R}^{n_1 \times n_2}$ is generically chosen from the manifold of $n_1 \times n_2$ matrices of rank $r$. For almost every $\mathbf{X}$, there exists a unique basis $\mathbf{Y} \in \mathbb{R}^{n_1 \times r}$ for $\mathbf{X}$ such that $\mathbf{Y}(1:r,1:r) = \mathbf{I}_r$, where $\mathbf{Y}$ is a basis for $\mathbf{X}$ if each column of $\mathbf{X}$ can be written as a linear combination of the columns of $\mathbf{Y}$ and $\mathbf{Y}(1:r,1:r)$ represents the submatrix of $\mathbf{Y}$ that consists of the first $r$ columns and the first $r$ rows and $\mathbf{I}_r$ denotes the $r \times r$ identity matrix.
\end{proposition}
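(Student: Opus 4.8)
The plan is to separate the claim into a uniqueness part, which holds for \emph{every} rank-$r$ matrix, and an existence part, which is the only place genericity is used. For the setup, note first that since $\mathbf{Y}\in\mathbb{R}^{n_1\times r}$ and $\text{rank}(\mathbf{X})=r$, the requirement that every column of $\mathbf{X}$ be a linear combination of the columns of $\mathbf{Y}$ forces $\mathrm{col}(\mathbf{X})\subseteq\mathrm{col}(\mathbf{Y})$, and comparing dimensions ($r=\dim\mathrm{col}(\mathbf{X})\le\dim\mathrm{col}(\mathbf{Y})\le r$) gives $\mathrm{col}(\mathbf{Y})=\mathrm{col}(\mathbf{X})$ and $\text{rank}(\mathbf{Y})=r$. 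Consequently any two bases $\mathbf{Y},\mathbf{Y}^{\prime}$ of $\mathbf{X}$ are related by $\mathbf{Y}=\mathbf{Y}^{\prime}\mathbf{A}$ for a unique invertible $\mathbf{A}\in\mathbb{R}^{r\times r}$. If moreover $\mathbf{Y}(1:r,1:r)=\mathbf{Y}^{\prime}(1:r,1:r)=\mathbf{I}_r$, then reading off the first $r$ rows of $\mathbf{Y}=\mathbf{Y}^{\prime}\mathbf{A}$ yields $\mathbf{I}_r=\mathbf{I}_r\mathbf{A}$, i.e.\ $\mathbf{A}=\mathbf{I}_r$ and $\mathbf{Y}=\mathbf{Y}^{\prime}$; this settles uniqueness and uses no genericity.

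For existence I would start from a rank factorization $\mathbf{X}=\mathbf{C}\mathbf{R}$ with $\mathbf{C}\in\mathbb{R}^{n_1\times r}$, $\mathbf{R}\in\mathbb{R}^{r\times n_2}$, which exists because $\text{rank}(\mathbf{X})=r$; by construction $\mathbf{C}$ is a basis for $\mathbf{X}$. The crux is that for almost every $\mathbf{X}$ the leading block $\mathbf{C}(1:r,1:r)$ is invertible; granting this, $\mathbf{Y}:=\mathbf{C}\big(\mathbf{C}(1:r,1:r)\big)^{-1}$ has the same column space as $\mathbf{C}$ (hence is a basis for $\mathbf{X}$) and satisfies $\mathbf{Y}(1:r,1:r)=\mathbf{I}_r$, which together with the first paragraph proves the proposition. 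Geometrically this says that the generic $r$-dimensional subspace $\mathrm{col}(\mathbf{X})$ projects isomorphically onto the first $r$ coordinates, so $\mathbf{Y}$ can be taken to consist of the preimages in $\mathrm{col}(\mathbf{X})$ of the standard basis vectors $\mathbf{e}_1,\dots,\mathbf{e}_r$.

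The only genuinely delicate point — and the step I expect to need the most care — is checking that the exceptional set is negligible. Since invertibility of $\mathbf{C}(1:r,1:r)$ is equivalent to $\mathbf{X}(1:r,:)$ having rank $r$ (because $\mathbf{X}(1:r,:)=\mathbf{C}(1:r,1:r)\mathbf{R}$ with $\mathbf{R}$ of full row rank), the bad set equals $\{\mathbf{X}:\text{rank}(\mathbf{X})=r,\ \text{rank}(\mathbf{X}(1:r,:))<r\}$, which is the intersection of the rank-$r$ manifold with the algebraic set cut out by the vanishing of all $r\times r$ minors of the first $r$ rows. This is a \emph{proper} algebraic subset of the rank-$r$ manifold — it is not everything, since e.g.\ a matrix with leading $r\times r$ block $\mathbf{I}_r$ and zeros elsewhere has rank $r$ and full-rank first $r$ rows — so by the standard genericity conventions used throughout the paper (a proper subvariety of the manifold has measure zero) it is a null set, and the required basis exists with probability one. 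Everything else in the argument is elementary linear algebra.
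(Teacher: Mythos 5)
Your proposal is correct and follows essentially the same route as the paper: normalize an arbitrary basis (the paper's $\mathbf{Y}^{\prime}$, your $\mathbf{C}$ from a rank factorization) by right-multiplying with the inverse of its leading $r\times r$ block, with genericity invoked only to guarantee that block is invertible, and uniqueness following because the transition matrix between two bases with identity leading block must itself be the identity. Your treatment of the exceptional set as a proper subvariety is a slightly more explicit justification of what the paper asserts directly ("genericity of $\mathbf{X}$ results that each $r\times r$ submatrix of $\mathbf{Y}^{\prime}$ is full rank, with probability one"), but the argument is the same.
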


\begin{proof}
Consider an arbitrary basis $\mathbf{Y}^{\prime} \in \mathbb{R}^{n_1 \times r}$ for $\mathbf{X}$. Let $\mathcal{V}$ denote the set of all full rank $n_1 \times r$ matrices whose column span is equal to the column span of $\mathbf{Y}^{\prime}$ and note that $\mathcal{V}$ is the set of all bases with $r$ columns for $\mathbf{X}$. Consider an arbitrary member of set $\mathcal{V}$ and denote it by $\mathbf{Y}^{\prime \prime}$. Since, the column span of $\mathbf{Y}^{\prime}$ and the column span of $\mathbf{Y}^{\prime \prime}$ are the same, each column of $\mathbf{Y}^{\prime \prime}$ can be written as a linear combination of columns of $\mathbf{Y}^{\prime}$, and therefore there exists a unique full rank $\mathbf{Z} \in \mathbb{R}^{r \times r}$ such that $\mathbf{Y}^{\prime \prime} = \mathbf{Y}^{\prime} \mathbf{Z}$. Note that if $\mathbf{Z}$ is not full rank, we conclude $\mathbf{Y}^{\prime \prime}$ is not full rank as well, which contradicts the assumption.

Moreover, genericity of $\mathbf{X}$ results that each $r \times r$ submatrix of $\mathbf{Y}^{\prime}$ is full rank, with probability one. This is because we have $\mathbf{X} = \mathbf{Y}^{\prime} \mathbf{T}$ for some $\mathbf{T} \in \mathbb{R}^{r \times n_2}$, and therefore the fact that the submatrix consisting of any $r$ rows of $\mathbf{X}$ is full rank results that the submatrix consisting of any $r$ rows of $\mathbf{Y}^{\prime}$ is full rank as well. Let $\mathbf{Z}_0$ denote the inverse of $\mathbf{Y}^{\prime}(1:r,1:r)$, i.e., $\mathbf{Z}_0 = \left( \mathbf{Y}^{\prime}(1:r,1:r) \right)^{-1}$. Therefore, $\mathbf{Y}= \mathbf{Y}^{\prime} \mathbf{Z}_0$ is the unique basis for $\mathbf{X}$ that $\mathbf{Y}(1:r,1:r) = \mathbf{I}_r$.
\end{proof}

The following proposition considers the multi-view data described in this paper and shows the uniqueness of the canonical basis.

\begin{proposition}\label{propotwo}
For almost every $\mathbf{U}$, there exists a unique basis $\mathbf{V} \in \mathbb{R}^{n \times r}$ for $\mathbf{U}$ such that $\mathbf{V}$ satisfies the canonical pattern in Definition \ref{classp1} and also its first $r_1$ columns constitute a basis for the first view $\mathbf{U}_1$, its last $r_2$ columns constitute a basis for the second view $\mathbf{U}_2$, and all $r$ columns of $\mathbf{V}$ constitute a basis for $\mathbf{U}=[\mathbf{U}_1|\mathbf{U}_2]$.
\end{proposition}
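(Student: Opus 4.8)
The plan is to determine, from the three rank constraints alone, the single equivalence class (Definition~\ref{jmfnk}) that must contain every admissible basis, and then to show that the canonical pattern selects exactly one representative of that class for generic $\mathbf{U}$.

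\emph{Step 1 (reduction to one equivalence class).} Since $\text{rank}(\mathbf{U})=r$ forces $\text{col}(\mathbf{U}_1)+\text{col}(\mathbf{U}_2)=\text{col}(\mathbf{U})$, one gets $\dim W=r_1+r_2-r=r^{\prime}$, where $W:=\text{col}(\mathbf{U}_1)\cap\text{col}(\mathbf{U}_2)$. Any $\mathbf{V}=[\mathbf{V}_1|\mathbf{V}_2|\mathbf{V}_3]$ with the prescribed view structure then has $\text{span}(\mathbf{V}_2)\subseteq W$, with equality by the dimension count, $\text{span}([\mathbf{V}_1|\mathbf{V}_2])=\text{col}(\mathbf{U}_1)$, and $\text{span}([\mathbf{V}_2|\mathbf{V}_3])=\text{col}(\mathbf{U}_2)$. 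By the span conditions stated just before Definition~\ref{jmfnk}, any two such bases are equivalent, so all admissible bases lie in a single class $\mathcal{C}$. This class is nonempty: take a basis of $W$, extend it to a basis of $\text{col}(\mathbf{U}_1)$ (the added columns form $\mathbf{V}_1$) and to a basis of $\text{col}(\mathbf{U}_2)$ (the added columns form $\mathbf{V}_3$); the resulting $r$ columns are independent because any vanishing combination lies in $\text{col}(\mathbf{U}_1)\cap\text{col}(\mathbf{U}_2)=W=\text{span}(\mathbf{V}_2)$. It thus remains to show that, for generic $\mathbf{U}$, $\mathcal{C}$ contains exactly one canonical basis.

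\emph{Step 2 (existence).} Write $R$ for the row block $\{1+\max(r_1^{\prime},r_2^{\prime}),\dots,r^{\prime}+\max(r_1^{\prime},r_2^{\prime})\}$. Inspecting Definition~\ref{classp1}, $\mathbf{B}_1=\mathbf{I}_{r_1^{\prime}}$ and $\mathbf{B}_4=\mathbf{0}$ prescribe rows $\{1,\dots,r_1^{\prime}\}$ and $R$ of $\mathbf{V}_1$ (these are $r_1^{\prime}+r^{\prime}=r_1$ distinct rows, since $\max(r_1^{\prime},r_2^{\prime})\ge r_1^{\prime}$); $\mathbf{B}_3=\mathbf{I}_{r^{\prime}}$ prescribes rows $R$ of $\mathbf{V}_2$; and $\mathbf{B}_2=\mathbf{I}_{r_2^{\prime}}$, $\mathbf{B}_5=\mathbf{0}$ prescribe rows $\{1,\dots,r_2^{\prime}\}$ and $R$ of $\mathbf{V}_3$. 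Starting from any $\mathbf{V}^{\prime}\in\mathcal{C}$ and using the freedom $\mathbf{V}_1=[\mathbf{V}_1^{\prime}|\mathbf{V}_2^{\prime}]\mathbf{A}_1$, $\mathbf{V}_2=\mathbf{V}_2^{\prime}\mathbf{A}_2$, $\mathbf{V}_3=[\mathbf{V}_2^{\prime}|\mathbf{V}_3^{\prime}]\mathbf{A}_3$ of \eqref{bas}, each group of conditions decouples into a square linear system for $\mathbf{A}_1$, $\mathbf{A}_2$, $\mathbf{A}_3$ whose coefficient matrix is a row-submatrix of a basis of $\text{col}(\mathbf{U}_1)$, of $W$, or of $\text{col}(\mathbf{U}_2)$, respectively. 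By Proposition~\ref{propoone} applied to $\mathbf{U}_1$ and $\mathbf{U}_2$, and the analogous fact for $W$, these submatrices are invertible for generic $\mathbf{U}$, and the resulting $\mathbf{A}_i$ have full rank. Hence $\mathbf{V}=[\mathbf{V}_1|\mathbf{V}_2|\mathbf{V}_3]$ is canonical by construction; it lies in $\mathcal{C}$ because the three view spans are contained in those of $\mathbf{V}^{\prime}$ and agree with them by dimension once one checks the $r$ columns are independent — a vanishing combination restricted to rows $R$ kills the $\mathbf{V}_2$-coefficients (as $\mathbf{B}_3=\mathbf{I}$, $\mathbf{B}_4=\mathbf{B}_5=\mathbf{0}$), and then $\text{span}(\mathbf{V}_1)\cap W=\text{span}(\mathbf{V}_3)\cap W=\{0\}$ (again using that $\mathbf{V}$ is canonical) kills the rest.

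\emph{Step 3 (uniqueness) and the main obstacle.} If $\mathbf{V},\widetilde{\mathbf{V}}\in\mathcal{C}$ are both canonical, Definition~\ref{jmfnk} gives full-rank $\mathbf{A}_1,\mathbf{A}_2,\mathbf{A}_3$ with $\widetilde{\mathbf{V}}_2=\mathbf{V}_2\mathbf{A}_2$, $\widetilde{\mathbf{V}}_1=[\mathbf{V}_1|\mathbf{V}_2]\mathbf{A}_1$, $\widetilde{\mathbf{V}}_3=[\mathbf{V}_2|\mathbf{V}_3]\mathbf{A}_3$. Restricting the first relation to rows $R$ and using $\mathbf{B}_3=\mathbf{I}$ for both bases forces $\mathbf{A}_2=\mathbf{I}$, so $\widetilde{\mathbf{V}}_2=\mathbf{V}_2$. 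Splitting $\mathbf{A}_1$ into its top $r_1^{\prime}$ rows $\mathbf{A}_1^{(1)}$ and bottom $r^{\prime}$ rows $\mathbf{A}_1^{(2)}$, so $\widetilde{\mathbf{V}}_1=\mathbf{V}_1\mathbf{A}_1^{(1)}+\mathbf{V}_2\mathbf{A}_1^{(2)}$: restricting to rows $R$ gives $\mathbf{A}_1^{(2)}=\mathbf{0}$ (since $\mathbf{B}_4=\mathbf{0}$), and then restricting to rows $\{1,\dots,r_1^{\prime}\}$ gives $\mathbf{A}_1^{(1)}=\mathbf{I}$ (since $\mathbf{B}_1=\mathbf{I}$), so $\widetilde{\mathbf{V}}_1=\mathbf{V}_1$; symmetrically $\widetilde{\mathbf{V}}_3=\mathbf{V}_3$, hence $\widetilde{\mathbf{V}}=\mathbf{V}$. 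This part is purely algebraic and uses no genericity. I expect the main obstacle to lie in Step 2: justifying that for $\mathbf{U}$ generic on $\mathcal{M}(r,r_1,r_2,\mathbb{R}^n)$ the \emph{intersection} subspace $W=\text{col}(\mathbf{U}_1)\cap\text{col}(\mathbf{U}_2)$ has all its $r^{\prime}\times r^{\prime}$ row-submatrices nonsingular — this does not follow from Proposition~\ref{propoone} directly and must be extracted from the parametrization in which $W=\text{span}(\mathbf{V}_2)$ with $\mathbf{V}_2$ generic — together with the index bookkeeping in Definition~\ref{classp1} needed to confirm that the systems for $\mathbf{A}_1,\mathbf{A}_2,\mathbf{A}_3$ are genuinely square and decoupled.
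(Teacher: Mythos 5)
Your proposal is correct and follows essentially the same route as the paper's proof: reduce all admissible bases to a single equivalence class, then obtain the canonical representative by inverting the row-submatrices singled out by the patterns $\mathbf{B}_1,\dots,\mathbf{B}_5$, with genericity supplying the invertibility. Your write-up is in fact more explicit than the paper's on the points you flag — the dimension count $\dim W = r'$ justifying membership in one class, the check that the constructed columns are independent, and the genericity of the $r'\times r'$ row-submatrices of a basis of $W$ (which the paper dispatches with ``similar to Proposition~\ref{propoone}'' and which indeed follows from the parametrization of $\mathcal{M}(r,r_1,r_2,\mathbb{R}^n)$ by a generic $\mathbf{V}_2$, as you anticipate).
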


\begin{proof}
Consider an arbitrary basis $\mathbf{V}^{\prime}=[\mathbf{V}_1^{\prime}|\mathbf{V}_2^{\prime}|\mathbf{V}_3^{\prime}]$ for $\mathbf{U}$ such that its first $r_1$ columns constitute a basis for the first view $\mathbf{U}_1$, its last $r_2$ columns constitute a basis for the second view $\mathbf{U}_2$, and all $r$ columns of $\mathbf{V}$ constitute a basis for $\mathbf{U}=[\mathbf{U}_1|\mathbf{U}_2]$. Let $\mathcal{V}$ denote the set of all such bases for $\mathbf{U}$ and consider an arbitrary member of this set and denote it by $\mathbf{V}=[\mathbf{V}_1|\mathbf{V}_2|\mathbf{V}_3]$. Hence, according to the earlier discussion before Definition \ref{jmfnk}, equations \eqref{bas1}-\eqref{bas3} hold. This is because the column spans of the first $r_1$ columns of $\mathbf{V}$ and $\mathbf{V}^{\prime}$, or the column spans of the last $r_2$ columns of $\mathbf{V}$ and $\mathbf{V}^{\prime}$ and also the column spans of the all $r$ columns of $\mathbf{V}$ and $\mathbf{V}^{\prime}$ are the same.

Similar to the proof of  Proposition \ref{propoone},  \eqref{bas2} results that the unique $\mathbf{V}_2$ that can satisfy the pattern $\mathbf{B}_3 = \mathbf{I}_{r^{\prime}}$ in Definition \ref{classp1}, can be obtained by $\mathbf{A}_2$ equal to the inverse of matrix \\ $\mathbf{V}^{\prime} \left(1+\max(r_1^{\prime},r_2^{\prime}):r^{\prime}+\max(r_1^{\prime},r_2^{\prime}),1+r_1^{\prime}:r^{\prime}+r_1^{\prime} \right)$. In order to complete the proof, it suffices to show that there exists a unique $\mathbf{A}_1$ that results in satisfying the patterns $\mathbf{B}_1 = \mathbf{I}_{r_1^{\prime}}$ and $\mathbf{B}_4 = \mathbf{0}_{r^{\prime} \times r_1^{\prime}}$ in Definition \ref{classp1} for $\mathbf{V}_1$ (the uniqueness of $\mathbf{A}_3$ is similar to that for $\mathbf{A}_1$).

Let $\mathbf{A}_1^{\prime} \in \mathbb{R}^{r_1 \times r_1}$ denote the inverse of $r_1 \times r_1$ matrix $[\mathbf{V}_1|\mathbf{V}_2](1:r_1,1:r_1)$. Note that existing of the inverse is a consequence of genericity assumption which results the submatrix consisting of any $r_1$ rows of $\mathbf{U}_1$ is full rank with probability one, and therefore the submatrix consisting of any $r_1$ rows of $[\mathbf{V}_1|\mathbf{V}_2]$ is full rank. Let $\mathbf{A}_1$ be the first $r_1^{\prime}$ columns of $\mathbf{A}_1^{\prime}$. Then, $\mathbf{A}_1$ ensures that the patterns $\mathbf{B}_1 = \mathbf{I}_{r_1^{\prime}}$ and $\mathbf{B}_4 = \mathbf{0}_{r^{\prime} \times r_1^{\prime}}$ in Definition \ref{classp1} hold for $\mathbf{V}_1$. Finally, note that $\mathbf{A}_1$ is unique. Otherwise, there exist two different inverse matrices for the full rank $r_1 \times r_1$ matrix $[\mathbf{V}_1|\mathbf{V}_2](1:r_1,1:r_1)$, which is contradiction.
\end{proof}

\begin{remark}\label{class}
 In order to prove there are finitely many completions for the matrix $\mathbf{U}$, it suffices to prove that given $\mathbf{T}_1$ and $\mathbf{T}_2$, there are finitely many canonical bases that fit in $\mathbf{U}$, where a basis fitting in $\mathbf{U}$ is equivalent to the existence of a completion of $\mathbf{U}$ such that each of its columns can be written as a linear combination of the corresponding basis.
\end{remark}

Note that patterns $\mathbf{B}_1$ and $\mathbf{B}_4$ are in $\mathbf{V}_1$, patterns $\mathbf{B}_2$ and $\mathbf{B}_5$ are in $\mathbf{V}_3$, and pattern $\mathbf{B}_3$ is in $\mathbf{V}_2$. It can be easily seen that according to the definition of the equivalence class in \eqref{bas}, any permutation of the rows of any of these patterns satisfies the property that in each class there exists exactly one basis with the permuted pattern.

\subsection{Algebraic Independence}\label{algebsec}

The following theorem provides a condition on the polynomials in $\mathcal{P}(\mathbf{\Omega})$ that is equivalent (necessary and sufficient) to finite completability of $\mathbf{U}$.

\begin{theorem}\label{thmdimbas}
Assume that Assumption $1$ holds. For almost every sampled matrix $\mathbf{U}$, there are at most finitely many bases that fit in $\mathbf{U}$ if and only if there exist $ nr -r^2 -r_1^2 -r_2^2 + r(r_1 +r_2) $ algebraically independent polynomials in $\mathcal{P}(\mathbf{\Omega})$.
\end{theorem}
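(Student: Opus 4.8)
The plan is to recast finite completability as a statement about the common zeros of the polynomial family $\mathcal{P}(\mathbf{\Omega})$ in the affine space whose coordinates are the free entries of a canonical basis, and then to invoke the algebraic--independence principle recorded in Fact~$3$. First I would reduce to counting canonical bases: by Proposition~\ref{propotwo} every equivalence class (Definition~\ref{jmfnk}) of bases with the structure~\eqref{decom} contains exactly one canonical basis (Definition~\ref{classp1}), and, given such a basis, Lemma~\ref{coefmat} makes $(\mathbf{T}_1,\mathbf{T}_2)$ and hence the induced completion unique, so the number of bases that fit in $\mathbf{U}$ equals the number of canonical bases that fit in $\mathbf{U}$, which is the reduction recorded in Remark~\ref{class}. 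A canonical basis $\mathbf{V}$ is determined by the entries of $\mathbf{V}$ not frozen by the patterns $\mathbf{B}_1=\mathbf{I}_{r_1^{\prime}}$, $\mathbf{B}_2=\mathbf{I}_{r_2^{\prime}}$, $\mathbf{B}_3=\mathbf{I}_{r^{\prime}}$, $\mathbf{B}_4=\mathbf{0}$, $\mathbf{B}_5=\mathbf{0}$; since these five blocks occupy pairwise disjoint positions (the blocks lying in $\mathbf{V}_1$ and in $\mathbf{V}_3$ occupy disjoint rows, and $\mathbf{V}_1,\mathbf{V}_2,\mathbf{V}_3$ occupy disjoint columns), the number of free entries is $nr-(r_1 r_1^{\prime}+r^{\prime}r^{\prime}+r_2 r_2^{\prime})=nr-r^2-r_1^2-r_2^2+r(r_1+r_2)=:d$, exactly the dimension of an equivalence class computed after Definition~\ref{jmfnk}. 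Gathering these free entries into $\mathbf{x}\in\mathbb{R}^{d}$ parameterizes the canonical bases by $\mathbb{R}^{d}$.

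Next I would identify ``$\mathbf{x}$ fits in $\mathbf{U}$'' with the vanishing of $\mathcal{P}(\mathbf{\Omega})$ at $\mathbf{x}$. Fixing $\mathbf{x}$, Assumption~$1$ and Lemma~\ref{coefmat} determine $(\mathbf{T}_1,\mathbf{T}_2)$ from the $m_1r_1+m_2r_2$ designated observed entries as rational functions of $\mathbf{x}$ (the relevant $r_i\times r_i$ submatrices of $[\mathbf{V}_1|\mathbf{V}_2]$ and $[\mathbf{V}_2|\mathbf{V}_3]$ are generically invertible). Substituting $(\mathbf{T}_1,\mathbf{T}_2)$ into~\eqref{decomcol1}--\eqref{decomcol2} for the remaining observed entries and clearing denominators produces, by definition, the members of $\mathcal{P}(\mathbf{\Omega})$, each a polynomial in $\mathbf{x}$ whose monomial support is fixed by $\mathbf{\Omega}$ and the canonical pattern and whose coefficients depend linearly on the observed entries of $\mathbf{U}$. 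Hence a canonical basis $\mathbf{x}$ fits in $\mathbf{U}$ if and only if $p(\mathbf{x})=0$ for every $p\in\mathcal{P}(\mathbf{\Omega})$, so the number of bases fitting in $\mathbf{U}$ equals the cardinality of $\mathcal{Z}=\{\mathbf{x}\in\mathbb{R}^{d}:p(\mathbf{x})=0\ \text{for all}\ p\in\mathcal{P}(\mathbf{\Omega})\}$.

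Finally I would apply Fact~$3$ to this system of polynomials in $d$ variables. The true canonical basis $\mathbf{x}^{\star}$ of $\mathbf{U}$ lies in $\mathcal{Z}$; since $\mathbf{U}$ is generic on $\mathcal{M}(r,r_1,r_2,\mathbb{R}^n)$ with generic $(\mathbf{T}_1,\mathbf{T}_2)$, the point $\mathbf{x}^{\star}$ is a generic point of $\mathbb{R}^{d}$ and $\mathbf{U}$ endows the polynomials of $\mathcal{P}(\mathbf{\Omega})$ with generic coefficients for their fixed monomial supports. By the Jacobian criterion in characteristic zero, the maximal number of algebraically independent polynomials in $\mathcal{P}(\mathbf{\Omega})$ equals the generic rank of the Jacobian $(\partial p/\partial\mathbf{x})_{p\in\mathcal{P}(\mathbf{\Omega})}$, and Fact~$3$ (via Bernstein's theorem) guarantees that the generic coefficients supplied by $\mathbf{U}$ attain this rank at $\mathbf{x}^{\star}$; hence $\mathcal{Z}$ has dimension $d$ minus that maximal number in a neighbourhood of $\mathbf{x}^{\star}$. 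Consequently, if $\mathcal{P}(\mathbf{\Omega})$ contains $d$ algebraically independent polynomials then $\mathcal{Z}$ is $0$-dimensional, hence finite, so only finitely many bases fit in $\mathbf{U}$; and if it contains at most $d-1$ then $\mathcal{Z}$ has positive dimension, hence is infinite, so infinitely many bases fit in $\mathbf{U}$. These two statements are the two implications of the theorem.

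I expect the genericity transfer inside this last step to be the main obstacle: one must check that $\mathbf{U}\mapsto\mathbf{x}^{\star}$ pushes $\mathbb{P}_{\mathcal{M}}\times\mathbb{P}_L$ forward to a measure that is absolutely continuous with respect to Lebesgue measure on $\mathbb{R}^{d}$ --- so that ``almost every $\mathbf{U}$'' means ``$\mathbf{x}^{\star}$ avoids a prescribed proper subvariety'' --- and that the locus where the Jacobian of $\mathcal{P}(\mathbf{\Omega})$ falls below its generic rank is such a proper subvariety, thereby excluding the exceptional zero sets that can occur for non-generic coefficients. The first point follows from Proposition~\ref{propotwo} and the fact that the canonicalization map is rational with generically invertible differential. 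The second point cannot simply be quoted from the single-view (Grassmannian) analysis of~\cite{charact}, because the monomial supports of $\mathcal{P}(\mathbf{\Omega})$ arise from the three-block canonical structure of Definition~\ref{classp1} together with the elimination of $(\mathbf{T}_1,\mathbf{T}_2)$; establishing the resulting combinatorial link between $\mathbf{\Omega}$ and algebraic independence is the business of Section~\ref{algebsec}.
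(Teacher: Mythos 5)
Your proposal is correct and follows essentially the same route as the paper: reduce to counting canonical bases with $nr-r^2-r_1^2-r_2^2+r(r_1+r_2)$ free entries (via Lemma \ref{coefmat}, the equivalence class, and Remark \ref{class}), view the fitting condition as the vanishing of $\mathcal{P}(\mathbf{\Omega})$ on that parameter space, and invoke Fact~$3$ to equate finiteness with the solution set having dimension zero, i.e.\ with the existence of that many algebraically independent polynomials. Your added remarks on the Jacobian criterion and the pushforward of $\mathbb{P}_{\mathcal{M}}\times\mathbb{P}_L$ are justifications of steps the paper's proof takes for granted rather than a different argument.
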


\begin{proof}
According to Lemma \ref{coefmat}, Assumption $1$ results that $(\mathbf{T}_1,\mathbf{T}_2)$ can be determined uniquely (finitely). Let $\mathcal{P}(\mathbf{\Omega}) = \{p_1,\dots,p_t\}$ and define $\mathcal{S}_i$ as the set of all basis $\mathbf{V}$ that satisfy polynomials $\{p_1,\dots,p_i\}$, $i=0,1,\dots,t$, where $\mathcal{S}_0$ is the set of all bases $\mathbf{V}$ without any polynomial restriction. Each polynomial in terms of the entries of $\mathbf{V}$ reduces the degree of freedom or the dimension of the set of solutions by one. Therefore, $\text{dim}(\mathcal{S}_i)=\text{dim}(\mathcal{S}_{i-1})$ if the maximum number of algebraically independent polynomials in sets $\{p_1,\dots,p_i\}$ and $\{p_1,\dots,p_{i-1}\}$ are the same and $\text{dim}(\mathcal{S}_i)=\text{dim}(\mathcal{S}_{i-1})-1$ otherwise.

Observe that the number of variables is $\text{dim}(\mathcal{S}_0)=nr -r^2 -r_1^2 -r_2^2 + r(r_1 +r_2)$ and the number of solutions of the system of polynomials $\mathcal{P}(\mathbf{\Omega})$ is $|\mathcal{S}_t|$. Therefore, using Fact $3$, with probability one $|\mathcal{S}_t|$ is a finite number if and only if $\text{dim}(\mathcal{S}_t)=0$. As mentioned earlier, the dimension of the set of all bases without any polynomial restriction, i.e., $\text{dim}(\mathcal{S}_0) =  nr -r^2 -r_1^2 -r_2^2 + r(r_1 +r_2) $. Hence, we conclude that the existence of exactly $ nr -r^2 -r_1^2 -r_2^2 + r(r_1 +r_2) $ algebraically independent polynomials in $\mathcal{P}(\mathbf{\Omega})$ is equivalent to having finitely many bases, i.e., finite completability of $\mathbf{U}$ with probability one.
\end{proof}

In this subsection, we are interested in characterizing a relationship between the sampling pattern $\mathbf{\Omega}$ and the maximum number of algebraically independent polynomials in $\mathcal{P}(\mathbf{\Omega})$. To this end, we construct a {\bf constraint matrix} $\mathbf{\breve{\Omega}}$ based on $\mathbf{\Omega}$ such that each column of $\mathbf{\breve{\Omega}}$ represents exactly one of the polynomials in $\mathcal{P}(\mathbf{\Omega})$.

Consider an arbitrary column of the first view $\mathbf{U}_1\left(:,i\right)$, where $i \in  \{1,\dots,m_1\}$. Let $l_i = N_{\mathbf{\Omega}} (\mathbf{U}_1\left(:,i\right))$ denote the number of observed entries in the $i$-th column of the first view. Assumption $1$ results that $l_i \geq r_1$. 

We construct $l_i - r_1$ columns with binary entries based on the locations of the observed entries in $\mathbf{U}_1\left(:,i\right)$ such that each column has exactly $r_1+1$ entries equal to one. Assume that $x_1, \dots, x_{l_i}$ be the row indices of all observed entries in this column. Let $\mathbf{\Omega}^{i}_1$ be the corresponding $n \times (l_i - r_1)$ matrix to this column which is defined as the following: for any $j \in \{1,\dots,l_i-r_1\}$, the $j$-th column has the value $1$ in rows $\{x_1,\dots , x_{r_1},x_{r_1+j}\}$ and zeros elsewhere. Define the binary constraint matrix of the first view as $\mathbf{\breve{\Omega}}_1 = \left[\mathbf{\Omega}^{1}_1|\mathbf{\Omega}^{2}_1\dots|\mathbf{\Omega}^{m_1}_{1} \right] \in \mathbb{R}^{n \times K_1}$ \cite{charact}, where $K_1 = N_{\mathbf{\Omega}} (\mathbf{U}_1)-m_1r_1$.

Similarly, we construct the binary constraint matrix $\mathbf{\breve{\Omega}}_2 \in \mathbb{R}^{n \times K_2}$ for the second view, where $K_2 = N_{\mathbf{\Omega}} (\mathbf{U}_2)-m_2r_2$. Define the constraint matrix of $\mathbf{U}$ as $\mathbf{\breve{\Omega}}=[\mathbf{\breve{\Omega}}_1 | \mathbf{\breve{\Omega}}_2] \in \mathbb{R}^{n \times (K_1+K_2)}$. For any subset of columns $\mathbf{\breve{\Omega}}^{\prime}$ of $\mathbf{\breve{\Omega}}$, $\mathcal{P}(\mathbf{\breve{\Omega}}^{\prime})$ denotes the subset of $\mathcal{P}(\mathbf{{\Omega}})$ that corrseponds to $\mathbf{\breve{\Omega}}^{\prime}$.{\footnote{Note that only $r_1$ and $r_2$ appear in the structure of constraint matrix and $r$ only appears in the structure of the basis as the basis has $r$ columns.}} In this paper, when we refer to a subset of columns of the constraint matrix, those columns are assumed to correspond to different columns of $\Omega$.


\begin{example}
Consider the same example as in Section \ref{backg}, where matrix $\mathbf{U}=[\mathbf{U}_1|\mathbf{U}_2] \in \mathbb{R}^{4 \times 4}$ and $\mathbf{U}_1 \in \mathbb{R}^{4 \times 2}$ is the first view and $\mathbf{U}_2 \in \mathbb{R}^{4 \times 2}$ is the second view. The samples that are used to obtain $(\mathbf{T}_1, \mathbf{T}_2)$ are colored as red in the following.  Assume that $r_1 = 1$, $r_2=2$ and $r=2$. Then, the constraint matrix is as the following.

$$
\mathbf{U} =\left[
\begin{array}{ cccc }
\coolover{\mathbf{U}_1}{{\color{red}\times} & {\color{red}\times}} & \coolover{\mathbf{U}_2}{{\color{red}\times} & {\color{red}\times}}\\ 
{\times} & {\color{Goldenrod}-} & {\color{red}\times} & {\color{red}\times}  \\
{\times} & {\color{Goldenrod}-} & {\times} &  {\color{Goldenrod}-} \\
{\color{Goldenrod}-} & {\color{Goldenrod}-} & {\times} &  {\times}
\end{array}\right]
\Longrightarrow
\mathbf{\breve{\Omega}} =
\left[
\begin{array}{ ccccc }
\coolover{\mathbf{\breve{\Omega}}_1}{1 & 1} & \coolover{\mathbf{\breve{\Omega}}_2}{1 & 1 & 1}\\ 
1 & 0 & 1 & 1 & 1 \\ 
0 & 1 & 1 & 0 & 0 \\ 
0 & 0 & 0 & 1 & 1 
\end{array}
\right].
$$

\end{example}


Assume that $\mathbf{\breve{\Omega}}^{\prime}$ is an arbitrary subset of columns of the constraint matrix $\mathbf{\breve{\Omega}}$. Then, $\mathbf{\breve{\Omega}}^{\prime}_1$ and $\mathbf{\breve{\Omega}}^{\prime}_2$ denote the columns that correspond to the first and second views, respectively. Similarly, assume that $\mathbf{\Omega}^{\prime}$ is an arbitrary subset of columns of $\mathbf{\Omega}$. Then, $\mathbf{\Omega}^{\prime}_1$ and $\mathbf{\Omega}^{\prime}_2$ denote the columns that correspond to the first view and second view, respectively. Moreover, for any matrix $\mathbf{X}$, $c(\mathbf{X})$ denotes the number of columns of $\mathbf{X}$ and $g(\mathbf{X})$ denotes the number of nonzero rows of $\mathbf{X}$.

The following lemma gives an upper bound on the maximum number of algebraically independent polynomials in any subset of columns of the constraint matrix $\mathbf{\breve{\Omega}}$. Simply put, for a set of polynomials with coefficients chosen generically, the total number of involved variables in the polynomials is an upper bound for the maximum number of algebraically independent polynomials.

\begin{lemma}\label{maxnumbpol}
Assume that Assumption $1$ holds. Let $\mathbf{\breve{\Omega}}^{\prime}$ be an arbitrary subset of columns of the constraint matrix $\mathbf{\breve{\Omega}}$. Then, the maximum number of algebraically independent polynomials in $\mathcal{P}(\mathbf{\breve{\Omega}}^{\prime})$ is upper bounded by 
\begin{eqnarray}\label{maxindepchar}
r_1^{\prime} (g(\mathbf{\breve{\Omega}}^{\prime}_1)-r_1)^+ + r_2^{\prime} (g(\mathbf{\breve{\Omega}}^{\prime}_2)-r_2)^+ + r^{\prime} (g(\mathbf{\breve{\Omega}}^{\prime})-r^{\prime})^+.
\end{eqnarray}
\end{lemma}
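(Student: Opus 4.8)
I would prove \eqref{maxindepchar} by bounding the number of algebraically independent polynomials in $\mathcal{P}(\mathbf{\breve{\Omega}}^{\prime})$ by the number of entries of the basis $\mathbf{V}$ that actually appear in them, but only after passing from the fixed canonical basis of Definition~\ref{classp1} to a representative of the basis that is adapted to $\mathbf{\breve{\Omega}}^{\prime}$; a naive count in the canonical coordinates turns out not to be tight enough. Write $g_1=g(\mathbf{\breve{\Omega}}^{\prime}_1)$, $g_2=g(\mathbf{\breve{\Omega}}^{\prime}_2)$, $g=g(\mathbf{\breve{\Omega}}^{\prime})$, and let $R_1,R_2$ be the sets of nonzero rows of $\mathbf{\breve{\Omega}}^{\prime}_1,\mathbf{\breve{\Omega}}^{\prime}_2$, with $R=R_1\cup R_2$, so $|R_1|=g_1$, $|R_2|=g_2$, $|R|=g$. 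By the construction of the constraint matrix and Fact~$1$, the polynomial attached to a column of $\mathbf{\breve{\Omega}}^{\prime}_1$ coming from the $i$-th column of $\mathbf{U}_1$ — obtained from $\mathbf{U}_1(x,i)=[\mathbf{V}_1|\mathbf{V}_2](x,:)\mathbf{T}_1(:,i)$ after substituting the value of $\mathbf{T}_1(:,i)$ that was solved from the $r_1$ pivot rows — involves only the entries of $[\mathbf{V}_1|\mathbf{V}_2]$ in the rows indexing that column's support, hence lying in $R_1$; likewise every polynomial in $\mathcal{P}(\mathbf{\breve{\Omega}}^{\prime}_2)$ involves only entries of $[\mathbf{V}_2|\mathbf{V}_3]$ in rows $R_2$. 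Consequently $\mathbf{V}_1$ can appear only through its rows in $R_1$, $\mathbf{V}_3$ only through its rows in $R_2$, and $\mathbf{V}_2$ only through its rows in $R=R_1\cup R_2$.

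Next I would use the gauge freedom of Definition~\ref{jmfnk}: replacing $\mathbf{V}$ by an equivalent basis in the sense of \eqref{bas} forces a compensating change of $(\mathbf{T}_1,\mathbf{T}_2)$ but leaves each equation ``$\mathbf{U}_i$ fits $\mathbf{V}$'' intact, so the common zero set of $\mathcal{P}(\mathbf{\breve{\Omega}}^{\prime})$ is invariant under the equivalence relation and descends to a subvariety of $\mathcal{M}(r,r_1,r_2,\mathbb{R}^n)$; by exactly the dimension-counting argument in the proof of Theorem~\ref{thmdimbas}, the maximum number of algebraically independent polynomials in $\mathcal{P}(\mathbf{\breve{\Omega}}^{\prime})$ equals the codimension of this subvariety in $\mathcal{M}(r,r_1,r_2,\mathbb{R}^n)$, which may be computed in any coordinate chart. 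I would therefore compute it in the chart given by the following $\mathbf{\breve{\Omega}}^{\prime}$-adapted normal form in place of the canonical one: using $\mathbf{A}_2$, make some $r^{\prime}\times r^{\prime}$ submatrix of $\mathbf{V}_2$ supported on $r^{\prime}$ rows of $R$ equal to $\mathbf{I}_{r^{\prime}}$; then, using the $r_1 r_1^{\prime}$ entries of $\mathbf{A}_1$, make some $r_1^{\prime}\times r_1^{\prime}$ submatrix of $\mathbf{V}_1$ supported on $r_1^{\prime}$ rows of $R_1$ equal to $\mathbf{I}_{r_1^{\prime}}$ and annihilate $\mathbf{V}_1$ on $r^{\prime}$ further rows of $R_1$ (which imposes $(r_1^{\prime})^2+r^{\prime}r_1^{\prime}=r_1 r_1^{\prime}$ conditions, matching the number of available parameters); and symmetrically with $\mathbf{A}_3$ for $\mathbf{V}_3$ over $R_2$. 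For a generic $\mathbf{U}$, the same genericity arguments used in Proposition~\ref{propotwo} make every submatrix that must be inverted nonsingular and every needed transversality hold with probability one, so the resulting $\mathbf{V}$ is still a legitimate multi-view basis; the degenerate cases $g_1=0$, $g_2=0$ or $g=0$, in which a view contributes no polynomial, are precisely those absorbed by the $(\cdot)^{+}$ truncations.

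In this adapted chart the free entries of $\mathbf{V}$ that can still occur in $\mathcal{P}(\mathbf{\breve{\Omega}}^{\prime})$ are at most: the $r_1^{\prime}$ entries in each of the $(g_1-r_1)^{+}$ rows of $R_1$ not used to normalize $\mathbf{V}_1$; the $r_2^{\prime}$ entries in each of the $(g_2-r_2)^{+}$ rows of $R_2$ not used to normalize $\mathbf{V}_3$; and the $r^{\prime}$ entries in each of the $(g-r^{\prime})^{+}$ rows of $R$ not used to normalize $\mathbf{V}_2$. Since a polynomial system involving at most $N$ variables cuts out a set of codimension at most $N$ (Fact~$3$), summing these three counts yields the bound \eqref{maxindepchar}.

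The step I expect to be the main obstacle is making the adapted normal form simultaneously consistent: the normalizing rows for the $\mathbf{V}_1$- and $\mathbf{V}_3$-blocks must be chosen \emph{inside} $R_1$ and $R_2$ respectively while the $\mathbf{V}_2$-normalization ranges over the larger set $R$, the three normalizations must be performed in an order ($\mathbf{V}_2$, then $\mathbf{V}_1$, then $\mathbf{V}_3$) that does not undo one another, and one must check via genericity that all the relevant submatrices are invertible and the resulting basis still satisfies the defining properties of Proposition~\ref{propotwo}. It is precisely the fact that $\mathbf{V}_2$ is shared by the two views — so its rows get normalized over $R$ rather than separately over $R_1$ and $R_2$ — that makes \eqref{maxindepchar} strictly tighter than the sum of two single-view bounds of the kind established in \cite{charact}.
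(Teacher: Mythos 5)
Your proposal is correct and follows essentially the same route as the paper's own proof: bound the number of algebraically independent polynomials by the number of involved unknown entries of $\mathbf{V}$ (Fact $3$), and then shrink that count by permuting the rows of the canonical patterns $\mathbf{B}_1,\dots,\mathbf{B}_5$ so that the $r_1$, $r_2$ and $r^{\prime}$ normalized rows of $\mathbf{V}_1$, $\mathbf{V}_3$ and $\mathbf{V}_2$ lie inside the nonzero rows of $\mathbf{\breve{\Omega}}^{\prime}_1$, $\mathbf{\breve{\Omega}}^{\prime}_2$ and $\mathbf{\breve{\Omega}}^{\prime}$ respectively, yielding exactly the three terms of \eqref{maxindepchar}. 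The consistency issue you flag is the point the paper disposes of in the remark following Proposition \ref{propotwo}, namely that any row permutation of these patterns still selects exactly one basis per equivalence class.
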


\begin{proof}
Using Fact $3$, the maximum number of algebraically independent polynomials in $\mathcal{P}(\mathbf{\breve{\Omega}}^{\prime})$ is at most equal to the number of involved variables in the polynomials. Note that each observed entry of $\mathbf{U}_1$ results in a polynomial that involves all $r_1$ entries of a row of $\mathbf{V}_1$. As a result, the number of entries of $\mathbf{V}_1$ that are involved in the polynomials is exactly $r_1 g(\mathbf{\breve{\Omega}}^{\prime}_1)$. As mentioned earlier, the rows of patterns $\mathbf{B}_1$ and $\mathbf{B}_4$ in Definition \ref{classp1} can be permuted such that exactly one basis in each class satisfies the new pattern. Hence, it can be permuted such that $r_1$ rows of $\mathbf{B}_1$ and $\mathbf{B}_4$ are a subset of the nonzero rows of $\mathbf{\breve{\Omega}}^{\prime}_1$ since there are at least $r_1 +1$ nonzero rows (any column of the constraint matrix of the first view includes exactly $r_1+1$ nonzero entries). Recall that the total number of known entries of $\mathbf{V}_1$ is the summation of the number of entries of $\mathbf{B}_1$ and $\mathbf{B}_4$, i.e., $r_1^{\prime} r_1$. Therefore, the number of variables (unknown entries) of $\mathbf{V}_1$ that are involved in $\mathcal{P}(\mathbf{\breve{\Omega}}^{\prime})$ is equal to $r_1^{\prime} (g(\mathbf{\breve{\Omega}}^{\prime}_1)-r_1)^+$. Note that $g(\mathbf{\breve{\Omega}}^{\prime}_1)-r_1$ is negative if and only if $g(\mathbf{\breve{\Omega}}^{\prime}_1)=0$.

Similarly, the number of unknown entries of $\mathbf{V}_2$ and $\mathbf{V}_3$ that are invloved in $\mathcal{P}(\mathbf{\breve{\Omega}}^{\prime})$ are $r^{\prime} (g(\mathbf{\breve{\Omega}}^{\prime})-r^{\prime})^+$ and $r_2^{\prime} (g(\mathbf{\breve{\Omega}}^{\prime}_2)-r_2)^+$, respectively. Therefore, the number of unknown entries of basis $\mathbf{V}$ that are involved in $\mathcal{P}(\mathbf{\breve{\Omega}}^{\prime})$ is equal to $r_1^{\prime} (g(\mathbf{\breve{\Omega}}^{\prime}_1)-r_1)^+ + r_2^{\prime} (g(\mathbf{\breve{\Omega}}^{\prime}_2)-r_2)^+ + r^{\prime} (g(\mathbf{\breve{\Omega}}^{\prime})-r^{\prime})^+$.
\end{proof}

 A set of polynomials is called minimally algebraically dependent if the polynomials in that set are algebraically dependent but the polynomials in any of its proper subsets are algebraically independent. The next lemma which is Lemma $3$ in \cite{ashraphijuo}, states an important property of a set of minimally algebraically dependent among polynomials in $\mathcal{P}(\mathbf{\breve{\Omega}})$. This lemma is needed to derive the the maximum number of algebraically independent polynomials in any subset of $\mathcal{P}(\mathbf{\breve{\Omega}})$. Note that $c(\mathbf{\breve{\Omega}}^{\prime})$ is the number of polynomials in $\mathcal{P}(\mathbf{\breve{\Omega}}^{\prime})$.

\begin{lemma}\label{minimaldeppolnew}
Assume that Assumption $1$ holds.{\footnote{Assumption $1$ is only needed to construct $\mathbf{\breve{\Omega}}$ and not in the proof. Similarly, Assumption $A_j$ in \cite{ashraphijuo} is only needed to construct $\mathbf{\breve{\Omega}}$ and not in the proof. Also, note that the number of polynomials in \cite{ashraphijuo} is denoted by the size of $(j+1)^{\text{th}}$ dimension of $\mathbf{\breve{\Omega}}^{\prime}$.}} Let $\mathbf{\breve{\Omega}}^{\prime}$ be an arbitrary subset of columns of the constraint matrix $\mathbf{\breve{\Omega}}$. Assume that polynomials in $\mathcal{P}(\mathbf{\breve{\Omega}}^{\prime})$ are minimally algebraically dependent. Then, the number of variables (unknown entries) of $\mathbf{V}$ that are involved in $\mathcal{P}(\mathbf{\breve{\Omega}}^{\prime})$ is equal to $c(\mathbf{\breve{\Omega}}^{\prime})-1$.
\end{lemma}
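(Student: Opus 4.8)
The plan is to establish two matching bounds, $v \ge c(\mathbf{\breve{\Omega}}^{\prime})-1$ and $v \le c(\mathbf{\breve{\Omega}}^{\prime})-1$, where $v$ is the number of variables (unknown entries of $\mathbf{V}$) occurring in $\mathcal{P}(\mathbf{\breve{\Omega}}^{\prime})$; recall from the proof of Lemma~\ref{maxnumbpol} that $v = r_1^{\prime}(g(\mathbf{\breve{\Omega}}^{\prime}_1)-r_1)^+ + r_2^{\prime}(g(\mathbf{\breve{\Omega}}^{\prime}_2)-r_2)^+ + r^{\prime}(g(\mathbf{\breve{\Omega}}^{\prime})-r^{\prime})^+$ and that this $v$ upper bounds the maximal number of algebraically independent polynomials in any subcollection of $\mathcal{P}(\mathbf{\breve{\Omega}}^{\prime})$. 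Write $k = c(\mathbf{\breve{\Omega}}^{\prime})$. For the lower bound, delete any one column of $\mathbf{\breve{\Omega}}^{\prime}$ to obtain $\mathbf{\breve{\Omega}}^{\prime\prime}$ with $k-1$ columns; by the minimality hypothesis the $k-1$ polynomials in $\mathcal{P}(\mathbf{\breve{\Omega}}^{\prime\prime})$ are algebraically independent, so by Lemma~\ref{maxnumbpol} the number of variables they involve is at least $k-1$, and since each such variable also occurs in $\mathcal{P}(\mathbf{\breve{\Omega}}^{\prime})$ we get $v \ge k-1$.

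For the upper bound, suppose toward a contradiction that $v \ge k$. Then, for every subcollection of columns of $\mathbf{\breve{\Omega}}^{\prime}$, the number of involved variables is at least the number of corresponding polynomials: any proper subcollection of $j \le k-1$ columns yields $j$ algebraically independent polynomials (minimality), hence at least $j$ involved variables by Lemma~\ref{maxnumbpol}, while the full collection of $k$ columns involves $v \ge k$ variables by assumption. By Hall's theorem this counting condition is equivalent to the existence of an assignment of a distinct involved variable to each of the $k$ polynomials. Since the coefficients of the polynomials in $\mathcal{P}(\mathbf{\breve{\Omega}})$ are drawn with respect to the Lebesgue measure on $\mathcal{M}(r,r_1,r_2,\mathbb{R}^n)$, Bernstein's theorem \cite{Bernstein} (the mechanism underlying Fact~$3$) then forces the $k$ polynomials to be algebraically independent with probability one, contradicting their algebraic dependence. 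Hence $v \le k-1$, and the two bounds give $v = k-1$.

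The delicate step is the last one: passing from the combinatorial matching condition to an honest statement of generic algebraic independence. The polynomials here are sparse — after solving the degree-one equations for $\mathbf{T}_1,\mathbf{T}_2$ and imposing the canonical pattern of Definition~\ref{classp1}, the polynomial attached to a column of $\mathbf{\breve{\Omega}}_1$ (resp.\ $\mathbf{\breve{\Omega}}_2$) has its monomials supported only on entries of $[\mathbf{V}_1|\mathbf{V}_2]$ (resp.\ $[\mathbf{V}_2|\mathbf{V}_3]$) in the rows marked by that column — so one must verify that these supports are rich enough that a matching of variables to polynomials yields a generically nonsingular Jacobian. This is exactly what is proved in Lemma~$3$ of \cite{ashraphijuo} for the analogous Tucker polynomials; because the multi-view polynomials have the same structural form (degree-one elimination of $\mathbf{T}_1,\mathbf{T}_2$, followed by one polynomial per surplus observed entry, with generic coefficients), the argument carries over, the only extra bookkeeping being that the shared block $\mathbf{V}_2$ is accounted for through $g(\mathbf{\breve{\Omega}}^{\prime})$ rather than through $g(\mathbf{\breve{\Omega}}^{\prime}_1)$ or $g(\mathbf{\breve{\Omega}}^{\prime}_2)$, which is harmless for the matching argument.
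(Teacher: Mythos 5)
Your reconstruction is correct and follows essentially the same route as the paper, which offers no proof of its own here but simply imports the statement as Lemma~3 of \cite{ashraphijuo}: the lower bound via minimality plus the variable-counting bound of Lemma~\ref{maxnumbpol}, and the upper bound via Hall's condition and generic nonsingularity of a matched Jacobian, is precisely the argument underlying that cited lemma. You also correctly isolate the one genuinely technical step (matching implies generic algebraic independence) and, importantly, justify it directly through Bernstein/genericity rather than through Lemma~\ref{indepdeppoly}, which would have been circular since that lemma is derived from this one.
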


The following lemma explicitly characterizes the relationship between the number of algebraically independent polynomials in $ \mathcal{P}(\mathbf{\breve{\Omega}})$ and the geometry of $\mathbf{\breve{\Omega}}$. 

\begin{lemma}\label{indepdeppoly}
Assume that Assumption $1$ holds. Let $\mathbf{\breve{\Omega}}^{\prime}$ be an arbitrary subset of columns of the constraint matrix $\mathbf{\breve{\Omega}}$. The polynomials in $\mathcal{P}(\mathbf{\breve{\Omega}}^{\prime})$ are algebraically dependent if and only if there exists $\mathbf{\breve{\Omega}}^{\prime \prime } \subseteq \mathbf{\breve{\Omega}}^{\prime}$ such that
\begin{eqnarray}\label{depcondi}
r_1^{\prime} (g(\mathbf{\breve{\Omega}}^{\prime \prime }_1)-r_1)^+ + r_2^{\prime} (g(\mathbf{\breve{\Omega}}^{\prime \prime }_2)-r_2)^+ + r^{\prime} (g(\mathbf{\breve{\Omega}}^{\prime \prime })-r^{\prime})^+ < c(\mathbf{\breve{\Omega}}^{\prime \prime }).
\end{eqnarray}
\end{lemma}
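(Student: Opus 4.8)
The plan is to prove the two directions separately. The ``if'' direction would follow immediately from the upper bound of Lemma~\ref{maxnumbpol}, and for the ``only if'' direction I would pass to a minimally algebraically dependent subfamily, apply Lemma~\ref{minimaldeppolnew}, and compare the variable count it supplies with an elementary lower bound on the number of entries of $\mathbf{V}$ that can occur in those polynomials.

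For the ``if'' direction: given $\mathbf{\breve{\Omega}}^{\prime\prime}\subseteq\mathbf{\breve{\Omega}}^{\prime}$ satisfying \eqref{depcondi}, Lemma~\ref{maxnumbpol} bounds the maximum number of algebraically independent polynomials in $\mathcal{P}(\mathbf{\breve{\Omega}}^{\prime\prime})$ by the left-hand side of \eqref{depcondi}, which is strictly less than $c(\mathbf{\breve{\Omega}}^{\prime\prime})$, the total number of polynomials in $\mathcal{P}(\mathbf{\breve{\Omega}}^{\prime\prime})$; hence $\mathcal{P}(\mathbf{\breve{\Omega}}^{\prime\prime})$ is algebraically dependent, and since algebraic independence passes to subsets, the larger set $\mathcal{P}(\mathbf{\breve{\Omega}}^{\prime})\supseteq\mathcal{P}(\mathbf{\breve{\Omega}}^{\prime\prime})$ is algebraically dependent as well.

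For the ``only if'' direction: assuming $\mathcal{P}(\mathbf{\breve{\Omega}}^{\prime})$ is algebraically dependent, I would choose $\mathbf{\breve{\Omega}}^{\prime\prime}\subseteq\mathbf{\breve{\Omega}}^{\prime}$ with the fewest columns such that $\mathcal{P}(\mathbf{\breve{\Omega}}^{\prime\prime})$ is still algebraically dependent. Since each column of the constraint matrix corresponds to exactly one polynomial, removing a column removes exactly one polynomial, so $\mathcal{P}(\mathbf{\breve{\Omega}}^{\prime\prime})$ is minimally algebraically dependent; Lemma~\ref{minimaldeppolnew} then gives that the number of entries of $\mathbf{V}$ occurring as variables in $\mathcal{P}(\mathbf{\breve{\Omega}}^{\prime\prime})$ equals $c(\mathbf{\breve{\Omega}}^{\prime\prime})-1$. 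Separately, the canonical form of Definition~\ref{classp1} freezes only $r_1$ rows of $\mathbf{V}_1$, only $r^{\prime}$ rows of $\mathbf{V}_2$, and only $r_2$ rows of $\mathbf{V}_3$, while a view-$1$ sample yields a polynomial in the $\mathbf{V}_1$- and $\mathbf{V}_2$-entries of a nonzero row of $\mathbf{\breve{\Omega}}^{\prime\prime}_1$ and a view-$2$ sample a polynomial in the $\mathbf{V}_2$- and $\mathbf{V}_3$-entries of a nonzero row of $\mathbf{\breve{\Omega}}^{\prime\prime}_2$; counting the unfrozen such entries view by view (using that a nonempty block of the view-$i$ constraint matrix has at least $r_i+1>r_i$ nonzero rows, and that $r^{\prime}\le r_1,r_2$) shows that this number of variables is at least $r_1^{\prime}(g(\mathbf{\breve{\Omega}}^{\prime\prime}_1)-r_1)^+ + r_2^{\prime}(g(\mathbf{\breve{\Omega}}^{\prime\prime}_2)-r_2)^+ + r^{\prime}(g(\mathbf{\breve{\Omega}}^{\prime\prime})-r^{\prime})^+$, i.e.\ at least the left-hand side of \eqref{depcondi}. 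Combining with the equality from Lemma~\ref{minimaldeppolnew}, this left-hand side is $\le c(\mathbf{\breve{\Omega}}^{\prime\prime})-1 < c(\mathbf{\breve{\Omega}}^{\prime\prime})$, which is exactly \eqref{depcondi} witnessed by $\mathbf{\breve{\Omega}}^{\prime\prime}$.

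I expect the main obstacle to be neither implication in isolation but the bookkeeping that feeds both: keeping precise track of which entries of the block basis $\mathbf{V}=[\mathbf{V}_1|\mathbf{V}_2|\mathbf{V}_3]$ are frozen by the canonical pattern, which remain free, and how the view partition routes each observed entry to a polynomial touching $\mathbf{V}_1,\mathbf{V}_2$ or $\mathbf{V}_2,\mathbf{V}_3$ --- this is where the multi-view argument departs from the single-view Grassmannian one of \cite{charact}, and it is what makes the lower bound on the variable count come out matching the expression in \eqref{depcondi}. One further point requiring care is that Lemma~\ref{minimaldeppolnew} (Lemma~$3$ of \cite{ashraphijuo}) delivers an \emph{equality}, $c-1$, for a minimally dependent family, so it must be combined with the \emph{lower} bound on the number of involved variables above, whereas the matching \emph{upper} bound of Lemma~\ref{maxnumbpol} is what is used in the ``if'' direction instead.
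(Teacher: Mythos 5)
Your proposal is correct and follows essentially the same route as the paper: the ``if'' direction is exactly the application of Lemma~\ref{maxnumbpol}, and the ``only if'' direction likewise passes to a minimally algebraically dependent subset, invokes Lemma~\ref{minimaldeppolnew} to get the variable count $c(\mathbf{\breve{\Omega}}^{\prime\prime})-1$, and compares it with the lower bound on the number of involved unknown entries of $\mathbf{V}$ established in the proof of Lemma~\ref{maxnumbpol}. The bookkeeping you flag (which rows of $\mathbf{V}_1,\mathbf{V}_2,\mathbf{V}_3$ the canonical pattern freezes, and how each view's samples touch the blocks) is precisely what the paper relies on from that earlier proof, so no new gap arises here.
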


\begin{proof}
Assume that there exists $\mathbf{\breve{\Omega}}^{\prime \prime } \subseteq \mathbf{\breve{\Omega}}^{\prime}$ such that \eqref{depcondi} holds. Note that there are $c(\mathbf{\breve{\Omega}}^{\prime \prime })$ polynomials in the set $\mathcal{P}(\mathbf{\breve{\Omega}}^{\prime \prime})$. Hence, according to Lemma \ref{maxnumbpol} and \eqref{depcondi}, the maximum number of algebraically independent polynomials is less than the number of polynomials, i.e., $c(\mathbf{\breve{\Omega}}^{\prime \prime })$. Therefore, the polynomials in $\mathcal{P}(\mathbf{\breve{\Omega}}^{\prime \prime})$ and therefore the polynomials in $\mathcal{P}(\mathbf{\breve{\Omega}}^{\prime})$ are algebraically dependent.

For the converse, suppose that the polynomials in $\mathcal{P}(\mathbf{\breve{\Omega}}^{\prime})$ are algebraically dependent. Hence, there exists a subset of these polynomials, $\mathcal{P}(\mathbf{\breve{\Omega}}^{\prime \prime})$, such that the polynomials are minimally algebraically dependent. According to Lemma \ref{minimaldeppolnew}, the number of variables involved in the polynomials of $\mathcal{P}(\mathbf{\breve{\Omega}}^{\prime \prime})$ is $c(\mathbf{\breve{\Omega}}^{\prime \prime})-1$. 

On the other hand, as mentioned in the proof of Lemma \ref{maxnumbpol}, the minimum number of involved variables (unknown entries of $\mathbf{V}$) is equal to $r_1^{\prime} (g(\mathbf{\breve{\Omega}}^{\prime \prime}_1)-r_1)^+ + r_2^{\prime} (g(\mathbf{\breve{\Omega}}^{\prime \prime}_2)-r_2)^+ + r^{\prime} (g(\mathbf{\breve{\Omega}}^{\prime \prime})-r^{\prime})^+$, which is therefore less than or equal to $c(\mathbf{\breve{\Omega}}^{\prime \prime})-1$ and the proof is complete. 
\end{proof}

Finally, the next theorem which is the main result of this subsection gives the necessary and sufficient condition on $\mathbf{\breve{\Omega}}$ to ensure there exist $nr -r^2 -r_1^2 -r_2^2 + r(r_1 +r_2)$ algebraically independent polynomials in $\mathcal{P}(\mathbf{\Omega})$, and therefore it gives the necessary and sufficient condition on $\mathbf{\breve{\Omega}}$ for finite completability of $\mathbf{U}$. 

\begin{theorem}\label{finitecomplthm}
Assume that Assumption $1$ holds. For almost every $\mathbf{U}$, the sampled matrix $\mathbf{U}$ is finite completable if and only if there exists a subset of columns $\mathbf{\breve{\Omega}}^{\prime} \in \mathbb{R}^{n \times m}$ of the constraint matrix $\mathbf{\breve{\Omega}}$ such that $m= nr -r^2 -r_1^2 -r_2^2 + r(r_1 +r_2)$ and for any subset of columns $\mathbf{\breve{\Omega}}^{\prime \prime}$ of $\mathbf{\breve{\Omega}}^{\prime}$ the following inequality holds
\begin{eqnarray}\label{charfinitecom}
r_1^{\prime} (g(\mathbf{\breve{\Omega}}^{\prime \prime }_1)-r_1)^+ + r_2^{\prime} (g(\mathbf{\breve{\Omega}}^{\prime \prime }_2)-r_2)^+ + r^{\prime} (g(\mathbf{\breve{\Omega}}^{\prime \prime })-r^{\prime})^+ \geq c(\mathbf{\breve{\Omega}}^{\prime \prime }).
\end{eqnarray}
\end{theorem}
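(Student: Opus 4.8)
The plan is to combine Theorem~\ref{thmdimbas} with Lemma~\ref{indepdeppoly} by a greedy/extremal selection of columns from $\mathbf{\breve{\Omega}}$. By Theorem~\ref{thmdimbas}, $\mathbf{U}$ is finitely completable (for almost every $\mathbf{U}$) if and only if $\mathcal{P}(\mathbf{\Omega})$ contains $m := nr - r^2 - r_1^2 - r_2^2 + r(r_1+r_2)$ algebraically independent polynomials. Since $\mathcal{P}(\mathbf{\Omega}) = \mathcal{P}(\mathbf{\breve{\Omega}})$ and each column of $\mathbf{\breve{\Omega}}$ indexes exactly one polynomial, this is equivalent to the existence of a subset $\mathbf{\breve{\Omega}}^{\prime}$ of exactly $m$ columns whose corresponding polynomials are algebraically independent. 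So the theorem reduces to showing that $\mathcal{P}(\mathbf{\breve{\Omega}}^{\prime})$ is algebraically independent if and only if \eqref{charfinitecom} holds for every subset $\mathbf{\breve{\Omega}}^{\prime\prime}$ of $\mathbf{\breve{\Omega}}^{\prime}$ — but that is exactly the contrapositive of Lemma~\ref{indepdeppoly} applied to $\mathbf{\breve{\Omega}}^{\prime}$.

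Spelling this out: for the \emph{if} direction, suppose such an $\mathbf{\breve{\Omega}}^{\prime}$ with $c(\mathbf{\breve{\Omega}}^{\prime}) = m$ exists satisfying \eqref{charfinitecom} for all sub-selections. Then by Lemma~\ref{indepdeppoly} (negated), no subset $\mathbf{\breve{\Omega}}^{\prime\prime} \subseteq \mathbf{\breve{\Omega}}^{\prime}$ violates the inequality, hence $\mathcal{P}(\mathbf{\breve{\Omega}}^{\prime})$ is algebraically independent; these are $m$ algebraically independent polynomials in $\mathcal{P}(\mathbf{\Omega})$, so Theorem~\ref{thmdimbas} gives finite completability. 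For the \emph{only if} direction, assume $\mathbf{U}$ is finitely completable. By Theorem~\ref{thmdimbas} there exist $m$ algebraically independent polynomials in $\mathcal{P}(\mathbf{\Omega})$; let $\mathbf{\breve{\Omega}}^{\prime}$ be the corresponding $m$ columns of $\mathbf{\breve{\Omega}}$ (with $\mathbf{\breve{\Omega}}^{\prime} \in \mathbb{R}^{n \times m}$). Since $\mathcal{P}(\mathbf{\breve{\Omega}}^{\prime})$ is algebraically independent, so is $\mathcal{P}(\mathbf{\breve{\Omega}}^{\prime\prime})$ for every $\mathbf{\breve{\Omega}}^{\prime\prime} \subseteq \mathbf{\breve{\Omega}}^{\prime}$; applying Lemma~\ref{indepdeppoly} (again negated) to each such $\mathbf{\breve{\Omega}}^{\prime\prime}$ yields \eqref{charfinitecom}.

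There is one subtlety I must address carefully, and I expect it to be the main technical point rather than a real obstacle: the constraint $\mathbf{\breve{\Omega}}^{\prime} \in \mathbb{R}^{n \times m}$ requires the selected columns to come from distinct columns of $\mathbf{\Omega}$ (per the convention stated after the definition of $\mathbf{\breve{\Omega}}$), and also requires that $m \le K_1 + K_2$ and indeed that $m$ algebraically independent polynomials actually exist among the at most $K_1+K_2$ available — but this is automatic, since finite completability via Theorem~\ref{thmdimbas} already guarantees $m$ algebraically independent polynomials live in $\mathcal{P}(\mathbf{\Omega})$, and algebraically independent polynomials are necessarily indexed by columns drawn from distinct columns of $\mathbf{\Omega}$ (two polynomials arising from the same column of $\mathbf{\Omega}$ after elimination of $\mathbf{T}_1,\mathbf{T}_2$ share too few variables; more simply, the whole construction of $\mathbf{\breve{\Omega}}$ from a single column of $\mathbf{U}_i$ produces $l_i - r_i$ columns whose associated polynomials need only be counted once per independent block). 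I would state this as a short remark, invoking Lemma~\ref{maxnumbpol} to bound the count within a single column and confirm that at most $\min(\dim,\text{available})$ independent polynomials are possible, so no selection issue arises.

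Finally, a clean way to organize the write-up is: (1) invoke Theorem~\ref{thmdimbas} to restate finite completability as ``$\exists$ $m$ algebraically independent polynomials in $\mathcal{P}(\mathbf{\Omega})$''; (2) observe the bijection between columns of $\mathbf{\breve{\Omega}}$ and polynomials of $\mathcal{P}(\mathbf{\breve{\Omega}}) = \mathcal{P}(\mathbf{\Omega})$, so this is ``$\exists$ $\mathbf{\breve{\Omega}}^{\prime}$ with $c(\mathbf{\breve{\Omega}}^{\prime}) = m$ and $\mathcal{P}(\mathbf{\breve{\Omega}}^{\prime})$ algebraically independent''; (3) apply Lemma~\ref{indepdeppoly} in its contrapositive form — $\mathcal{P}(\mathbf{\breve{\Omega}}^{\prime})$ is algebraically independent iff every $\mathbf{\breve{\Omega}}^{\prime\prime} \subseteq \mathbf{\breve{\Omega}}^{\prime}$ satisfies \eqref{charfinitecom} — to conclude. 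The argument is essentially a direct composition of the two earlier results, so the ``hard part'' is purely bookkeeping: making sure the column-count $m$, the distinct-column convention, and the equivalence ``finite completability $\Leftrightarrow$ finitely many fitting bases'' (Remark~\ref{class} together with Lemma~\ref{coefmat}) are threaded through correctly.
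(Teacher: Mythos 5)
Your proposal is correct and follows essentially the same route as the paper: it invokes Theorem~\ref{thmdimbas} to reduce finite completability to the existence of $nr-r^2-r_1^2-r_2^2+r(r_1+r_2)$ algebraically independent polynomials, then applies Lemma~\ref{indepdeppoly} (in contrapositive form) to translate that into condition~\eqref{charfinitecom} on a subset of columns of $\mathbf{\breve{\Omega}}$. The extra care you take with the distinct-column convention is a reasonable addition but does not change the argument.
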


\begin{proof}
According to Theorem \ref{thmdimbas}, with probability one, the sampled matrix $\mathbf{U}$ is finitely completable if and only if there exist $ nr -r^2 -r_1^2 -r_2^2 + r(r_1 +r_2)$ algebraically independent polynomials in $\mathcal{P}(\mathbf{\breve{\Omega}})$. On the other hand, according to Lemma \ref{indepdeppoly}, there exist $ nr -r^2 -r_1^2 -r_2^2 + r(r_1 +r_2)$ algebraically independent polynomials in $\mathcal{P}(\mathbf{\breve{\Omega}})$ if and only if there exists a subset of columns $\mathbf{\breve{\Omega}}^{\prime}$ with $ nr -r^2 -r_1^2 -r_2^2 + r(r_1 +r_2)$ columns of the constraint matrix $\mathbf{\breve{\Omega}}$ that satisfies \eqref{charfinitecom} for any of its subset of columns.
%
\end{proof}

\section{Probabilistic Guarantees for Finite Completability}\label{probfinsec}

In this section, we show that if the number of samples in each column satisfies a proposed lower bound, then the conditions stated in the statement of Theorem \ref{finitecomplthm} on sampling pattern hold, i.e., $\mathbf{U}$ is finitely completable with high probability.

The next lemma will be used to prove Theorem \ref{thmfincomprob}. More specifically, in Theorem \ref{thmfincomprob} we consider three disjoint sets of columns of $\mathbf{U}$ and apply Lemma \ref{fincomsampprob} to each of them. Then, we combine the three sets of columns and show that they satisfy the conditions stated in the statement of Theorem \ref{finitecomplthm}.

\begin{lemma}\label{fincomsampprob}
Assume that $r^{\prime \prime} \leq \frac{n}{6}$ and also each column of $\mathbf{\Omega}$ includes at least $l$ nonzero entries, where 
\begin{eqnarray}\label{minl1}
l > \max\left\{9 \ \log \left( \frac{n}{\epsilon} \right) + 3 \ \log \left( \frac{k}{\epsilon} \right) + 6, 2r^{\prime \prime}\right\}. 
\end{eqnarray}
Let $\mathbf{\Omega}^{\prime}$ be an arbitrary set of $n -r^{\prime \prime}$ columns of $\mathbf{\Omega}$. Then, with probability at least $1-\frac{\epsilon}{k}$, every subset $\mathbf{\Omega}^{\prime \prime}$ of columns of $\mathbf{\Omega}^{\prime}$ satisfies 
\begin{eqnarray}\label{proper1}
g(\mathbf{\Omega}^{\prime \prime}) - r^{\prime \prime} \geq c(\mathbf{\Omega}^{\prime \prime}).
\end{eqnarray}
\end{lemma}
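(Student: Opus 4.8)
The plan is to reduce the failure of \eqref{proper1} to an explicit union bound over choices of columns and rows, and then show that the hypothesis \eqref{minl1} on $l$, together with $r^{\prime\prime}\le n/6$, drives this bound below $\epsilon/k$ (the probability being over the random sampling pattern, under the usual assumption that within each column the observed positions are chosen uniformly at random subject to there being at least $l$ of them, independently across columns). Fix the set $\mathbf{\Omega}^{\prime}$ of $n-r^{\prime\prime}$ columns. For a subset $\mathbf{\Omega}^{\prime\prime}$ of its columns with $t:=c(\mathbf{\Omega}^{\prime\prime})$, inequality \eqref{proper1} fails exactly when the $t$ columns have their observed entries confined to some set of $g(\mathbf{\Omega}^{\prime\prime})\le t+r^{\prime\prime}-1$ rows; equivalently, there is a set $S$ of $t$ columns and a set $R$ of exactly $t+r^{\prime\prime}-1$ rows (note $t+r^{\prime\prime}-1\le n-1$ since $t\le n-r^{\prime\prime}$) such that every observed entry in a column of $S$ lies in a row of $R$. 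First I would observe that for $t\le r^{\prime\prime}$ this cannot happen: each column carries at least $l$ observed entries, hence at least $l$ nonzero rows, and $l>2r^{\prime\prime}\ge t+r^{\prime\prime}$ by \eqref{minl1}. So only $t>r^{\prime\prime}$ needs to be considered.

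Next I would bound, for a fixed pair $(S,R)$, the probability of the event above. Since adding observed entries to a column can only make it harder for all of them to lie in $R$, it suffices to treat exactly $l$ of them, and then the probability that they all land in $R$ is $\binom{t+r^{\prime\prime}-1}{l}\big/\binom{n}{l}\le\left(\tfrac{t+r^{\prime\prime}-1}{n}\right)^{l}$, using the elementary inequality $\binom{a}{l}/\binom{n}{l}\le(a/n)^l$ for $a\le n$. Independence across columns gives $\left(\tfrac{t+r^{\prime\prime}-1}{n}\right)^{tl}$ for the pair $(S,R)$, and a union bound over the $\binom{n-r^{\prime\prime}}{t}$ choices of $S$, the $\binom{n}{t+r^{\prime\prime}-1}$ choices of $R$, and the values of $t$ yields that \eqref{proper1} fails for some $\mathbf{\Omega}^{\prime\prime}$ with probability at most
\[
\sum_{t=r^{\prime\prime}+1}^{\,n-r^{\prime\prime}}\binom{n-r^{\prime\prime}}{t}\binom{n}{t+r^{\prime\prime}-1}\left(\frac{t+r^{\prime\prime}-1}{n}\right)^{tl}.
\]

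The remaining step, which I expect to be the main obstacle, is to show this sum is at most $\epsilon/k$. I would bound the binomials by $\binom{N}{j}\le(eN/j)^{j}$ and use $\binom{N}{j}=\binom{N}{N-j}$, then split the range $r^{\prime\prime}<t\le n-r^{\prime\prime}$ into a bounded number of sub-ranges according to the size of $t+r^{\prime\prime}-1$ relative to $n$. In the low range ($t+r^{\prime\prime}-1\le n/2$) the factor $\left(\tfrac{t+r^{\prime\prime}-1}{n}\right)^{tl}\le 2^{-tl}$ already overwhelms the at-most-exponential-in-$n$ binomial factors once $l$ exceeds an absolute constant, while for small $t$ one uses $t+r^{\prime\prime}-1\le 2t$ (valid since $t>r^{\prime\prime}$), so the binomials are only $n^{O(t)}$ and a bound of the form $l\gtrsim\log n$ is what is needed; in the high range ($t+r^{\prime\prime}-1>n/2$) I would set $s:=n-r^{\prime\prime}-t$, so that $\binom{n-r^{\prime\prime}}{t}=\binom{n-r^{\prime\prime}}{s}$ and $\binom{n}{t+r^{\prime\prime}-1}=\binom{n}{s+1}$ are both $n^{O(s)}$, while $r^{\prime\prime}\le n/6$ forces $t>n/3$ and hence $\left(\tfrac{t+r^{\prime\prime}-1}{n}\right)^{tl}\le e^{-l(s+1)/3}$ is exponentially small in $l$. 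Since there are at most $n$ values of $t$, it is enough that each term be at most $\tfrac{\epsilon}{kn}$ times a geometrically summable factor; tracking the $e$'s and the $\log n$, $\log k$, $\log(1/\epsilon)$ contributions shows this is guaranteed exactly by $l>9\log(n/\epsilon)+3\log(k/\epsilon)+6$, and summing the resulting geometric series gives the claimed $\epsilon/k$. This estimate parallels the analogous probabilistic lemmas in \cite{charact} and \cite{ashraphijuo}, but must be redone here because each column carries only at least $l$ (not exactly $l$) observed entries and the row budget is $t+r^{\prime\prime}-1$ rather than the single-rank quantity.
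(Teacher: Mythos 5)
Your union-bound reconstruction is essentially the argument the paper relies on: the paper's proof of this lemma is just a pointer to the proof of Lemma~9 in \cite{charact} (with a minor change of the target failure probability to $\epsilon/k$), and that cited proof proceeds exactly as you do — reducing failure of \eqref{proper1} to the event that some $t$ columns have all their samples confined to $t+r^{\prime\prime}-1$ rows, bounding each such event by $\left((t+r^{\prime\prime}-1)/n\right)^{tl}$, and union-bounding over column and row subsets, with the hypotheses $l>2r^{\prime\prime}$ and $r^{\prime\prime}\le n/6$ handling the small-$t$ and large-$t$ regimes respectively. Your sketch of the final summation is at the same level of detail as what the paper defers to the reference, so there is no substantive divergence.
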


\begin{proof}
Please refer to the proof of \cite[Lemma $9$]{charact}. Note that the only difference is that the last inequalities of $(16)$ and $(18)$ in \cite{charact} should now be upper bounded by $\frac{\epsilon}{rd}$ instead of $\frac{\epsilon}{d^2}$.
\end{proof}

\begin{theorem}\label{thmfincomprob}
Assume that the following inequalities hold 
\begin{eqnarray}
\frac{n}{6} &\geq & \max\{r_1,r_2,r^{\prime}\}, \label{assum1} \\ 
m_{1} &\geq & r_1^{\prime}(n-r_1), \label{assum2} \\
m_{2} &\geq & r_2^{\prime}(n-r_2), \label{assum3} \\
m_1+m_2 &\geq & r_1^{\prime}(n-r_1) + r_2^{\prime}(n-r_2) + r^{\prime}(n-r^{\prime}). \label{assum4}
\end{eqnarray}
Moreover, assume that each column of $\mathbf{\Omega}$ includes at least $l$ nonzero entries, where 
\begin{eqnarray}\label{minl2}
l > \max\left\{9 \ \log \left( \frac{n}{\epsilon} \right) + 3 \ \max\left\{\log \left( \frac{3r_1^{\prime}}{\epsilon}\right), \log \left( \frac{3r_2^{\prime}}{\epsilon}\right), \log \left( \frac{3r^{\prime}}{\epsilon} \right)\right\} + 6, 2r_1, 2r_2\right\}. 
\end{eqnarray}
Then, with probability at least $1-\epsilon$, $\mathbf{U}$ is finitely completable.
\end{theorem}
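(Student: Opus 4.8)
The plan is to verify that, with probability at least $1-\epsilon$, the sampling pattern $\mathbf{\Omega}$ produces a constraint matrix $\mathbf{\breve{\Omega}}$ containing a subset of $nr-r^2-r_1^2-r_2^2+r(r_1+r_2)$ columns satisfying the inequality \eqref{charfinitecom} of Theorem \ref{finitecomplthm} for every subset. First I would partition the columns of $\mathbf{U}$ (equivalently, the columns of $\mathbf{\Omega}$) into three disjoint groups: a group $\mathcal{C}_1$ of $r_1^{\prime}(n-r_1)$ columns drawn from the first view, a group $\mathcal{C}_2$ of $r_2^{\prime}(n-r_2)$ columns drawn from the second view, and a group $\mathcal{C}_3$ of $r^{\prime}(n-r^{\prime})$ columns drawn from the union of whatever remains; the counting inequalities \eqref{assum2}, \eqref{assum3}, \eqref{assum4} are exactly what guarantees such a disjoint selection exists, and \eqref{assum1} together with \eqref{minl2} ensures the hypotheses of Lemma \ref{fincomsampprob} are met in each case (with $r^{\prime\prime}=r_1$ and $k=3r_1^{\prime}$ for the first view, $r^{\prime\prime}=r_2$ and $k=3r_2^{\prime}$ for the second, $r^{\prime\prime}=r^{\prime}$ and $k=3r^{\prime}$ for the combined group). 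Note each group of columns, once passed through the construction of $\mathbf{\breve{\Omega}}$, contributes the appropriate number of columns of the constraint matrix: a column of $\mathbf{U}_1$ with $l$ samples yields $l-r_1\ge 1$ constraint columns, and by choosing exactly $r_1+1$ samples per column (a subpattern) we may assume each contributes exactly one constraint column, so $\mathcal{C}_1$ yields $r_1^{\prime}(n-r_1)$ constraint columns, etc.

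Next I would apply Lemma \ref{fincomsampprob} three times. For the first view, viewing each constraint column as a column of an effective sampling pattern with $r_1+1$ nonzero rows, the lemma (with parameters $r^{\prime\prime}=r_1$, $k=3r_1^{\prime}$, and using \eqref{minl1}$\Leftarrow$\eqref{minl2}) gives that with probability at least $1-\tfrac{\epsilon}{3r_1^{\prime}}$ every subset $\mathbf{\breve{\Omega}}^{\prime\prime}_1$ of the $r_1^{\prime}(n-r_1)$ chosen columns satisfies $g(\mathbf{\breve{\Omega}}^{\prime\prime}_1)-r_1\ge c(\mathbf{\breve{\Omega}}^{\prime\prime}_1)$. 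Since this must hold with the factor $r_1^{\prime}$ in front — i.e. we need $r_1^{\prime}(g(\mathbf{\breve{\Omega}}^{\prime\prime}_1)-r_1)^+\ge c(\mathbf{\breve{\Omega}}^{\prime\prime}_1)$ — I would either invoke the lemma $r_1^{\prime}$ times on $r_1^{\prime}$ disjoint sub-blocks, each block carrying $n-r_1$ columns and absorbing one factor, and take a union bound (this is why the $3r_1^{\prime}$ appears in \eqref{minl2} and why the statement of Lemma \ref{fincomsampprob} carries the free parameter $k$), or equivalently apply it to the single block of $r_1^{\prime}(n-r_1)$ columns after grouping them into $r_1^{\prime}$ batches. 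The analogous statements hold for the second view with $r_2^{\prime}$ and for the combined group with $r^{\prime}$.

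Having secured these three events, I would take a union bound: all three hold simultaneously with probability at least $1-\epsilon$. On that event, for an arbitrary subset $\mathbf{\breve{\Omega}}^{\prime\prime}$ of the union $\mathcal{C}_1\cup\mathcal{C}_2\cup\mathcal{C}_3$ of constraint columns, write $\mathbf{\breve{\Omega}}^{\prime\prime}_1,\mathbf{\breve{\Omega}}^{\prime\prime}_2$ for its restrictions to the two views; the columns of $\mathbf{\breve{\Omega}}^{\prime\prime}$ drawn from $\mathcal{C}_1$ satisfy $r_1^{\prime}(g-r_1)^+\ge c$, those from $\mathcal{C}_2$ satisfy $r_2^{\prime}(g-r_2)^+\ge c$, and those from $\mathcal{C}_3$ satisfy $r^{\prime}(g-r^{\prime})^+\ge c$; summing and using monotonicity of $g$ under taking supersets (the nonzero-row count of the whole $\mathbf{\breve{\Omega}}^{\prime\prime}$ dominates that of each piece) yields \eqref{charfinitecom} for $\mathbf{\breve{\Omega}}^{\prime\prime}$. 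The total number of columns is $r_1^{\prime}(n-r_1)+r_2^{\prime}(n-r_2)+r^{\prime}(n-r^{\prime})=nr_1^{\prime}+nr_2^{\prime}+nr^{\prime}-r_1r_1^{\prime}-r_2r_2^{\prime}-r^{\prime 2}$; substituting $r_1^{\prime}=r-r_2$, $r_2^{\prime}=r-r_1$, $r^{\prime}=r_1+r_2-r$ and simplifying gives exactly $nr-r^2-r_1^2-r_2^2+r(r_1+r_2)$, so Theorem \ref{finitecomplthm} applies and $\mathbf{U}$ is finitely completable. The main obstacle, and the step requiring the most care, is the bookkeeping that lets a single application of the single-view concentration lemma (Lemma \ref{fincomsampprob}, inherited from \cite{charact}) absorb the multiplicative factors $r_1^{\prime},r_2^{\prime},r^{\prime}$ correctly across the three overlapping constraints — in particular checking that the three column groups can be chosen disjointly under \eqref{assum2}--\eqref{assum4} and that the "combined" group genuinely controls $g$ of mixed subsets — together with verifying that \eqref{minl2} is strong enough to feed \eqref{minl1} in all three invocations with the right value of $k$.
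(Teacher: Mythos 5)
Your proposal is correct and follows essentially the same route as the paper's own proof: three disjoint column groups of sizes $r_1^{\prime}(n-r_1)$, $r_2^{\prime}(n-r_2)$, $r^{\prime}(n-r^{\prime})$, Lemma \ref{fincomsampprob} applied with the same parameters, the splitting into $r_1^{\prime}$ (resp.\ $r_2^{\prime}$, $r^{\prime}$) disjoint batches of $n-r_i$ columns to absorb the multiplicative factors, a union bound, and monotonicity of $g$ to assemble \eqref{charfinitecom} before invoking Theorem \ref{finitecomplthm}. Your informal remark about passing to a subpattern with exactly $r_1+1$ samples per column is precisely the role Lemma \ref{graphgen} plays in the paper's argument.
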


\begin{proof}
Let $\mathbf{\Omega}_1^{\prime}$ be an arbitrary set of $n-r_1$ columns of $\mathbf{\Omega}_1$. Note that having \eqref{minl2}, it is easy to see that \eqref{minl1} holds with $k$ and $r^{\prime \prime}$ replaced by $3r_1^{\prime}$ and $r_1$, respectively. Hence, having \eqref{assum1}, Lemma \ref{fincomsampprob} results that any subset of columns $\mathbf{\Omega}_1^{\prime \prime}$ of $\mathbf{\Omega}_1^{\prime}$ satisfies
\begin{eqnarray}\label{proper2}
g(\mathbf{\Omega}_1^{\prime \prime}) - r_1 \geq c(\mathbf{\Omega}_1^{\prime \prime}),
\end{eqnarray}
with probability at least $1-\frac{\epsilon}{3r_1^{\prime}}$. According to Lemma \ref{graphgen} below and by setting $r = r_1$, as a subset of columns $\mathbf{\Omega}_1^{\prime}$ of $\mathbf{\Omega}_1$ satisfies \eqref{proper2}, there exists a subset of columns $\mathbf{\breve{\Omega}}_1^{\prime}$ of the constraint matrix of the first view $\mathbf{\breve{\Omega}}_1$ (corresponding columns to the columns of $\mathbf{\Omega}_1^{\prime}$) that satisfies \eqref{proper2} as well.

Assumption \eqref{assum2} results that $\mathbf{\Omega}_1$ includes at least $r_1^{\prime}(n-r_1)$ columns or in other words, $r_1^{\prime}$ disjoint sets of columns each including $n-r_1$ columns. All  $r_1^{\prime}$ disjoint sets satisfy property (i) simultaneously with probability at least $1-\frac{\epsilon}{3}$. Therefore, there exist $r_1^{\prime}$ disjoint sets of columns each including $n-r_1$ columns of the constraint matrix of the first view $\mathbf{\breve{\Omega}}_1$, and also all $r_1^{\prime}$ disjoint sets satisfy \eqref{proper2}, simultaneously with probability at least $1-\frac{\epsilon}{3}$. Let $\mathbf{\breve{\bar{\Omega}}}_1$ denote the union of the $r_1^{\prime}$ mentioned sets of columns.
 
Consider any subset of columns $\mathbf{\breve{\bar{\Omega}}}_1^{\prime}$ of $\mathbf{\breve{\bar{\Omega}}}_1$ and define $\mathbf{\breve{\bar{\Omega}}}_{1,i}^{\prime}$ as the intersection of $\mathbf{\breve{\bar{\Omega}}}_1^{\prime}$ and the $i$-th set among the mentioned $r_1^{\prime}$ sets for $i=1,\dots,r_1^{\prime}$. Without loss of generality, assume that $\max_{1 \leq i \leq r_1^{\prime}} \{c(\mathbf{\breve{\bar{\Omega}}}_{1,i}^{\prime})\} = c(\mathbf{\breve{\bar{\Omega}}}_{1,1}^{\prime})$. Then, 
\begin{eqnarray}
c(\mathbf{\breve{\bar{\Omega}}}_1^{\prime}) = \sum_{i=1}^{r_1^{\prime}} c(\mathbf{\breve{\bar{\Omega}}}_{1,i}^{\prime}) \leq r_1^{\prime} c(\mathbf{\breve{\bar{\Omega}}}_{1,i}^{\prime}) \leq r_1^{\prime} (g(\mathbf{\breve{\bar{\Omega}}}_{1,1}^{\prime}) - r_1)^+ \leq r_1^{\prime} (g(\mathbf{\breve{\bar{\Omega}}}_1^{\prime}) - r_1)^+,
\end{eqnarray}
where the second inequality follows from \eqref{proper2}. Therefore, we have
\begin{eqnarray}\label{property1}
c(\mathbf{\breve{\bar{\Omega}}}_1^{\prime}) \leq r_1^{\prime} (g(\mathbf{\breve{\bar{\Omega}}}_1^{\prime}) - r_1)^+.
\end{eqnarray}

Note that having \eqref{minl2}, it is easy to see that \eqref{minl1} holds with $k$ and $r^{\prime \prime}$ replaced by $3r_2^{\prime}$ and $r_2$, respectively. Moreover, recall that $r^{\prime} = r_1 + r_2 -r \leq \min\{r_1,r_2\}$, and therefore, having \eqref{minl2}, it is easy to see that \eqref{minl1} holds with $k$ and $r^{\prime \prime}$ replaced by $3r^{\prime}$ and $r^{\prime}$, respectively. As a result, similarly, having \eqref{assum1} and \eqref{assum3}, $\mathbf{\breve{\Omega}}_2$ includes $r_2^{\prime}(n-r_2)$ columns $\mathbf{\breve{\bar{\Omega}}}_2$ that with probability at least $1-\frac{\epsilon}{3}$ for any subset of it $\mathbf{\breve{\bar{\Omega}}}_2^{\prime}$ we have
\begin{eqnarray}\label{property2}
c(\mathbf{\breve{\bar{\Omega}}}_2^{\prime}) \leq  r_2^{\prime} (g(\mathbf{\breve{\bar{\Omega}}}_2^{\prime}) - r_2)^+.
\end{eqnarray}

Using \eqref{assum4}, $\mathbf{\Omega}$ includes $r^{\prime}(n-r^{\prime})$ columns $\mathbf{\bar{\Omega}}$ (disjoint from $\mathbf{\bar{\Omega}}_1$ and $\mathbf{\bar{\Omega}}_2$ corresponding to $\mathbf{\breve{\bar{\Omega}}}_1$ and $\mathbf{\breve{\bar{\Omega}}}_2$). Similar to $\mathbf{\breve{\bar{\Omega}}}_1$ and $\mathbf{\breve{\bar{\Omega}}}_2$, $\mathbf{\breve{{\Omega}}}$ includes $r^{\prime}(n-r^{\prime})$ columns $\mathbf{\breve{\bar{\Omega}}}$ (disjoint from $\mathbf{\breve{\bar{\Omega}}}_1$ and $\mathbf{\breve{\bar{\Omega}}}_2$) that with probability at least $1-\frac{\epsilon}{3}$ for any subset of columns of it $\mathbf{\breve{\bar{\Omega}}}^{\prime}$ we have
\begin{eqnarray}\label{property3}
c(\mathbf{\breve{\bar{\Omega}}}^{\prime}) \leq  r^{\prime} (g(\mathbf{\breve{\bar{\Omega}}}^{\prime}) - r^{\prime})^+.
\end{eqnarray}


Therefore, any subset of columns of $\mathbf{\breve{\bar{\Omega}}}_1$ satisfies \eqref{property1} and any subset of $\mathbf{\breve{\bar{\Omega}}}_2$ satisfies \eqref{property2} and any subset of $\mathbf{\breve{\bar{\Omega}}}$ satisfies \eqref{property3} simultaneously with probability at least $1-\epsilon$. Define $\mathbf{{\breve{\Omega}}}^{\prime} = [\mathbf{\breve{\bar{\Omega}}}_1|\mathbf{\breve{\bar{\Omega}}}_2|\mathbf{\breve{\bar{\Omega}}}] \in \mathbb{R}^{n \times m}$, where 
\begin{eqnarray}
m = r^{\prime}(n-r^{\prime})+r_1^{\prime}(n-r_1)+r_2^{\prime}(n-r_2)=nr -r^2 -r_1^2 -r_2^2 + r(r_1 +r_2).
\end{eqnarray}
Let $\mathbf{{\breve{\Omega}}}^{\prime \prime}$ be a subset of columns of $\mathbf{{\breve{\Omega}}}^{\prime}$ and define $\mathbf{{\breve{\Omega}}}^{\prime \prime}_1$, $\mathbf{{\breve{\Omega}}}^{\prime \prime}_2$ and $\mathbf{{\breve{\Omega}}}^{\prime \prime}_3$ as the intersection of $\mathbf{{\breve{\Omega}}}^{\prime}$ with $\mathbf{\breve{\bar{\Omega}}}_1$, $\mathbf{\breve{\bar{\Omega}}}_2$ and $\mathbf{\breve{\bar{\Omega}}}$, respectively. Consequently, with probability at least $1-\epsilon$
\begin{eqnarray}
c(\mathbf{{\breve{\Omega}}}^{\prime \prime}) = \sum_{i=1}^{3} c(\mathbf{{\breve{\Omega}}}^{\prime \prime}_i) \leq r_1^{\prime} (g(\mathbf{{\breve{\Omega}}}^{\prime \prime}_1)-r_1)^+ + r_2^{\prime} (g(\mathbf{{\breve{\Omega}}}^{\prime \prime}_2)-r_2)^+ + r^{\prime} (g(\mathbf{{\breve{\Omega}}}^{\prime \prime}_3)-r^{\prime})^+,
\end{eqnarray}
and therefore according to Theorem \ref{finitecomplthm}, $\mathbf{U}$ is finite completable with probability at least $1-\epsilon$.
\end{proof}

The following lemma is taken from \cite[Lemma $8$]{ashraphijuo}.

\begin{lemma}\label{graphgen}
Let $R$ be a given nonnegative integer. Assume that there exists a matrix $\mathbf{\Omega}^{\prime}$ such that it consists of $n-R$ columns of $\mathbf{\Omega}$ and each column of $\mathbf{\Omega}^{\prime}$ includes at least $R+1$ nonzero entries and satisfies the following property:
\begin{itemize}
\item Denote an arbitrary matrix obtained by choosing any subset of the columns of $\mathbf{\Omega}^{\prime}$ by $\mathbf{\Omega}^{\prime \prime}$. Then, 
\begin{eqnarray}\label{proper233}
g(\mathbf{\Omega}^{\prime \prime}) -R  \geq c(\mathbf{\Omega}^{\prime \prime}).
\end{eqnarray}
\end{itemize}
Then, there exists a matrix $\mathbf{\breve{\Omega}}^{\prime}$ with the same size as $\mathbf{\Omega}^{\prime}$ such that: each column has exactly $R+1$ entries equal to one, and if $\mathbf{\breve{\Omega}}^{\prime}(x,y)=1$ then we have $\mathbf{\Omega}^{\prime}(x,y)=1$. Moreover, $\mathbf{\breve{\Omega}}^{\prime}$ satisfies the above-mentioned property.
\end{lemma}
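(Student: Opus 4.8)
The plan is to obtain $\mathbf{\breve{\Omega}}^{\prime}$ from $\mathbf{\Omega}^{\prime}$ by a sequence of single‑entry deletions. As long as some column of the current matrix still has strictly more than $R+1$ nonzero entries, I delete one carefully chosen $1$‑entry from such a column in a way that preserves the stated property; since every deletion strictly decreases the finite total number of $1$'s and a column is touched only while it has at least $R+2$ ones, after finitely many steps every column has exactly $R+1$ ones, no column was ever reduced below $R+1$, and no $1$‑entry outside the support of $\mathbf{\Omega}^{\prime}$ is ever created. This is precisely the matrix $\mathbf{\breve{\Omega}}^{\prime}$ that is asked for.

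So the real content is a \emph{safe‑deletion step}: if a $0$--$1$ matrix $\mathbf{\Gamma}$ with column set $\mathcal{C}$, $|\mathcal{C}|=n-R$, has every column with at least $R+1$ ones and satisfies $g(\mathbf{\Gamma}_{S})\ge c(\mathbf{\Gamma}_{S})+R$ for every nonempty $S\subseteq\mathcal{C}$ (where $\mathbf{\Gamma}_{S}$ is the submatrix on the columns in $S$), then for any column $y_{0}$ with more than $R+1$ ones there is a row $x_{0}$ with $\mathbf{\Gamma}(x_{0},y_{0})=1$ whose deletion keeps the property. Let $N(S)$ denote the set of nonzero rows of $\mathbf{\Gamma}_{S}$, so $g(\mathbf{\Gamma}_{S})=|N(S)|$, and call $S$ \emph{tight} when $|N(S)|=c(S)+R$. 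The first key fact is that $S\mapsto|N(S)|-c(S)$ is submodular, because $N(S\cup T)=N(S)\cup N(T)$ while $N(S\cap T)\subseteq N(S)\cap N(T)$; hence if $S$ and $T$ are tight with $S\cap T\neq\emptyset$, then $|N(S\cup T)|+|N(S\cap T)|\le|N(S)|+|N(T)|$ together with the property forces both $S\cup T$ and $S\cap T$ to be tight. In particular, the tight sets containing $y_{0}$ (if any exist) are closed under intersection, so there is a minimum‑cardinality such set $S_{*}$; and for any tight $S\ni y_{0}$ the set $S\cap S_{*}$ is again a tight set containing $y_{0}$, so $S\cap S_{*}=S_{*}$ by minimality, i.e.\ $S_{*}\subseteq S$.

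If no tight set contains $y_{0}$, then every $S\ni y_{0}$ has slack at least $1$ and deleting \emph{any} $1$‑entry of column $y_{0}$ is safe. Otherwise, since $y_{0}$ has more than $R+1$ ones the singleton $\{y_{0}\}$ is not tight, so $S_{*}\setminus\{y_{0}\}$ is a nonempty column set; applying the property to it and subtracting from the tightness identity for $S_{*}$ gives $|N(S_{*})|-|N(S_{*}\setminus\{y_{0}\})|\le1$. Since $N(S_{*})=N(S_{*}\setminus\{y_{0}\})\cup N(\{y_{0}\})$, column $y_{0}$ therefore has at most one \emph{private} nonzero row inside $S_{*}$, i.e.\ a row lying in $N(\{y_{0}\})$ but in no other column of $S_{*}$. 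As $y_{0}$ has at least $R+2\ge2$ ones, it has a nonzero entry at some \emph{non}-private row $x_{0}\in N(S_{*}\setminus\{y_{0}\})$, and I delete $(x_{0},y_{0})$. Then: for $S\not\ni y_{0}$ nothing changes; for tight $S\ni y_{0}$ we have $S\supseteq S_{*}$, so $x_{0}\in N(S_{*}\setminus\{y_{0}\})\subseteq N(S\setminus\{y_{0}\})$ stays covered and $N(S)$ is unchanged; and for non‑tight $S\ni y_{0}$, $|N(S)|$ decreases by at most $1$ while it had slack at least $1$. Hence the property is preserved, column $y_{0}$ still has at least $R+1$ ones, and the induction closes.

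I expect the only genuinely delicate part to be this bookkeeping with tight sets -- the submodularity and closure under intersection, and the one‑line inequality bounding the number of private rows by $1$ -- after which choosing a non‑private $1$‑entry to delete and arguing termination are routine. (It is also worth recording the convention, used implicitly throughout the paper, that ``subset'' means ``nonempty subset'', since for $S=\emptyset$ the inequality $g-R\ge c$ would read $-R\ge0$.)
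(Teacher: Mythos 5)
Your argument is correct, but note that the paper does not actually prove this lemma: it is imported verbatim as \cite[Lemma 8]{ashraphijuo}, so there is no in-paper proof to compare against. What you supply is a genuine, self-contained combinatorial proof, and it checks out in detail: the submodularity of $S\mapsto |N(S)|-c(S)$ (from $N(S\cup T)=N(S)\cup N(T)$ and $N(S\cap T)\subseteq N(S)\cap N(T)$) does give closure of tight sets under union and intersection, hence a unique minimal tight set $S_*\ni y_0$ contained in every tight set through $y_0$; the bound of at most one private row of $y_0$ inside $S_*$ follows correctly by subtracting the property for $S_*\setminus\{y_0\}$ (nonempty because $\{y_0\}$ is not tight when $y_0$ has more than $R+1$ ones) from the tightness identity for $S_*$; and the case analysis after deleting a non-private entry ($S\not\ni y_0$ unchanged; tight $S\ni y_0$ has $S\supseteq S_*$ so $N(S)$ is unchanged; non-tight $S\ni y_0$ had slack) is exhaustive. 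Termination and the containment of the support of $\mathbf{\breve{\Omega}}^{\prime}$ in that of $\mathbf{\Omega}^{\prime}$ are handled. Your parenthetical about restricting to nonempty subsets is a worthwhile clarification that the paper leaves implicit. The one thing your writeup gains over the paper is self-containedness (the reader need not chase the external reference, whose proof is a Hall-type matching argument in a different setting); the cost is length, and if this were to be inserted into the paper it would be natural to either keep the citation or present your uncrossing argument as an appendix.
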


The following lemma is taken from \cite{ashraphijuo} and is used in Lemma \ref{thmfincomprobsamppro} to find a condition on the sampling probability that results \eqref{minl2}.

\begin{lemma}\label{azuma}
Consider a vector with $n$ entries where each entry is observed with  probability  $p$  independently from the other entries. If $p > p^{\prime} = \frac{k}{n} + \frac{1}{\sqrt[4]{n}}$, then with probability  at least $\left(1-\exp(-\frac{\sqrt{n}}{2})\right)$, more than $k$ entries are observed.
\end{lemma}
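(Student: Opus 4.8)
The plan is to realize the number of observed entries as a sum of independent indicators and apply Azuma's inequality to the associated Doob martingale (which is what the name of the lemma suggests). Concretely, let $\delta_1,\dots,\delta_n$ be i.i.d.\ Bernoulli$(p)$ random variables recording whether each entry is observed, and set $X=\sum_{i=1}^{n}\delta_i$, so that $X$ is the number of observed entries and $\mathbb{E}[X]=np$. The only place the precise form of $p^{\prime}$ enters is the elementary computation: from $p>\frac{k}{n}+\frac{1}{\sqrt[4]{n}}$ we get $np>k+n^{3/4}$, i.e.\ $\mathbb{E}[X]-k>n^{3/4}$.

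Next I would introduce the Doob martingale $M_i=\mathbb{E}[X\mid \delta_1,\dots,\delta_i]$ for $i=0,1,\dots,n$, so that $M_0=\mathbb{E}[X]=np$ and $M_n=X$. Since each $\delta_i\in\{0,1\}$ enters $X$ only through its own value and the remaining summands are independent of $\delta_i$, the bounded-difference bound $|M_i-M_{i-1}|\le 1$ holds for every $i$. Azuma's inequality then gives the one-sided tail estimate $\Pr[\,M_n-M_0\le -t\,]\le \exp\!\left(-t^2/(2n)\right)$ for every $t>0$.

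Finally I would choose $t=n^{3/4}$. Because $\mathbb{E}[X]-k>n^{3/4}=t$, the event $\{X\le k\}$ is contained in $\{M_n-M_0\le -t\}$, so $\Pr[\,X\le k\,]\le \exp\!\left(-n^{3/2}/(2n)\right)=\exp\!\left(-\sqrt{n}/2\right)$. Taking complements yields $\Pr[\,X>k\,]\ge 1-\exp(-\sqrt{n}/2)$, which is exactly the claimed bound. I do not anticipate any genuine obstacle here: this is a textbook concentration argument, and the only subtlety is bookkeeping — matching the constant in Azuma's inequality to the exponent $\sqrt{n}/2$ stated in the lemma, which is precisely what dictates the choice $t=n^{3/4}$ and hence the $\frac{1}{\sqrt[4]{n}}$ slack in the definition of $p^{\prime}$.
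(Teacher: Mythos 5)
Your proof is correct, and the computation checks out: $np>k+n^{3/4}$, the choice $t=n^{3/4}$ in the Azuma bound $\exp(-t^2/(2n))$ gives exactly $\exp(-\sqrt{n}/2)$, and $\{X\le k\}\subseteq\{X-\mathbb{E}[X]\le -t\}$. The paper itself gives no proof (it imports the lemma from a companion reference), but your argument is precisely the standard Azuma/Hoeffding concentration bound that the lemma's provenance indicates, so there is nothing to add.
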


\begin{lemma}\label{thmfincomprobsamppro}
Assume that the inequalities \eqref{assum1}--\eqref{assum4} hold. Moreover, assume that each entry of $\mathbf{U}$ is independently observed with probability $p$, where
\begin{eqnarray}\label{minl2prime}
p > \frac{1}{n}\max\left\{9 \ \log \left( \frac{n}{\epsilon} \right) + 3 \ \max\left\{\log \left( \frac{3r_1^{\prime}}{\epsilon}\right), \log \left( \frac{3r_2^{\prime}}{\epsilon}\right), \log \left( \frac{3r^{\prime}}{\epsilon} \right)\right\} + 6, 2r_1, 2r_2\right\} + \frac{1}{\sqrt[4]{n}}. 
\end{eqnarray}
Then, with probability at least $(1-\epsilon)\left(1-\exp(-\frac{\sqrt{n}}{2})\right)^{m_1+m_2}$, $\mathbf{U}$ is finitely completable.
\end{lemma}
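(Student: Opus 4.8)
The plan is to combine the previous theorem, \autoref{thmfincomprob}, with the elementary tail bound in \autoref{azuma} applied columnwise. The key observation is that \autoref{thmfincomprob} already gives finite completability with probability at least $1-\epsilon$ provided the deterministic structural assumptions \eqref{assum1}--\eqref{assum4} hold and, in addition, \emph{every} column of the sampling pattern $\mathbf{\Omega}$ contains at least $l$ observed entries, where $l$ is the quantity on the right-hand side of \eqref{minl2}. So the only thing left to do is to bound the probability that, under i.i.d.\ Bernoulli($p$) sampling of the entries of $\mathbf{U}$, all $m_1+m_2$ columns simultaneously meet this per-column threshold.

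First I would set $k = l$ with $l$ equal to the value in \eqref{minl2} (note the right-hand side of \eqref{minl2} is exactly $n$ times the first part of the right-hand side of \eqref{minl2prime} minus the $1/\sqrt[4]{n}$ term, so that the hypothesis \eqref{minl2prime} reads precisely $p > \frac{l}{n} + \frac{1}{\sqrt[4]{n}} = p^{\prime}$). Each column of $\mathbf{U}$ is a vector of $n$ entries observed independently with probability $p$, so \autoref{azuma} applies to each column: with probability at least $1-\exp(-\frac{\sqrt{n}}{2})$ that column has more than $l$ observed entries, i.e.\ at least $l$ (in fact at least $l+1$, which certainly suffices). Since the $m_1+m_2$ columns are sampled independently of one another, the probability that all of them simultaneously have at least $l$ observed entries is at least $\left(1-\exp(-\frac{\sqrt{n}}{2})\right)^{m_1+m_2}$.

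Then I would condition on this event. On it, each column of $\mathbf{\Omega}$ has at least $l$ nonzero entries, so the hypothesis ``each column of $\mathbf{\Omega}$ includes at least $l$ nonzero entries'' of \autoref{thmfincomprob} is met; the remaining hypotheses \eqref{assum1}--\eqref{assum4} are assumed outright in the statement. Hence \autoref{thmfincomprob} yields that, conditional on this event, $\mathbf{U}$ is finitely completable with probability at least $1-\epsilon$. Multiplying the two probabilities (legitimate because the randomness in \autoref{thmfincomprob} is exactly the randomness of the realized pattern $\mathbf{\Omega}$ together with the generic choice of $\mathbf{U}$, both of which are what we are conditioning the later bound on) gives the claimed lower bound $(1-\epsilon)\left(1-\exp(-\frac{\sqrt{n}}{2})\right)^{m_1+m_2}$.

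The only mildly delicate point — and the one I would be most careful about — is the bookkeeping of what probability space each statement lives on: \autoref{azuma} is about the Bernoulli sampling, while \autoref{thmfincomprob} is a statement ``with probability at least $1-\epsilon$'' over the sampling pattern conditioned on the structural assumptions. One must phrase the argument so that the event $\{$all columns have $\geq l$ samples$\}$ is a deterministic event about $\mathbf{\Omega}$, invoke \autoref{thmfincomprob} on the sub-event where it holds, and then take the product; there is no genuine obstacle here, just the need to keep the union-bound/product-bound structure clean. No new ideas are required beyond these two previously established lemmas.
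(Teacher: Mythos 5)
Your proposal is correct and follows essentially the same route as the paper, which likewise applies Lemma~\ref{azuma} to each of the $m_1+m_2$ columns to guarantee the per-column sample count in \eqref{minl2} with probability at least $\left(1-\exp(-\frac{\sqrt{n}}{2})\right)^{m_1+m_2}$ and then invokes Theorem~\ref{thmfincomprob}. Your added care about which probability space each bound lives on is a reasonable elaboration of the paper's one-line argument, not a departure from it.
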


\begin{proof}
Note that according to lemma \ref{azuma}, the number of observed entries of each of the $m_1+m_2$ columns satisfies \eqref{minl2} with probability at least $\left(1-\exp(-\frac{\sqrt{n}}{2})\right)$. Hence, the proof is straight-forward using Theorem \ref{thmfincomprob}.
\end{proof}


\section{Deterministic and Probabilistic Guarantees for Unique Completability}\label{uniqdetprosec}

Theorem \ref{finitecomplthm} gives the necessary and sufficient condition on sampling pattern for finite completability. Hence, even one sample short of the condition in Theorem \ref{finitecomplthm} results in infinite number of completions with probability one. However, as we showed in an example in \cite{ashraphijuo}, finite completability can be different from unique completability. We show that adding a mild condition to the conditions obtained in the analysis for Problem (i) leads to unique completability. To this end, we obtain multiple sets of minimally algebraically dependent polynomials and show that the variables involved in these polynomials can be determined uniquely, and therefore entries of $\mathbf{U}$ can be determined uniquely. 

 Recall that there exists at least one completion of $\mathbf{U}$ since the original  multi-view matrix that is sampled satisfies the rank constraints. The following lemma is a re-statement of Lemma $9$ in \cite{ashraphijuo}.

\begin{lemma}\label{minimaldeppol}
Assume that Assumption $1$ holds. Let $\mathbf{\breve{\Omega}}^{\prime}$ be an arbitrary subset of columns of the constraint matrix $\mathbf{\breve{\Omega}}$. Assume that polynomials in $\mathcal{P}(\mathbf{\breve{\Omega}}^{\prime})$ are minimally algebraically dependent. Then, all variables (unknown entries) of $\mathbf{V}$ that are involved in $\mathcal{P}(\mathbf{\breve{\Omega}}^{\prime})$ can be determined uniquely.
\end{lemma}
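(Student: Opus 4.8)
The plan is to combine Lemma~\ref{minimaldeppolnew} with Fact~$3$, following the same skeleton as the proof of Lemma~$9$ in \cite{ashraphijuo} but carried out for the multi-view polynomials in $\mathcal{P}(\mathbf{\breve{\Omega}}^{\prime})$. First I would invoke Lemma~\ref{minimaldeppolnew}: since the polynomials in $\mathcal{P}(\mathbf{\breve{\Omega}}^{\prime})$ are minimally algebraically dependent, the number of variables (unknown entries of $\mathbf{V}$) appearing in these polynomials is exactly $c(\mathbf{\breve{\Omega}}^{\prime})-1$. On the other hand, by construction $\mathcal{P}(\mathbf{\breve{\Omega}}^{\prime})$ consists of exactly $c(\mathbf{\breve{\Omega}}^{\prime})$ polynomials, and every variable occurring in any of them is counted above, so, restricted to the involved variables, $\mathcal{P}(\mathbf{\breve{\Omega}}^{\prime})$ is a self-contained system of $c(\mathbf{\breve{\Omega}}^{\prime})$ polynomials in $c(\mathbf{\breve{\Omega}}^{\prime})-1$ variables.

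Next I would observe that this system has at least one solution: the canonical basis $\mathbf{V}$ associated (via Proposition~\ref{propotwo}) with the actual sampled matrix $\mathbf{U}$ satisfies every polynomial in $\mathcal{P}(\mathbf{\breve{\Omega}})$ — each such polynomial merely encodes an observed entry of $\mathbf{U}$ after the substitution of $(\mathbf{T}_1,\mathbf{T}_2)$ from Lemma~\ref{coefmat} — and in particular those in $\mathcal{P}(\mathbf{\breve{\Omega}}^{\prime})$. By the second part of Fact~$3$ (a consequence of Bernstein's theorem under the genericity assumption), a system of $N$ polynomials in $N-1$ or fewer variables whose coefficients are drawn generically and which possesses a solution has, with probability one, a \emph{unique} solution. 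Applying this with $N=c(\mathbf{\breve{\Omega}}^{\prime})$ shows that the values of all variables of $\mathbf{V}$ involved in $\mathcal{P}(\mathbf{\breve{\Omega}}^{\prime})$ are pinned down uniquely. A union bound over the finitely many subsets of columns of $\mathbf{\breve{\Omega}}$ makes this hold simultaneously for every admissible $\mathbf{\breve{\Omega}}^{\prime}$ with probability one.

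The delicate step, and the one I expect to require the most care, is justifying that Fact~$3$ genuinely applies to $\mathcal{P}(\mathbf{\breve{\Omega}}^{\prime})$. After the $m_1 r_1 + m_2 r_2$ degree-one equations are used to eliminate $\mathbf{T}_1,\mathbf{T}_2$ and substitute them back, the remaining polynomials in the entries of $\mathbf{V}$ carry coefficients that are rational functions of the observed entries of $\mathbf{U}$; one must argue that, because $\mathbf{U}$ is generic with respect to $\mathbb{P}_{\mathcal{M}}\times\mathbb{P}_L$, these induced coefficients remain generic in the sense needed — no accidental cancellation of monomials occurs and the relevant mixed-volume/Newton-polytope nondegeneracy holds — so that the Bernstein-type conclusion about unique solvability is valid. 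This genericity bookkeeping is exactly what differs from the single-view setting of \cite{charact} and must be redone against the basis equivalence of Definition~\ref{jmfnk} and the polynomial structure described in Facts~$1$ and~$2$; everything else in the argument is a direct transcription of the single-constraint case.
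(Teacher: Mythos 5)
Your argument is correct and matches the intended reasoning: the paper itself gives no independent proof, presenting the lemma as a restatement of Lemma~$9$ in \cite{ashraphijuo}, and your reconstruction via Lemma~\ref{minimaldeppolnew} (the minimally dependent set of $c(\mathbf{\breve{\Omega}}^{\prime})$ polynomials involves exactly $c(\mathbf{\breve{\Omega}}^{\prime})-1$ variables) combined with the second part of Fact~$3$ and the existence of the true solution is precisely the argument being invoked.
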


Theorem \ref{thmuniq} below gives a sufficient conditions on sampling pattern for unique completability. To be more specific, condition (i) in the statement of Theorem \ref{thmuniq}, i.e., $nr -r^2 -r_1^2 -r_2^2 + r(r_1 +r_2)$ algebraically independent polynomials in terms of the entries of $\mathbf{V}$, results in finite completability. Hence, adding any single polynomial to them results in a set of algebraically dependent polynomials and using Lemma \ref{minimaldeppol} some of the entries of basis $\mathbf{V}$ can be determined uniquely. Then, conditions (ii) and (iii) result in more polynomials such that all entries of $\mathbf{V}$ can be determined uniquely.



\begin{theorem}\label{thmuniq}
Suppose that Assumption $1$ holds. Moreover assume that there exist disjoint subsets of columns $\mathbf{\breve{\Omega}}^{\prime} \in \mathbb{R}^{n \times m}$, $\mathbf{\breve{\Omega}}^{\prime}_1 \in \mathbb{R}^{n \times m^{\prime}}$ and $\mathbf{\breve{\Omega}}^{\prime}_2 \in \mathbb{R}^{n \times m^{\prime \prime}}$ of the constraint matrix $\mathbf{\breve{\Omega}}$ such that the following properties hold

(i) $m= nr -r^2 -r_1^2 -r_2^2 + r(r_1 +r_2)$ and for any subset of columns $\mathbf{\breve{\Omega}}^{\prime \prime}$ of the matrix $\mathbf{\breve{\Omega}}^{\prime}$, \eqref{charfinitecom} holds.

(ii) $\mathbf{\breve{\Omega}}^{\prime}_1$ is a subset of columns of $\mathbf{\breve{\Omega}}_1$ (constraint matrix of the first view), $m^{\prime}= n- r_1$ and for any subset of columns $\mathbf{\breve{\Omega}}^{\prime \prime}_1$ of the matrix $\mathbf{\breve{\Omega}}^{\prime}_1$
\begin{eqnarray}\label{conuni1}
g(\mathbf{\breve{\Omega}}^{\prime \prime}_1) - r_1 \geq c(\mathbf{\breve{\Omega}}^{\prime \prime}_1).
\end{eqnarray}

(iii) $\mathbf{\breve{\Omega}}^{\prime}_2$ is a subset of columns of $\mathbf{\breve{\Omega}}_2$ (constraint matrix of the first view), $m^{\prime \prime}= n- r_2$ and for any subset of columns $\mathbf{\breve{\Omega}}^{\prime \prime}_2$ of the matrix $\mathbf{\breve{\Omega}}^{\prime}_2$
\begin{eqnarray}\label{conuni2}
g(\mathbf{\breve{\Omega}}^{\prime \prime}_2) - r_2 \geq c(\mathbf{\breve{\Omega}}^{\prime \prime}_2).
\end{eqnarray}

Then, with probability one, there exists exactly one completion of $\mathbf{U}$ that satisfies the rank constraints.
\end{theorem}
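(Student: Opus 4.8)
The plan is to reduce unique completability of $\mathbf{U}$ to unique determination of the canonical basis $\mathbf{V}$, and then to pin down every free entry of $\mathbf{V}$ using a family of minimally algebraically dependent subsystems built from the three disjoint column sets in the hypothesis. \textbf{Reduction.} By Lemma~\ref{coefmat}, under Assumption~$1$ the pair $(\mathbf{T}_1,\mathbf{T}_2)$ is uniquely determined once $\mathbf{V}$ is fixed, and by Proposition~\ref{propotwo} every completion of $\mathbf{U}$ corresponds to a unique canonical basis; hence it suffices to show that, with probability one, at most one canonical basis fits $\mathbf{U}$, i.e., that the $m=nr-r^2-r_1^2-r_2^2+r(r_1+r_2)$ free entries of $\mathbf{V}$ (those not pinned by the patterns $\mathbf{B}_1,\dots,\mathbf{B}_5$) are all uniquely determined by the observed entries. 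Since the sampled $\mathbf{U}$ is drawn from an actual point of $\mathcal{M}(r,r_1,r_2,\mathbb{R}^n)$, at least one fitting canonical basis $\mathbf{V}^\star$ exists, so the goal becomes: every fitting $\mathbf{V}$ agrees with $\mathbf{V}^\star$.

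\textbf{A maximal independent system.} By property (i) and Theorem~\ref{finitecomplthm} (through Theorem~\ref{thmdimbas}), the $m$ polynomials of $\mathcal{P}(\mathbf{\breve{\Omega}}^{\prime})$ are algebraically independent; as $\mathbf{V}$ has exactly $m$ free entries, $\mathcal{P}(\mathbf{\breve{\Omega}}^{\prime})$ is a maximal algebraically independent set. Because $\mathbf{\breve{\Omega}}^{\prime}$, $\mathbf{\breve{\Omega}}^{\prime}_1$, $\mathbf{\breve{\Omega}}^{\prime}_2$ are pairwise disjoint, for any column $\omega$ of $\mathbf{\breve{\Omega}}^{\prime}_1$ or $\mathbf{\breve{\Omega}}^{\prime}_2$ the associated polynomial $p_\omega$ lies outside $\mathcal{P}(\mathbf{\breve{\Omega}}^{\prime})$, so $\mathcal{P}(\mathbf{\breve{\Omega}}^{\prime})\cup\{p_\omega\}$ ($m+1$ polynomials involving at most $m$ variables) is algebraically dependent by Fact~$3$; since $\mathcal{P}(\mathbf{\breve{\Omega}}^{\prime})$ is independent, any minimal algebraically dependent subset $\mathcal{Q}_\omega\subseteq\mathcal{P}(\mathbf{\breve{\Omega}}^{\prime})\cup\{p_\omega\}$ must contain $p_\omega$. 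Lemma~\ref{minimaldeppol} then says that all free entries of $\mathbf{V}$ occurring in $\mathcal{Q}_\omega$ — hence those occurring in $p_\omega$ — are uniquely determined with probability one; and since $\mathbf{V}^\star$ satisfies $\mathcal{Q}_\omega$ while, by Lemma~\ref{minimaldeppolnew} and Fact~$3$, $\mathcal{Q}_\omega$ has a unique solution, every fitting $\mathbf{V}$ must agree with $\mathbf{V}^\star$ on exactly those entries.

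\textbf{Covering all free entries.} It remains to prove that, as $\omega$ ranges over the columns of $\mathbf{\breve{\Omega}}^{\prime}_1$ and $\mathbf{\breve{\Omega}}^{\prime}_2$, the free entries appearing in the polynomials $p_\omega$ exhaust all $m$ free entries of $\mathbf{V}$. Here properties (ii) and (iii) are used: (ii) says $\mathbf{\breve{\Omega}}^{\prime}_1$ has $n-r_1$ columns with $g(\mathbf{\breve{\Omega}}^{\prime\prime}_1)-r_1\ge c(\mathbf{\breve{\Omega}}^{\prime\prime}_1)$ for every subset $\mathbf{\breve{\Omega}}^{\prime\prime}_1$, which is exactly the single-view pattern condition for the rank-$r_1$ matrix $\mathbf{U}_1$ with basis $[\mathbf{V}_1|\mathbf{V}_2]$. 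Invoking the Hall-type matching argument behind Lemma~\ref{graphgen} (as developed in \cite{charact} and \cite{ashraphijuo}), one extracts from (ii) a system of distinct representatives matching the $n-r_1$ columns of $\mathbf{\breve{\Omega}}^{\prime}_1$ to $n-r_1$ distinct rows which, together with the $r_1$ rows of $[\mathbf{V}_1|\mathbf{V}_2]$ fixed by the canonical pattern (up to the row permutation allowed by the remark after Proposition~\ref{propotwo}), comprise all $n$ rows; consequently the polynomials $p_\omega$ with $\omega\in\mathbf{\breve{\Omega}}^{\prime}_1$ collectively involve every entry of $[\mathbf{V}_1|\mathbf{V}_2]$ and hence all free entries of $\mathbf{V}_1$ and of $\mathbf{V}_2$. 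Symmetrically, (iii) forces the polynomials $p_\omega$ with $\omega\in\mathbf{\breve{\Omega}}^{\prime}_2$ to involve all free entries of $\mathbf{V}_3$. A union bound over the finitely many columns $\omega$ keeps all determinations of the previous paragraph simultaneously valid with probability one; then any fitting canonical basis equals $\mathbf{V}^\star$ in every free entry, so $\mathbf{U}$ has exactly one completion.

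\textbf{Main obstacle.} The hard part is the covering step: showing the minimal dependent sets $\mathcal{Q}_\omega$ jointly reach \emph{every} one of the $m$ free entries. This requires adapting the single-view matching lemma to the canonical multi-view structure, carefully handling the block $\mathbf{V}_2$ shared by the two views (its free entries are reached through the first-view polynomials but must not be over-counted against the first-view budget $r_1^{\prime}(n-r_1)$, nor conflated with the second-view budget), and verifying that the subset inequalities in (ii) and (iii) genuinely yield the needed systems of distinct representatives. Everything else is repeated application of Lemmas~\ref{minimaldeppolnew} and~\ref{minimaldeppol}, Fact~$3$, and bookkeeping of the disjoint column budgets.
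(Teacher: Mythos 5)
Your proposal is correct and follows essentially the same route as the paper: use property (i) to obtain a maximal algebraically independent set $\mathcal{P}(\mathbf{\breve{\Omega}}^{\prime})$, adjoin each polynomial from the disjoint sets in (ii) and (iii) to form minimally algebraically dependent subsets, invoke Lemma~\ref{minimaldeppol} to determine the involved entries of $\mathbf{V}$ uniquely, and argue that these polynomials cover all free entries. Your covering step via a system of distinct representatives is more elaborate than necessary (and the claim that the matched rows avoid the canonical rows is not automatic): applying \eqref{conuni1} to the full set $\mathbf{\breve{\Omega}}^{\prime}_1$ already gives $g(\mathbf{\breve{\Omega}}^{\prime}_1)\geq (n-r_1)+r_1=n$, so every row of $[\mathbf{V}_1|\mathbf{V}_2]$ is touched by some $p_\omega$, which is all the coverage you need.
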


\begin{proof}
According to Theorem \ref{finitecomplthm}, property (i) results that there are only finitely many completions of $\mathbf{U}$ that satisfy the rank constraints. We show that having properties (ii) and (iii) results in obtaining all entries of the basis uniquely, and therefore there exists only one completion of $\mathbf{U}$. According to Theorem \ref{finitecomplthm}, the $nr -r^2 -r_1^2 -r_2^2 + r(r_1 +r_2)$ polynomials in $\mathcal{P}(\mathbf{\breve{\Omega}}^{\prime})$ are algebraically independent. As a result, by adding any single polynomial to this set, we will have a set of algebraically dependent polynomials. 

Consider a single polynomial from $\mathcal{P}(\mathbf{\breve{\Omega}}^{\prime}_1) \cup \mathcal{P}(\mathbf{\breve{\Omega}}^{\prime}_2)$ and denote it by $p_0$. Hence, polynomials in set $p_0 \cup \mathcal{P}(\mathbf{\breve{\Omega}}^{\prime})$ are algebraically dependent, and therefore there exists $\mathcal{P}^{\prime} (p_0) \subseteq \{p_0 \cup \mathcal{P}(\mathbf{\breve{\Omega}}^{\prime})\}$ such that $p_0 \in \mathcal{P}^{\prime}(p_0)$ and polynomials in $\mathcal{P}^{\prime}(p_0)$ are minimally algebraically dependent. Lemma \ref{minimaldeppol} results that all variables involved in polynomials in $\mathcal{P}^{\prime}(p_0)$ can be determined uniquely. The number entries of $\mathbf{V}$ that are involved in $\mathcal{P}^{\prime}(p_0)$ is at least $r_1$ if $p_0 \in \mathcal{P}(\mathbf{\breve{\Omega}}^{\prime}_1)$ and $r_2$ if $p_0 \in \mathcal{P}(\mathbf{\breve{\Omega}}^{\prime}_2)$. This is because the number of entries of $\mathbf{V}$ that are involved in polynomials in $\mathcal{P}^{\prime}(p_0)$ is at least equal to the number of entries of $\mathbf{V}$ that are involved in $p_0$. Hence, $\mathcal{P}^{\prime}(p_0)$ results in $r_1$ or $r_2$ polynomials that each has a unique solution.

Similarly, consider any other polynomial $p_1$ in $\mathcal{P}(\mathbf{\breve{\Omega}}^{\prime}_1) \cup \mathcal{P}(\mathbf{\breve{\Omega}}^{\prime}_2)$ and note that polynomials in set $p_1 \cup \mathcal{P}(\mathbf{\breve{\Omega}}^{\prime})$ are algebraically dependent. Hence, we can repeat the above procedure for $p_0$ for polynomial $p_1$. Repeating this procedure for any subset of polynomials in $\mathcal{P}(\mathbf{\breve{\Omega}}^{\prime \prime}_1) \cup \mathcal{P}(\mathbf{\breve{\Omega}}^{\prime \prime}_2) \subseteq \mathcal{P}(\mathbf{\breve{\Omega}}^{\prime}_1) \cup \mathcal{P}(\mathbf{\breve{\Omega}}^{\prime}_2)$ results in $r_1^{\prime} (g(\mathbf{{\breve{\Omega}}}^{\prime \prime}_1)-r_1)^+ + r_2^{\prime} (g(\mathbf{{\breve{\Omega}}}^{\prime \prime}_2)-r_2)^+ + r^{\prime} (g(\mathbf{{\breve{\Omega}}}^{\prime \prime}_3)-r^{\prime})^+$ polynomials (as this is the number of unknown entries involved in the polynomials $\mathcal{P}(\mathbf{\breve{\Omega}}^{\prime}_1) \cup \mathcal{P}(\mathbf{\breve{\Omega}}^{\prime}_2)$) and observe that \eqref{conuni1} and \eqref{conuni2} result that the number of involved unknown entries of basis is not less than the number of polynomials, and therefore they are independent. Moreover, observe that $\mathbf{\breve{\Omega}}^{\prime}_1$ and $\mathbf{\breve{\Omega}}^{\prime}_2$ are such that polynomials obtained via this procedure cover all entries of basis. Therefore, all entries of basis can be determined uniquely with probability one.
\end{proof}

The next theorem gives a probabilistic guarantee for satisfying the conditions in the statement of Theorem \ref{thmuniq} or in other words, a probabilistic guarantee for unique completability. However, similar to Theorem \ref{thmfincomprob}, the condition on sampling pattern is in terms of the number of samples per column instead of the complicated conditions in the statement of Theorem \ref{thmuniq} on the structure of sampling pattern.

\begin{theorem}\label{thmunicomprob}
Assume that the following inequalities hold 
\begin{eqnarray}
\frac{n}{6} &\geq & \max\{r_1,r_2,r^{\prime}\}, \label{assumuni1} \\ 
m_{1} &\geq & (r_1^{\prime}+1)(n-r_1), \label{assumuni2} \\
m_{2} &\geq & (r_2^{\prime}+1)(n-r_2), \label{assumuni3} \\
m_1+m_2 &\geq & (r_1^{\prime}+1)(n-r_1) + (r_2^{\prime}+1)(n-r_2) + r^{\prime}(n-r^{\prime}). \label{assumuni4}
\end{eqnarray}
Moreover, assume that each column of $\mathbf{\Omega}$ includes at least $l$ nonzero entries, where 
\begin{eqnarray}\label{minluni2}
l > \max\left\{9 \ \log \left( \frac{n}{\epsilon} \right) + 3 \ \max\left\{\log \left( \frac{6r_1^{\prime}}{\epsilon}\right), \log \left( \frac{6r_2^{\prime}}{\epsilon}\right), \log \left( \frac{6r^{\prime}}{\epsilon} \right)\right\} + 6, 2r_1, 2r_2\right\}. 
\end{eqnarray}
Then, with probability at least $1-\epsilon$, there exists exactly one completion of $\mathbf{U}$.
\end{theorem}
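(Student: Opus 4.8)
The plan is to mirror the structure of the proof of Theorem~\ref{thmfincomprob}, but now chasing the stronger conditions (i)--(iii) of Theorem~\ref{thmuniq} instead of the finite-completability condition alone. The strategy is to build three disjoint blocks of columns of $\mathbf{\Omega}$ whose associated constraint-matrix columns will play the roles of $\mathbf{\breve{\Omega}}^{\prime}$, $\mathbf{\breve{\Omega}}^{\prime}_1$, and $\mathbf{\breve{\Omega}}^{\prime}_2$ in Theorem~\ref{thmuniq}. First I would check that the per-column sampling bound \eqref{minluni2} implies \eqref{minl1} with the parameter $k$ replaced by $6r_1^{\prime}$ (and $r^{\prime\prime}$ by $r_1$), and likewise with $6r_2^{\prime}$, $r_2$ and with $6r^{\prime}$, $r^{\prime}$; the factor $6$ (rather than $3$ as in Theorem~\ref{thmfincomprob}) is precisely what buys the extra union-bound room needed to accommodate one additional disjoint set of $n-r_1$ columns for the first view and one additional disjoint set of $n-r_2$ columns for the second view, so that the failure probabilities still sum to at most $\epsilon$.

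Next I would invoke Lemma~\ref{fincomsampprob} and Lemma~\ref{graphgen} exactly as in the proof of Theorem~\ref{thmfincomprob}. Assumption \eqref{assumuni2}, namely $m_1 \geq (r_1^{\prime}+1)(n-r_1)$, provides $r_1^{\prime}+1$ disjoint sets of $n-r_1$ columns of $\mathbf{\Omega}_1$; reserving $r_1^{\prime}$ of them and taking their union yields (with probability at least $1-\tfrac{\epsilon}{3}$, say) a set $\mathbf{\breve{\bar\Omega}}_1$ of columns of the constraint matrix satisfying the aggregated inequality \eqref{property1}, while the one leftover set of $n-r_1$ columns furnishes $\mathbf{\breve{\Omega}}^{\prime}_1$ satisfying \eqref{conuni1} by the same Lemma~\ref{fincomsampprob}/Lemma~\ref{graphgen} argument with $R=r_1$. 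Symmetrically, \eqref{assumuni3} gives $\mathbf{\breve{\bar\Omega}}_2$ satisfying \eqref{property2} plus a leftover $\mathbf{\breve{\Omega}}^{\prime}_2$ satisfying \eqref{conuni2}, and \eqref{assumuni4} (which exceeds \eqref{assum4} by exactly the two extra $(n-r_1)$ and $(n-r_2)$ blocks) leaves room to also carve out a set $\mathbf{\breve{\bar\Omega}}$ of $r^{\prime}(n-r^{\prime})$ columns, disjoint from everything above, satisfying \eqref{property3}. Setting $\mathbf{\breve{\Omega}}^{\prime} = [\mathbf{\breve{\bar\Omega}}_1 \,|\, \mathbf{\breve{\bar\Omega}}_2 \,|\, \mathbf{\breve{\bar\Omega}}]$ I would then verify, column-subset by column-subset and splitting into the three parts as in the last display of the proof of Theorem~\ref{thmfincomprob}, that \eqref{charfinitecom} holds, so condition (i) of Theorem~\ref{thmuniq} is met; $m = r_1^{\prime}(n-r_1)+r_2^{\prime}(n-r_2)+r^{\prime}(n-r^{\prime}) = nr-r^2-r_1^2-r_2^2+r(r_1+r_2)$ as required.

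Finally I would collect the pieces: the three blocks $\mathbf{\breve{\Omega}}^{\prime}$, $\mathbf{\breve{\Omega}}^{\prime}_1$, $\mathbf{\breve{\Omega}}^{\prime}_2$ are disjoint by construction, $\mathbf{\breve{\Omega}}^{\prime}_1$ is a subset of $\mathbf{\breve{\Omega}}_1$ with $n-r_1$ columns satisfying \eqref{conuni1}, and $\mathbf{\breve{\Omega}}^{\prime}_2$ is a subset of $\mathbf{\breve{\Omega}}_2$ with $n-r_2$ columns satisfying \eqref{conuni2}; hence all three hypotheses of Theorem~\ref{thmuniq} hold, and by that theorem $\mathbf{U}$ is uniquely completable with probability one, conditioned on the good event. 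A union bound over the (at most) six bad events---$r_1^{\prime}$ sets for the first view, one extra set for $\mathbf{\breve{\Omega}}^{\prime}_1$, $r_2^{\prime}$ sets and one extra set for the second view, and $r^{\prime}$ sets for the joint block, each grouped so its total failure probability is at most $\tfrac{\epsilon}{3}$ via the $6r_i^{\prime}$ denominators---bounds the overall failure probability by $\epsilon$, giving success probability at least $1-\epsilon$. The main obstacle I anticipate is purely bookkeeping rather than conceptual: making the union-bound allocation transparent, i.e. showing that \eqref{minluni2} with its factor-$6$ logarithms exactly supports splitting the $\epsilon$ budget across the enlarged collection of disjoint column sets, and carefully arguing disjointness so that Lemma~\ref{fincomsampprob}'s independence-free statement can be applied to each set separately and the events intersected.
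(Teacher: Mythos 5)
Your proposal is correct and follows essentially the same route as the paper's proof: reuse the construction from Theorem~\ref{thmfincomprob} (with the factor-$6$ logarithms in \eqref{minluni2} supplying the tighter per-set failure probabilities) to secure condition (i) of Theorem~\ref{thmuniq}, use the ``$+1$'' slack in \eqref{assumuni2}--\eqref{assumuni4} to carve out the extra disjoint sets of $n-r_1$ and $n-r_2$ columns for conditions (ii) and (iii) via Lemma~\ref{fincomsampprob}, and close with a union bound. The paper allocates the failure budget as $\tfrac{\epsilon}{2}+\tfrac{\epsilon}{6}+\tfrac{\epsilon}{6}$ rather than your three groups of $\tfrac{\epsilon}{3}$, but both accountings are valid and the argument is otherwise the same.
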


\begin{proof}
According to the proof of Theorem \ref{thmfincomprob}, \eqref{minluni2} results that there exists a subset of columns $\mathbf{\breve{\Omega}}^{\prime} \in \mathbb{R}^{n \times m}$ of the constraint matrix $\mathbf{\breve{\Omega}}$ such that condition (i) in the statement of Theorem \ref{thmuniq} is satisfied, with probability at least $1-\frac{\epsilon}{2}$. Then, assumptions \eqref{assumuni2}, \eqref{assumuni3} and \eqref{assumuni4} result that there exist $n-r_1$ columns $\mathbf{\breve{\Omega}^{\prime}}_1$ of $\mathbf{\breve{\Omega}}_1$ and $n-r_2$ columns $\mathbf{\breve{\Omega}^{\prime}}_2$ of $\mathbf{\breve{\Omega}}_2$ that are disjoint from $\mathbf{\breve{\Omega}}^{\prime}$. This is easily verified by comparing assumptions \eqref{assumuni2}, \eqref{assumuni3} and assumptions \eqref{assum2}, \eqref{assum3} in Theorem \ref{thmfincomprob}. 

Note that according to Lemma \ref{fincomsampprob}, \eqref{minluni2} results that $\mathbf{\breve{\Omega}}^{\prime}_1$ satisfies condition (ii) in the statement of Theorem \ref{thmuniq} with probability at least $1-\frac{\epsilon}{6}$. Similarly, \eqref{minluni2} results that $\mathbf{\breve{\Omega}}^{\prime}_2$ satisfies condition (iii) in the statement of Theorem \ref{thmuniq} with probability at least $1-\frac{\epsilon}{6}$. Therefore, all conditions in the statement of Theorem \ref{thmuniq} are satisfied simultaneously with probability at least $1 - \frac{\epsilon}{2} - \frac{\epsilon}{6} - \frac{\epsilon}{6}$. Hence, according to Theorem \ref{thmuniq}, there exists only one completion of $\mathbf{U}$ with probability at least $1-\epsilon$.
\end{proof}

\begin{remark}
Comparing assumptions \eqref{assum2}--\eqref{assum4} for finite completability with assumptions \eqref{assumuni2}--\eqref{assumuni4} for unique completability, we see there is a mild change, i.e., $r_i$ for finiteness is replaced by $r_i +1$ for uniqueness.

Moreover, the lower bound on the number of samples per column increases mildly from \eqref{minl2} for finiteness to \eqref{minluni2} for uniqueness, i.e., the factor $3$ in the log terms in \eqref{minl2} become $6$ in \eqref{minluni2}.
\end{remark}

\begin{lemma}\label{thmunicomprobsamppro}
Assume that the inequalities \eqref{assumuni1}--\eqref{assumuni4} hold. Moreover, assume that each entry of $\mathbf{U}$ is independently observed with probability $p$, where
\begin{eqnarray}\label{minl2primeuni}
p > \frac{1}{n}\max\left\{9 \ \log \left( \frac{n}{\epsilon} \right) + 3 \ \max\left\{\log \left( \frac{6r_1^{\prime}}{\epsilon}\right), \log \left( \frac{6r_2^{\prime}}{\epsilon}\right), \log \left( \frac{6r^{\prime}}{\epsilon} \right)\right\} + 6, 2r_1, 2r_2\right\} + \frac{1}{\sqrt[4]{n}}. 
\end{eqnarray}
Then, with probability at least $(1-\epsilon)\left(1-\exp(-\frac{\sqrt{n}}{2})\right)^{m_1+m_2}$, $\mathbf{U}$ is uniquely completable.
\end{lemma}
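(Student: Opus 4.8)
The plan is to combine Lemma \ref{azuma} with Theorem \ref{thmunicomprob} in exactly the same way that Lemma \ref{thmfincomprobsamppro} combines Lemma \ref{azuma} with Theorem \ref{thmfincomprob}. First I would observe that the hypotheses \eqref{assumuni1}--\eqref{assumuni4} of Theorem \ref{thmunicomprob} are assumed here verbatim, so the only gap to bridge is the passage from a per-entry sampling probability $p$ to the deterministic per-column lower bound $l$ required in \eqref{minluni2}. Set $k$ to be the integer part of the right-hand side of \eqref{minluni2} minus the $\tfrac{1}{\sqrt[4]{n}}$ term multiplied by $n$; concretely, let
\begin{eqnarray}
k = \max\left\{9 \ \log \left( \frac{n}{\epsilon} \right) + 3 \ \max\left\{\log \left( \frac{6r_1^{\prime}}{\epsilon}\right), \log \left( \frac{6r_2^{\prime}}{\epsilon}\right), \log \left( \frac{6r^{\prime}}{\epsilon} \right)\right\} + 6, 2r_1, 2r_2\right\}. \nonumber
\end{eqnarray}
Then the hypothesis \eqref{minl2primeuni} says precisely that $p > \tfrac{k}{n} + \tfrac{1}{\sqrt[4]{n}}$, which is the condition $p > p'$ in Lemma \ref{azuma}.

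Next I would apply Lemma \ref{azuma} to each of the $m_1+m_2$ columns of $\mathbf{\Omega}$ separately. Each column is a vector of $n$ entries sampled independently with probability $p$, so Lemma \ref{azuma} guarantees that a given column has more than $k$ observed entries with probability at least $1-\exp(-\tfrac{\sqrt{n}}{2})$; equivalently, that column satisfies the per-column bound \eqref{minluni2}. Since the columns are sampled independently of one another, the event that \emph{all} $m_1+m_2$ columns simultaneously have at least $l$ observed entries (with $l$ as in \eqref{minluni2}) has probability at least $\left(1-\exp(-\tfrac{\sqrt{n}}{2})\right)^{m_1+m_2}$.

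Finally, condition on that event. The sampling pattern $\mathbf{\Omega}$ now satisfies \eqref{minluni2} deterministically, and \eqref{assumuni1}--\eqref{assumuni4} hold by hypothesis, so all the assumptions of Theorem \ref{thmunicomprob} are met; that theorem then yields a unique completion of $\mathbf{U}$ with (conditional) probability at least $1-\epsilon$. Multiplying the two probabilities — the $1-\epsilon$ from Theorem \ref{thmunicomprob} and the $\left(1-\exp(-\tfrac{\sqrt{n}}{2})\right)^{m_1+m_2}$ from the union of the per-column events — gives the claimed bound $(1-\epsilon)\left(1-\exp(-\tfrac{\sqrt{n}}{2})\right)^{m_1+m_2}$. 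There is no real obstacle here: the argument is a routine ``probability-to-deterministic'' reduction, and the only point requiring a line of care is checking that the quantity $k$ extracted from \eqref{minl2primeuni} is exactly the per-column threshold appearing in \eqref{minluni2}, so that Lemma \ref{azuma} and Theorem \ref{thmunicomprob} dovetail without loss. This mirrors the proof of Lemma \ref{thmfincomprobsamppro} almost word for word, with the finiteness ingredients replaced by their uniqueness counterparts.
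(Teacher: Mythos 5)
Your proposal is correct and follows essentially the same route as the paper: the paper's proof is a two-line appeal to Lemma \ref{azuma} (to convert the per-entry probability $p$ into the per-column count \eqref{minluni2} holding for all $m_1+m_2$ columns with probability at least $\left(1-\exp(-\frac{\sqrt{n}}{2})\right)^{m_1+m_2}$) followed by Theorem \ref{thmunicomprob}, exactly mirroring Lemma \ref{thmfincomprobsamppro}. You have simply written out the same reduction in more detail.
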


\begin{proof}
Note that according to lemma \ref{azuma}, the number of observed entries of each of the $m_1+m_2$ columns satisfies \eqref{minluni2} with probability at least $\left(1-\exp(-\frac{\sqrt{n}}{2})\right)$. Hence, the proof is straight-forward using Theorem \ref{thmunicomprob}.
\end{proof}

\section{Numerical Comparisons}\label{simusecn}

Here we compare the lower bound on the number of samples per column obtained by the proposed analysis in this paper with the bound obtained by the method in \cite{charact}. Recall that the existing method on Grassmannian manifold in \cite{charact} provides a bound on the number of samples for finite completability for a matrix $\mathbf{U}$ given $\text{rank}(\mathbf{U})=r$. As explained before this analysis cannot be used to treat the multi-view problem since it is not capable of incorporating the three rank constraints simultaneously. However, if we obtain the bound in \cite{charact} corresponding to  $\mathbf{U}$, $\mathbf{U}_1$ and $\mathbf{U}_2$ respectively and then take the maximum of them, it results in the following bound on the number of samples for finite completability 
\begin{eqnarray}\label{minl2org}
l > \max\left\{12 \ \log \left( \frac{n}{\epsilon} \right), 2r_1, 2r_2, 2r\right\}. 
\end{eqnarray}

We consider a sampled data $\mathbf{U}=[\mathbf{U}_1|\mathbf{U}_2] \in \mathbb{R}^{500 \times 100000}$, where $\mathbf{U}_1, \mathbf{U}_2 \in \mathbb{R}^{500 \times 50000}$, i.e., $n=500$ and $m_1=m_2=50000$. In Figure \ref{fignewcomp} we plot the bounds given in \eqref{minl2}  for finite completability and compare it with the one in \eqref{minl2org}, as a function of the value $r_1=r_2$, for $r=40$, $r=60$ and $r=100$, with $\epsilon = 0.0001$. Recall that $r_1, r_2 \leq r$ and $r \leq r_1+r_2$. It is seen that our proposed method requires less number of samples per column compared with the method in \cite{charact}. Note that given the large number of columns, i.e., $m=m_1+m_2=10^5$, this leads to significantly less amount  of sampled data. 

Note that the curves are not continuous as we need to apply the ceiling operator to the non-integer numbers in \eqref{minl2} and \eqref{minl2org}. Moreover, note that as both bounds in \eqref{minl2} and \eqref{minl2org} are equal to the maximum of two terms: (i) one is on the order of $\log (n)$ or $\log (n) + \log (r)$, and (ii) one is linear in $r$. Hence, by increasing the value of $r$, eventually it will be a linear function of $r$, as seen in Figure \ref{fig:subfigS4.3}. However, within most applications $r$ is typically small.


\begin{figure}[htbp]
\centering
\subfigure[$r=40$.]{
	\includegraphics[width=5.5cm]{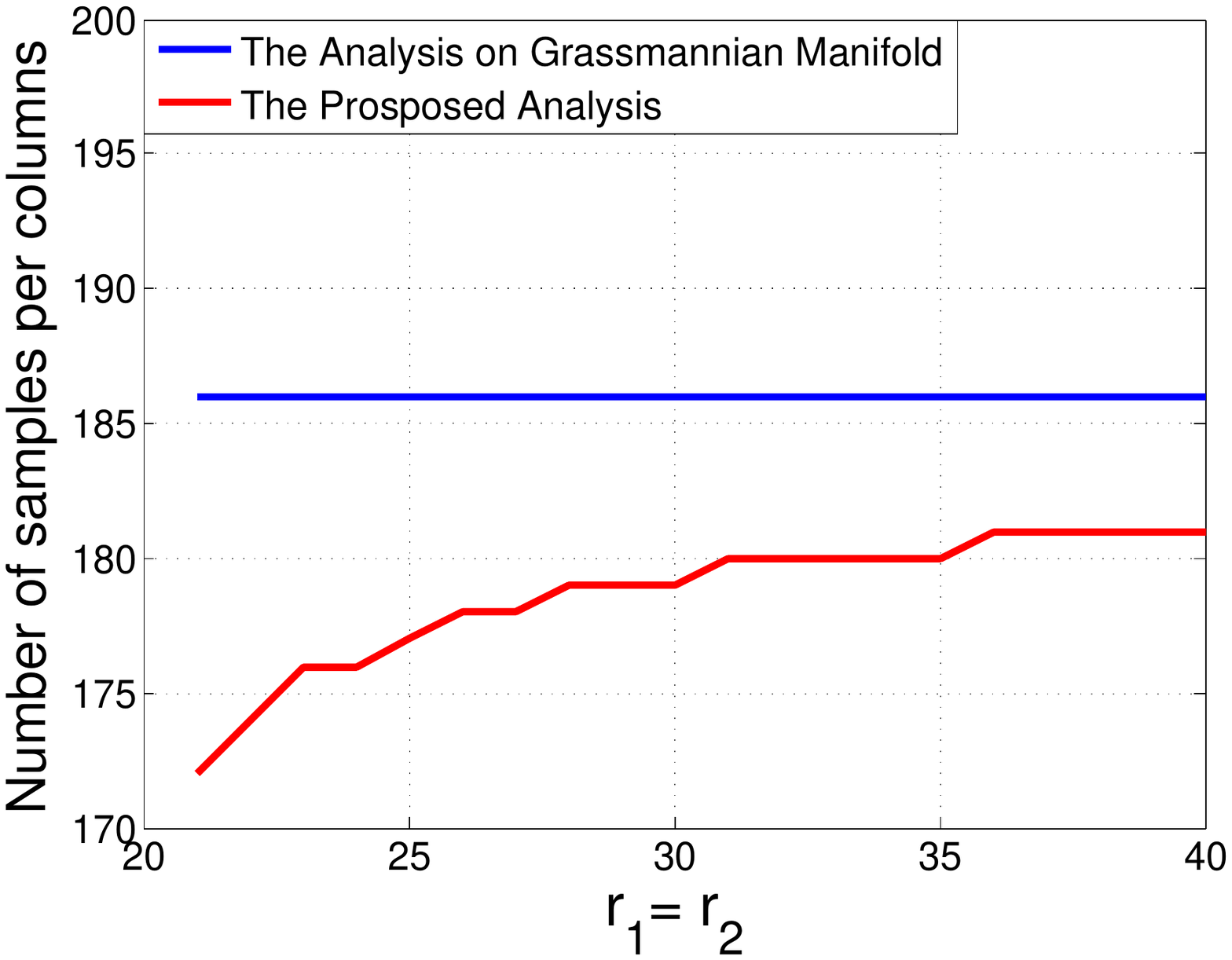}
\label{fig:subfigS4.1}
}
\subfigure[$r=60$.]{
	\includegraphics[width=5.5cm]{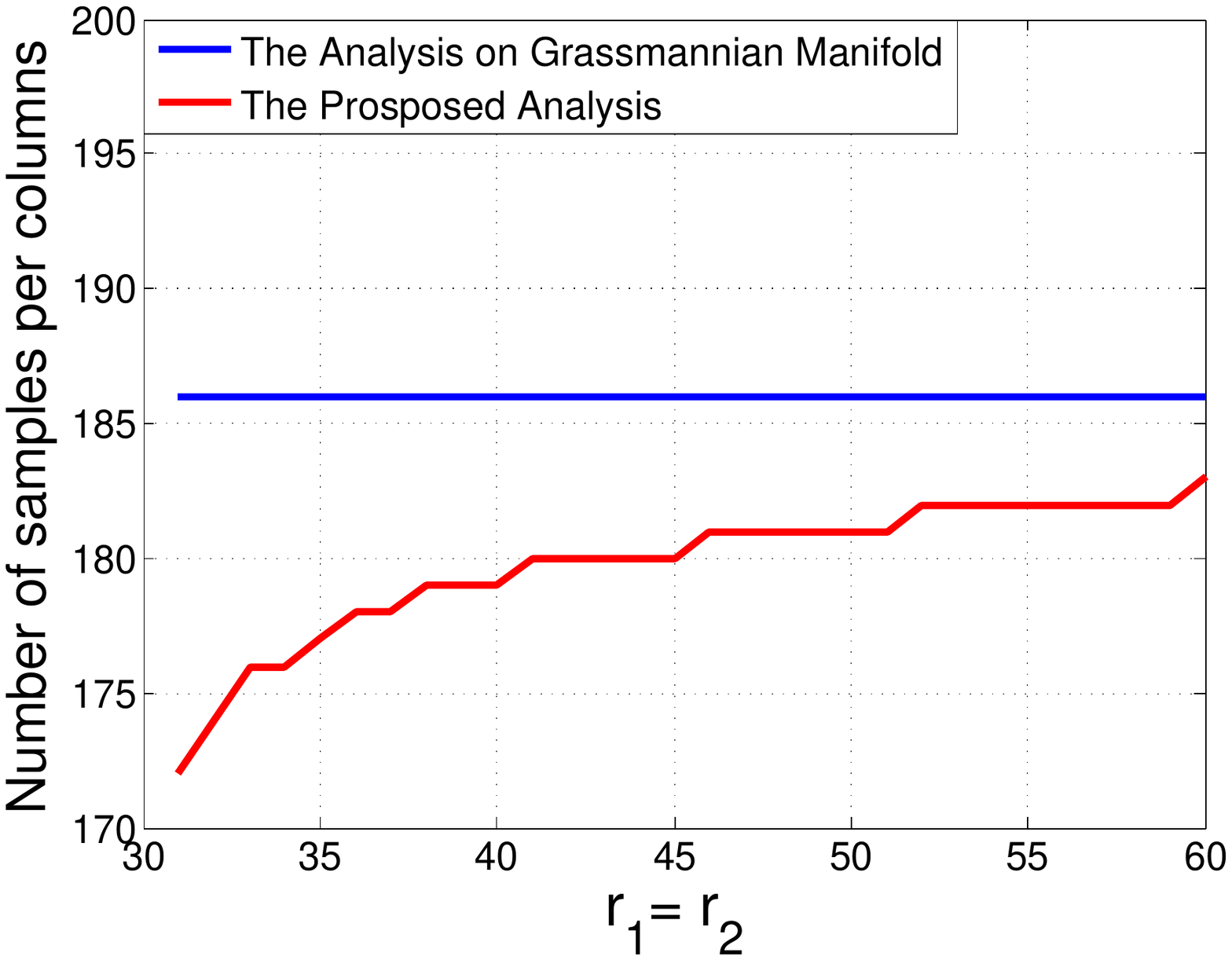}
\label{fig:subfigS4.2}
}
\subfigure[$r=100$.]{
	\includegraphics[width=5.5cm]{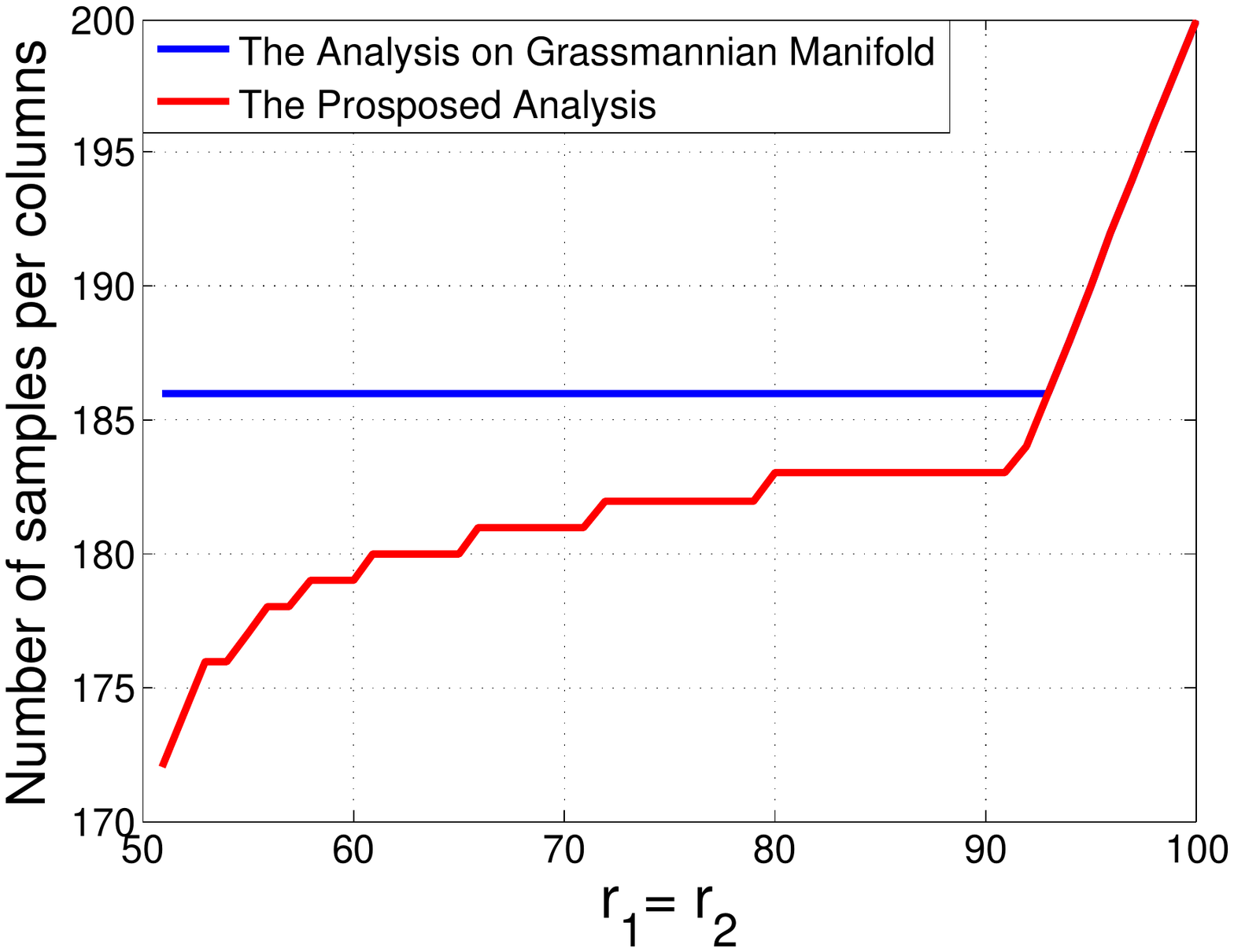}
\label{fig:subfigS4.3}
}
\caption[Optional caption for list of figures]{Lower bounds on the number of samples per column.}
\label{fignewcomp}
\end{figure}

\section{Conclusions}\label{concsec}

This paper characterizes fundamental algorithm-independent conditions on the sampling pattern for finite completability of a low-rank multi-view matrix through an algebraic geometry analysis on the manifold structure of multi-view data. A set of polynomials is defined based on the sample locations and we characterize the number of maximum algebraically independent polynomials. Then, we transform the problem of characterizing the finite or unique completability of the sampled data to the problem of finding the maximum number of algebraically independent polynomials among the defined polynomials. Using these developed tools, we have obtained the following results: (i) The necessary and sufficient conditions on the sampling pattern, under which there are only finite completions given the three rank constraints, (ii) Sufficient conditions on the sampling pattern, under which there exists only one completion given the three rank constraints, (iii) Lower bounds on the number of sampled entries per column that leads to finite/unique completability with high probability. 

\appendices
\section{Proof of Finite Completability for the Example in Section \ref{backg}}\label{app1}

Observe that Assumption $1$ holds, i.e., each column of $\mathbf{U}_1$ includes at least one observed entry and each column of $\mathbf{U}_2$ includes at least two observed entries. According to the definition of the constraint matrix, we have $\mathbf{\breve{\Omega}} = [\mathbf{\breve{\Omega}}_1|\mathbf{\breve{\Omega}}_2]$, where 

$$
\mathbf{\breve{\Omega}}_1 = \left[
\begin{array}{ cc }
1 & 1 \\ 
1 & 0 \\ 
0 & 1 \\ 
0 & 0 
\end{array}
\right], \  \ 
\text{and} \ \  \ 
\mathbf{\breve{\Omega}}_2 = \left[
\begin{array}{ ccc }
1 & 1 & 1 \\ 
1 & 1 & 1 \\ 
1 & 0 & 0 \\ 
0 & 1 & 1 
\end{array}
\right].
$$

Note that $r_1^{\prime} = r - r_2 = 0$, $r_2^{\prime} = r - r_1 = 1$ and $r^{\prime} = r_1 + r_2 - r= 1$. As a result, $nr -r^2 -r_1^2 -r_2^2 + r(r_1 +r_2) = 5$ and $\mathbf{\breve{\Omega}}$ has exactly $5$ columns. Suppose that $\mathbf{\Omega}^{\prime}$ is an arbitrary submatrix of $\mathbf{\Omega}$. In order to show finite completability of $\mathbf{U}$, it suffices to show \eqref{charfinitecom} holds. Let $\mathbf{\Omega}^{\prime}_1$ and $\mathbf{\Omega}^{\prime}_2$ denote the submatrix that consists of columns of $\mathbf{\breve{\Omega}}^{\prime}$ that correspond to the first view and second view, respectively. Note that $\mathbf{\breve{\Omega}}^{\prime }=[\mathbf{\breve{\Omega}}^{\prime }_1|\mathbf{\breve{\Omega}}^{\prime }_2]$. Therefore, we only need to verify 
\begin{eqnarray}\label{examplefinpro}
(g(\mathbf{\breve{\Omega}}^{\prime }_2)-2)^+ +  (g(\mathbf{\breve{\Omega}}^{\prime  })-1)^+ \geq c(\mathbf{\breve{\Omega}}^{\prime }).
\end{eqnarray}

There are $3$ different cases as follows:
\begin{enumerate}
\item $g(\mathbf{\breve{\Omega}}^{\prime }_2) = 0$: In this case, \eqref{examplefinpro} reduces to $(g(\mathbf{\breve{\Omega}}^{\prime  }_1)-1)^+ \geq c(\mathbf{\breve{\Omega}}^{\prime }_1)$. This is easy to verify by checking each sub-case that $\mathbf{\breve{\Omega}}^{\prime }_1$ has one or two columns of $\mathbf{\breve{\Omega}}_1$.
\item $g(\mathbf{\breve{\Omega}}^{\prime }_2) = 3$: In this case, \eqref{examplefinpro} reduces to $1+(g(\mathbf{\breve{\Omega}}^{\prime  })-1)^+ \geq c(\mathbf{\breve{\Omega}}^{\prime })$. We consider the following two sub-cases: 
\begin{itemize}
\item $\mathbf{\breve{\Omega}}^{\prime }_2$ is the first column of $\mathbf{\breve{\Omega}}_2$: Observe that in this case $c(\mathbf{\breve{\Omega}}^{\prime })= c(\mathbf{\breve{\Omega}}^{\prime }_1) +1$, and also we always have $g(\mathbf{\breve{\Omega}}^{\prime  }) \geq g(\mathbf{\breve{\Omega}}^{\prime  }_1)$. Hence, similar to the previous scenario, it suffices to show that $(g(\mathbf{\breve{\Omega}}^{\prime  }_1)-1)^+ \geq c(\mathbf{\breve{\Omega}}^{\prime }_1)$ which is easy to verify.
\item $\mathbf{\breve{\Omega}}^{\prime }_2$ does not include the first column of $\mathbf{\breve{\Omega}}_2$: Note that in this case $c(\mathbf{\breve{\Omega}}^{\prime }) \leq c(\mathbf{\breve{\Omega}}^{\prime }_1) +2$, and therefore it suffices to show that $(g(\mathbf{\breve{\Omega}}^{\prime  })-1)^+ \geq c(\mathbf{\breve{\Omega}}^{\prime }_1)+1$. This is easy to verify by considering the fact that in this case $g(\mathbf{\breve{\Omega}}^{\prime  })=4$ if and only if $\mathbf{\breve{\Omega}}^{\prime }_1$ includes the second column of $\mathbf{\breve{\Omega}}_1$, and $g(\mathbf{\breve{\Omega}}^{\prime  })=3$ otherwise.
\end{itemize}
\item $g(\mathbf{\breve{\Omega}}^{\prime }_2) = 4$: In this case, \eqref{examplefinpro} reduces to $2+(g(\mathbf{\breve{\Omega}}^{\prime  })-1)^+ \geq c(\mathbf{\breve{\Omega}}^{\prime })$. Note that $g(\mathbf{\breve{\Omega}}^{\prime }_2) = 4$ results that $g(\mathbf{\breve{\Omega}}^{\prime }) = 4$, and therefore \eqref{examplefinpro} reduces to $5 \geq c(\mathbf{\breve{\Omega}}^{\prime })$ which clearly always holds.
\end{enumerate}

\bibliographystyle{IEEETran}
\bibliography{bib}

\end{document}